\keywords{bisimulation, apartness, coalgebra, coinduction}
\newcommand{\weg}[1]{}
\theoremstyle{definition}\newtheorem{convention}[thm]{Convention}
\newcommand{\Set}{\mathbf{Set}}
\newcommand{\Rel}{\mathbf{Rel}}
\newcommand{\op}[1]{#1^{\mathrm{op}}}
\newcommand{\fop}[1]{#1^{\mathrm{fop}}}
  \DeclareMathAlphabet{\mathsl}{\encodingdefault}{\rmdefault}{\mddefault}{\sldefault}
  \SetMathAlphabet{\mathsl}{bold}{\encodingdefault}{\rmdefault}{\bfdefault}{\sldefault}
\newcommand{\CoAlg}{\mathsl{CoAlg}}
\newcommand{\rellift}{\mathsl{Rel}}
\newcommand{\foprellift}{\fop{\mathsl{Rel}}}
\newcommand{\set}[2]{\{#1\;|\;#2\}}
\newcommand{\all}[2]{\forall#1.\,#2}
\newcommand{\allin}[3]{\forall#1\in#2.\,#3}
\newcommand{\ex}[2]{\exists#1.\,#2}
\newcommand{\exin}[3]{\exists#1\in#2.\,#3}
\newcommand{\tuple}[1]{\langle#1\rangle}
\newcommand{\idmap}[1][]{\ensuremath{\mathrm{id}_{#1}}}
\newcommand{\after}{\mathrel{\circ}}
\newcommand{\Pow}{\mathcal{P}}
\newcommand{\Ty}{\mathrm{Ty}}
\newcommand{\At}{\mathrm{At}}
\newcommand{\Eq}{\mathsl{Eq}}
\newcommand{\nEq}{\mathsl{nEq}}
\newcommand{\inl}{\mathrm{inl}}
\newcommand{\inr}{\mathrm{inr}}
\newcommand{\pair}[2]{\langle{#1},{#2}\rangle}
\newcommand{\bisim}[2]{\stackrel{#1}{\leftrightarrow}_{#2}}
\newcommand{\apart}[2]{\stackrel{#1}{\#}_{#2}}
\newcommand{\bis}{\mathrel{\underline{\leftrightarrow}}}
\newcommand{\apt}{\mathrel{\underline{\#}}}
\newcommand{\branbis}{\mathrel{\underline{\leftrightarrow}_b}}
\newcommand{\sbranbis}{\mathrel{\underline{\leftrightarrow}_{sb}}}
\newcommand{\rbranbis}{\mathrel{\underline{\leftrightarrow}_{rb}}}
\newcommand{\weakbis}{\mathrel{\underline{\leftrightarrow}_w}}
\newcommand{\branap}{\mathrel{\apt_b}}
\newcommand{\branapA}{\mathrel{\apt^A_b}}
\newcommand{\sbranap}{\mathrel{\apt_{sb}}}
\newcommand{\rbranap}{\mathrel{\apt_{rb}}}
\newcommand{\weakap}{\mathrel{\apt_w}}
\newcommand{\trans}{\rightarrow}
\newcommand{\transa}{\trans_a}
\newcommand{\transx}{\trans_x}
\newcommand{\transu}{\trans_u}
\newcommand{\transt}{\trans_{\tau}}
\newcommand{\transtt}{\twoheadrightarrow_{\tau}}
\newcommand{\Atau}{A_{\tau}}
\newcommand{\symm}{\mathrm{symm}}
\newcommand{\weakbisa}{\mathrm{bis}_{w}}
\newcommand{\weakbist}{\mathrm{bis}_{w\tau}} 
\newcommand{\branbisa}{\mathrm{bis}_{b}}
\newcommand{\branbist}{\mathrm{bis}_{b\tau}} 
\newcommand{\sbranbisa}{\mathrm{bis}_{sb}}
\newcommand{\sbranbist}{\mathrm{bis}_{sb\tau}}
\newcommand{\rbranbisx}{\mathrm{bis}_{rb}}
\newcommand{\weakapa}{\mathrm{in}_{w}}
\newcommand{\weakapt}{\mathrm{in}_{w\tau}} 
\newcommand{\branapa}{\mathrm{in}_{b}}
\newcommand{\rbranapx}{\mathrm{in}_{rb}}
\newcommand{\branapaalt}{\mathrm{in}^A_{b}}
\newcommand{\sbranapt}{\mathrm{in}_{sb\tau}}
\newcommand{\branapt}{\mathrm{in}_{b\tau}} 
\newcommand{\stut}{\mathrm{stut}}
\newcommand{\QEDbox}{\square}
\newcommand{\QED}{\hspace*{\fill}$\QEDbox$}
\newdimen\proofrulebreadth \proofrulebreadth=.05em
\newdimen\proofdotseparation \proofdotseparation=1.25ex
\newdimen\proofrulebaseline \proofrulebaseline=2ex
\let\then\relax
\def\hfi{\hskip0pt plus.0001fil}
\mathchardef\squigto="3A3B
\newif\ifinsideprooftree\insideprooftreefalse
\newif\ifonleftofproofrule\onleftofproofrulefalse
\newif\ifproofdots\proofdotsfalse
\newif\ifdoubleproof\doubleprooffalse
\let\wereinproofbit\relax
\newdimen\shortenproofleft
\newdimen\shortenproofright
\newdimen\proofbelowshift
\newbox\proofabove
\newbox\proofbelow
\newbox\proofrulename
\def\shiftproofbelow{\let\next\relax\afterassignment\setshiftproofbelow\dimen0 
}
\def\shiftproofbelowneg{\def\next{\multiply\dimen0 by-1 }%
\afterassignment\setshiftproofbelow\dimen0 }
\def\setshiftproofbelow{\next\proofbelowshift=\dimen0 }
\def\setproofrulebreadth{\proofrulebreadth}
\def\prooftree{
%
\ifnum  \lastpenalty=1
\then   \unpenalty
\else   \onleftofproofrulefalse
\fi
%
\ifonleftofproofrule
\else   \ifinsideprooftree
        \then   \hskip.5em plus1fil
        \fi
\fi
%
\bgroup
\setbox\proofbelow=\hbox{}\setbox\proofrulename=\hbox{}%
\let\justifies\proofover\let\leadsto\proofoverdots\let\Justifies\proofoverdbl
\let\using\proofusing\let\[\prooftree
\ifinsideprooftree\let\]\endprooftree\fi
\proofdotsfalse\doubleprooffalse
\let\thickness\setproofrulebreadth
\let\shiftright\shiftproofbelow \let\shift\shiftproofbelow
\let\shiftleft\shiftproofbelowneg
\let\ifwasinsideprooftree\ifinsideprooftree
\insideprooftreetrue
%
\setbox\proofabove=\hbox\bgroup$\displaystyle 
\let\wereinproofbit\prooftree
%
\shortenproofleft=0pt \shortenproofright=0pt \proofbelowshift=0pt
%
\onleftofproofruletrue\penalty1
}
\def\eproofbit{
%
\ifx    \wereinproofbit\prooftree
\then   \ifcase \lastpenalty
        \then   \shortenproofright=0pt  
        \or     \unpenalty\hfil         
        \or     \unpenalty\unskip       
        \else   \shortenproofright=0pt  
        \fi
\fi
%
\global\dimen0=\shortenproofleft
\global\dimen1=\shortenproofright
\global\dimen2=\proofrulebreadth
\global\dimen3=\proofbelowshift
\global\dimen4=\proofdotseparation
\global\count255=\proofdotnumber
%
$\egroup  
%
\shortenproofleft=\dimen0
\shortenproofright=\dimen1
\proofrulebreadth=\dimen2
\proofbelowshift=\dimen3
\proofdotseparation=\dimen4
\proofdotnumber=\count255
}
\def\proofover{
\eproofbit 
\setbox\proofbelow=\hbox\bgroup 
\let\wereinproofbit\proofover
$\displaystyle
}%
\def\proofoverdbl{
\eproofbit 
\doubleprooftrue
\setbox\proofbelow=\hbox\bgroup 
\let\wereinproofbit\proofoverdbl
$\displaystyle
}%
\def\proofoverdots{
\eproofbit 
\proofdotstrue
\setbox\proofbelow=\hbox\bgroup 
\let\wereinproofbit\proofoverdots
$\displaystyle
}%
\def\proofusing{
\eproofbit 
\setbox\proofrulename=\hbox\bgroup 
\let\wereinproofbit\proofusing
\kern0.3em$
}
\def\endprooftree{
\eproofbit 
  \dimen5 =0pt
%
\dimen0=\wd\proofabove \advance\dimen0-\shortenproofleft
\advance\dimen0-\shortenproofright
%
\dimen1=.5\dimen0 \advance\dimen1-.5\wd\proofbelow
\dimen4=\dimen1
\advance\dimen1\proofbelowshift \advance\dimen4-\proofbelowshift
%
\ifdim  \dimen1<0pt
\then   \advance\shortenproofleft\dimen1
        \advance\dimen0-\dimen1
        \dimen1=0pt
        \ifdim  \shortenproofleft<0pt
        \then   \setbox\proofabove=\hbox{%
                        \kern-\shortenproofleft\unhbox\proofabove}%
                \shortenproofleft=0pt
        \fi
\fi
%
\ifdim  \dimen4<0pt
\then   \advance\shortenproofright\dimen4
        \advance\dimen0-\dimen4
        \dimen4=0pt
\fi
%
\ifdim  \shortenproofright<\wd\proofrulename
\then   \shortenproofright=\wd\proofrulename
\fi
%
\dimen2=\shortenproofleft \advance\dimen2 by\dimen1
\dimen3=\shortenproofright\advance\dimen3 by\dimen4
%
\ifproofdots
\then
        \dimen6=\shortenproofleft \advance\dimen6 .5\dimen0
        \setbox1=\vbox to\proofdotseparation{\vss\hbox{$\cdot$}\vss}%
        \setbox0=\hbox{%
                \advance\dimen6-.5\wd1
                \kern\dimen6
                $\vcenter to\proofdotnumber\proofdotseparation
                        {\leaders\box1\vfill}$%
                \unhbox\proofrulename}%
\else   \dimen6=\fontdimen22\the\textfont2 
        \dimen7=\dimen6
        \advance\dimen6by.5\proofrulebreadth
        \advance\dimen7by-.5\proofrulebreadth
        \setbox0=\hbox{%
                \kern\shortenproofleft
                \ifdoubleproof
                \then   \hbox to\dimen0{%
                        $\mathsurround0pt\mathord=\mkern-6mu%
                        \cleaders\hbox{$\mkern-2mu=\mkern-2mu$}\hfill
                        \mkern-6mu\mathord=$}%
                \else   \vrule height\dimen6 depth-\dimen7 width\dimen0
                \fi
                \unhbox\proofrulename}%
        \ht0=\dimen6 \dp0=-\dimen7
\fi
%
\let\doll\relax
\ifwasinsideprooftree
\then   \let\VBOX\vbox
\else   \ifmmode\else$\let\doll=$\fi
        \let\VBOX\vcenter
\fi
\VBOX   {\baselineskip\proofrulebaseline \lineskip.2ex
        \expandafter\lineskiplimit\ifproofdots0ex\else-0.6ex\fi
        \hbox   spread\dimen5   {\hfi\unhbox\proofabove\hfi}%
        \hbox{\box0}%
        \hbox   {\kern\dimen2 \box\proofbelow}}\doll%
%
\global\dimen2=\dimen2
\global\dimen3=\dimen3
\egroup 
\ifonleftofproofrule
\then   \shortenproofleft=\dimen2
\fi
\shortenproofright=\dimen3
%
\onleftofproofrulefalse
\ifinsideprooftree
\then   \hskip.5em plus 1fil \penalty2
\fi
}
\newenvironment{myproof}{\begin{trivlist} \item[\hskip \labelsep%
{\bf Proof.}]}{\end{trivlist}}
\begin{document}

\title[Apartness and Bisimulation]{Relating Apartness and Bisimulation}

\author[H.~Geuvers]{Herman Geuvers\rsuper{{a,b}}}	
\address{\lsuper{a}ICIS, Radboud University Nijmegen}
\address{\lsuper{b}Faculty of Mathematics and Computer Science, Technical University Eindhoven}	
\author[B.~Jacobs]{Bart Jacobs\rsuper{a}}	
\email{herman@cs.ru.nl}  
\email{bart@cs.ru.nl}  





\begin{abstract}
A bisimulation for a coalgebra of a functor on the category of sets
can be described via a coalgebra in the category of relations, of a
lifted functor. A final coalgebra then gives rise to the coinduction
principle, which states that two bisimilar elements are equal. For
polynomial functors, this leads to well-known descriptions. In the
present paper we look at the dual notion of
``apartness''. Intuitively, two elements are apart if there is a
positive way to distinguish them. Phrased differently: two elements
are apart if and only if they are not bisimilar. Since apartness is an
inductive notion, described by a least fixed point, we can give a
proof system, to derive that two elements are apart. This proof system
has derivation rules and two elements are apart if and only if there
is a finite derivation (using the rules) of this fact.

We study apartness versus bisimulation in two separate ways. First,
for weak forms of bisimulation on labelled transition systems, where
silent ($\tau$) steps are included, we define an apartness notion that
corresponds to weak bisimulation and another apartness that
corresponds to branching bisimulation. The rules for apartness can be
used to show that two states of a labelled transition system are not
branching bismilar. To support the apartness view on labelled
transition systems, we cast a number of well-known properties of
branching bisimulation in terms of branching apartness and prove
them. Next, we also study the more general categorical situation and
show that indeed, apartness is the dual of bisimilarity in a precise
categorical sense: apartness is an initial algebra and gives rise to
an induction principle. In this analogy, we include the powerset
functor, which gives a semantics to non-deterministic choice in
process-theory.
\end{abstract}

\maketitle

\section{Introduction}\label{S:one}

Bisimulation is a standard way of looking at indistinguishability of
processes, labelled transitions, automata and streams, etc. These
structures all have in common that they can be seen as coalgebraic:
the elements are not built inductively, using constructors, but they
are observed through ``destructors'' or ``transition maps''. The
coinduction principle states that two elements that have the same
observations are equal, when mapped to a ``final'' model. A
bisimulation is a relation that is preserved along transitions: if two
elements are bisimilar, and we perform a transition, then we either
get two new bisimilar elements, or we get equal outputs (in case our
observation is a basic value). Two elements are bisimilar if and only if they are
observationally indistinguishable, that is, if there is a bisimulation
that relates them.

Coalgebraic structures have a natural notion of bisimulation, because
the transfer principle can be defined directly from the type of the
destructor, that is, from the functor involved. So bisimilarity, being
the largest bisimulation is also defined directly from the destructor
(transition operation), and it is well known that if one starts from a
final coalgebra, then bisimilarity on the final coalgebra coincides
with equality. This gives the coinduction principle: bisimilarity
implies equality, see e.g.~\cite{JacobsR11,Rutten00,Jacobs16}.

There is a dual way of looking at this, which has not been explored
much\footnote{One of the authors (BJ) did write an article about
  bisimulation and apartness in 1995, entitled \emph{Bisimulation and
    Apartness in Coalgebraic Specification}; it is available online at
  \href{http://citeseerx.ist.psu.edu/viewdoc/summary?doi=10.1.1.50.4507}{citeseerx.ist.psu.edu/viewdoc/summary?doi=10.1.1.50.4507}
  but was never published. Parts of that article are incorporated in
  the present text, esp.~in Section~\ref{ApartnesSec}.}. Of course,
the concept of observations is well-known, and there is work by Korver
\cite{Korver}, who presents an algorithm that, if two states are not
branching bisimilar, produces a formula in Hennessy-Milner
\cite{HennessyMilner} logic with until operator that distinguishes the
two states. Another work is Chow~\cite{Chow} on testing equivalence of
states in finite state machines and more recent work is by Smetsers et
al.~\cite{SmetsersMoermanJansen}, where an efficient algorithm is
presented for finding a minimal separating sequence for a pair of
in-equivalent states in a finite state machine.

We take this further by developing the basic parts of a theory of
``apartness''. The idea is that two elements are apart if we can make
an observation in finitely many steps that distinguishes these
elements. This idea goes back to Brouwer, in his approach to real
numbers, but here we introduce the notion of an ``apartness relation''
for a coalgebra, again directly from the definition of the type of the
destructor, i.e.~from the functor. Basically, a relation is an
apartness relation if it satisfies the inverse of the transfer
principle for bisimulations. We define two elements to be apart if
they are in all apartness relations. It can be shown that a relation
$Q$ is an apartness relation if and only if its complement $\neg Q$ is
a bisimulation relation. Thereby, two elements are apart if and only
if they are not bisimilar, that is, distinguishable. Aside from
providing a new view on bisimulation, apartness---being an inductive
notion---also provides a proof system: two elements are apart if and
only if there is a (finite, well-founded) derivation of that fact
using the derivation rules. These derivation rules are the rules that
define what an apartness is for that particular coalgebra, so they are
directly derived from the type of the destructor. This again
emphasizes that bisimilarity of two elements has to be proven
coinductively, while ``being apart'' for two elements can be proven by
giving a finite derivation (inductively).

This paper consists of two separate parts, one more concrete and one
more abstract.  The first, concrete part focuses on bisimulation and
apartness for labelled transition systems (LTS). We apply these
notions to the case of weak forms of bisimulation for labelled
transition systems with ``silent steps'', usually referred to as
$\tau$-steps. Silent steps cannot be directly observed, but sometimes
they do have some implicit side-effects as they may move the system
from a state where a certain action is enabled to a state where this
action is impossible. Therefore, several variations have been defined,
like weak bisimulation and branching bisimulation. We study these from
the point of view of apartness, and we define what it means to be a
``weak apartness'' relation and a ``branching apartness''
relation. Two states in a system are ``weakly apart'' if they are in
the intersection of all weak apartness relations and are ``branching
apart'' if they are in the intersection of all branching apartness
relations. The main outcome of this first part is a derivation system
for branching apartness. This is a derivation system in the
traditional (inductive) sense: a judgment holds if there is a finite
derivation (so no infinite or circular derivations)
that has that judgment as its conclusion. To show that the apartness
view on LTSs is fruitful, we use the derivation system for branching
apartness to show that the branching apartness relation is {\em
  co-transitive\/} and satisfies the {\em apartness stuttering
  property}. (These notions will be dealt with in Section
\ref{sec:apartbisim}.) These imply the stuttering property and the
transitivity for branching bisimulation, properties that are known to
be subtle to prove. (See \cite{Basten}.)  We also indicate how the
derivation system can be used as an algorithm for proving branching
apartness of two states in an LTS and we define and discuss the notion
of {\em rooted branching apartness\/} which is the dual of rooted
branching bisimulation.

The second part switches to a more abstract categorical level. It is
restricted however to functors on the category of sets. First, the
standard coalgebraic approach is recalled, in which a bisimulation is
a coalgebra itself, for a lifting of the functor involved to the
category of relations. This can be applied in particular to polynomial
functors and yields familiar descriptions of bisimulation.

Next, apartness is described in an analogous manner. It does not use
the category $\Rel$ of relations, nor its usual opposite $\op{\Rel}$,
but a special ``fibred'' opposite $\fop{\Rel}$. A special lifting of a
functor to $\fop{\Rel}$ is described, via negation as a functor $\neg
\colon \Rel \rightarrow \fop{\Rel}$. An apartness relation is then
defined as a coalgebra of the lifted functor (to $\fop{\Rel}$).  This
set-up then guarantees that a relation $R$ is a bisimulation iff $\neg
R$ is an apartness relation. Moreover, there is an analogue of the
coinduction principle, stating that two states of a coalgebraic system
are apart iff they are non-equal when mapped to the final coalgebra.

A significant conclusion from this analysis is: bisimilarity is the
\emph{greatest} fixed point in a partial order of relations.  But
apartness is the \emph{least} fixed point in that order.  This means
that apartness can be established in a finite number of steps. Hence
it can be described via a system of proof rules. This, in the end, is
the main reason why apartness can be more amenable than bisimulation.

We should emphasize that the two parts of this paper are really
``apart'' since there is no overlap. There is quite a bit of work on
dealing with weak/branching bisimulation in a coalgebraic setting (see
e.g.~\cite{SokolovaVinkWoracek,BeoharKupper,Brengos15,BrengosMP15,GoncharovP14}),
but there is no generic, broadly applicable approach. In this paper we
are not solving this longstanding open problem. We have separate
descriptions of weak/branching apartness (in the first part) and of
categorical apartness (in the second part). The only hope that we can
offer at this stage is that apartness might provide a fresh
perspective on a common approach.

To clarify some terminology and relate the corresponding notions of
the bisimulation view and the apartness view, we give the following
table.

\begin{center}
\begin{tabular}{|l|l|}
  \hline
bisimulation relation     & apartness relation\\
coinductive               & inductive\\
bisimulation equivalence  & proper apartness\\
congruence                & strong extensionality\\
\hline
\end{tabular}
\end{center}

A bisimulation relation models an equality of processes or process
terms, whereas an apartness relation models an inequality, so $R$ will
be a bisimulation (of some type) if and only if $\neg R$ is an
apartness (of that same type). Bisimilarity is the largest
bisimulation relation, which means that it is a coinductively defined
concept. Apartness is the smallest apartness relation, which means
that it is an inductively defined concept. A bisimulation should be
(at least) an equivalence relation, meaning that it satisfies
reflexivity, symmetry and transitivity. The dual notions are
irreflexivity, symmetry and co-transitivity, which together are
usually called ``apartness'' in the literature. To avoid confusion, we
have introduced the terminology ``proper apartness'' for a relation
that satisfies irreflexivity, symmetry and co-transitivity. In process
theory, bisimulation is not an equivalence relation by definition, so
neither is an apartness a ``proper apartness'' by definition. There is
really some work to do, so therefore it is important to single out
these notions. A relation $R$ is a congruence in case it is preserved
by application of operators: if $R(x,y)$, then $R((f(x), f(y))$ for
any operator $f$. The dual notion is strong extensionality, but in the
``apartness view'', this is not a property of the relation but of the
operator. Operator $f$ is strongly extensional (for apartness relation
$Q$) if $Q(f(x),f(y))$ implies $Q(x,y)$. (Intuitively: if $f(x)$ and
$f(y)$ are different, then $x$ and $y$ should be different.)

\subsection{Contents of the sections}
In Section \ref{sec:bisapt}, we introduce bisimulation and apartness
for streams and for deterministic automata, as preparation for more
general/complicated cases. In Section~\ref{sec:weakbran}, we discuss
weak and branching bisimulation and apartness and we indicate the
potential use of reasoning with apartness instead of bisimulation. In
Section~\ref{CoalgLiftSec} we recap the coalgebraic treatment of
bisimulation for coalgebras in the category $\Set$ as a coalgebra in
the category $\Rel$.  In Section~\ref{ApartnesSec} we introduce the
dual case and give a coalgebraic treatment of apartness, as the
opposite of bisimulation. For completeness, we give, in the Appendix,
a syntactic treatment of the general picture of
Section~\ref{sec:bisapt}, where we have a general type of coalgebras
for which we define bisimulation and apartness.

\subsection*{Special Thanks} We dedicate this article to Jos Baeten on
the occasion of his retirement. Much of Jos' research has centered
around process theory and process algebra, where various forms of
bisimulation equivalence have always played a central role. As a math
student, before going to the USA to do a PhD on a topic in the
intersection of recursion theory and set theory, Jos was part of the
Dutch ``school'' on constructive mathematics, and we think that it is
nice to see that `apartness', a notion which originates from
constructive mathematics, also has a natural place in the study of
process (non-)equivalence. The first author in particular would like
to thank Jos for the years he has worked at the Technical University
Eindhoven in the Formal Methods group, led by Jos, the many things he
has learned during this period and the pleasant cooperation on topics
of science, education and organisation. Thanks Jos!


\section{Bisimulation and apartness for streams and deterministic automata}
\label{sec:bisapt}
We start from the coalgebra of streams over an alphabet $A$ and the
coalgebra of DAs (Deterministic Automata) over $A$, for which we
illustrate the notions of bisimulation and apartness. We work in the
category $\Set$ of sets and functions. The coalgebra of streams over
$A$ is given by a function $c =\pair{h}{t} \colon K \rightarrow A
\times K$, where we associate every $s\in K$ with a stream by letting
$h(s) \in A$ denote the head of $s$ and $t(s) \in K$ the tail of $s$.

\begin{defi}\label{def.streams}
Let $A$ be a fixed set/alphabet. A coalgebraic map $\pair{h}{t} : K
\rightarrow A\times K$ gives rise to the following notions of {\em
  bisimulation for $c$\/} and {\em apartness for $c$}.
\begin{enumerate}
\item A relation $R\subseteq K\times K$ is a {\em $c$-bisimulation\/}
  if it satisfies the following rule
  \[\begin{prooftree}
    R(s_1, s_2)
    \justifies
    h(s_1) = h(s_2) \wedge R(t(s_1), t(s_2)) 
  \end{prooftree}\]
Two states $s_1,s_2\in K$ are {\em $c$-bisimilar}, notation
$s_1\bis^c s_2$, is defined by
\[s_1\bis^c s_2 :=\exists R \subseteq K\times K\, (R\mbox{ is a
  $c$-bisimulation and } R(s_1,s_2)).\]
\item A relation $Q\subseteq K\times K$ is a {\em $c$-apartness\/} if
  it satisfies the following rules
  \[\begin{prooftree}
    Q(t(s_1), t(s_2)) 
    \justifies
     Q(s_1, s_2)
  \end{prooftree}
  \qquad  \qquad
  \begin{prooftree}
    h(s_1) \neq h(s_2)
    \justifies
    Q(s_1, s_2) 
  \end{prooftree}
  \]
Two states $s_1,s_2\in K$ are {\em $c$-apart}, notation $s_1\apt^c
s_2$, is defined by
\[s_1\apt^c s_2 := \forall Q
\subseteq K\times K\, (\mbox{if }Q\mbox{ is a $c$-apartness, then } Q(s_1,s_2)).\]
\end{enumerate}
\end{defi}

%

Before we prove some generalities about bisimulation and apartness, we
now first treat the example of deterministic automata, DAs. A DA over
$A$ is given by a set of states, $K$, a transition function $\delta:
K\times A \rightarrow K$ and a function $f: K \rightarrow \{0,1\}$
denoting whether $q\in K$ is a final state or not. We write $2$ for
$\{0,1\}$ and we view, as usual in coalgebra, a DA as a coalgebra $c:
K \rightarrow K^A\times 2$, consisting of two maps $c =
\pair{\delta}{f}$ with $\delta \colon K \rightarrow K^A$ and $f \colon
K\rightarrow 2$. We use the standard notation for automata and write
$q \trans_{a} q'$ if $\delta(q)(a) = q'$ and $q\downarrow$ if $f(q) =
0$.

We now introduce the notions of bisimulation and apartness for
DAs. The first is well-known, the second less so. These notions can be
defined in a canonical way for a large set of functors on $\Set$. This
we will describe categorically in Section~\ref{ApartnesSec}. In the
Appendix, we will give an outline in logical-syntactic terms.

\begin{defi}\label{def.DFA}
Let $A$ be an alphabet and let $K$ be a set of states. A coalgebraic
map $\pair{c_1}{c_2} : K \rightarrow K^A\times 2$ gives rise to the
following notions of {\em bisimulation for $c$\/} and {\em apartness
  for $c$}.
\begin{enumerate}
\item A relation $R\subseteq K\times K$ is a {\em $c$-bisimulation\/}
  if it satisfies the following rule.
  \[\begin{prooftree}
    R(q_1, q_2)
    \justifies
    \forall a\in A\, \forall p_1,p_2 ( q_1 \trans_a p_1\wedge q_2 \trans_a p_2 \implies R(p_1,p_2)) \quad \wedge\quad
    q_1\downarrow  \;\Leftrightarrow\; q_2\downarrow
  \end{prooftree}\]
That two states $q_1,q_2\in K$ are {\em $c$-bisimilar}, notation
$q_1\bis^c q_2$, is defined by
\[q_1\bis^c q_2 \quad:=\quad \exists R \subseteq K\times K\, (R\mbox{
  is a $c$-bisimulation and } R(q_1,q_2)).\]
\item A relation $Q\subseteq K\times K$ is a {\em $c$-apartness\/} if
  it satisfies the following rules.
  \[\begin{prooftree}
    q_1 \trans_a p_1\qquad q_2 \trans_a p_2 \qquad Q(p_1,p_2) 
    \justifies
     Q(q_1, q_2)
  \end{prooftree}
  \qquad \qquad
  \begin{prooftree}
   q_1\downarrow \wedge \neg (q_2\downarrow)
    \justifies
    Q(q_1, q_2) 
  \end{prooftree}
  \qquad \qquad
  \begin{prooftree}
   \neg (q_1\downarrow) \wedge q_2\downarrow
    \justifies
    Q(q_1, q_2) 
  \end{prooftree}
  \]
As usual, rules are ``schematic'' in the free variables that occur in
it, so the left rule represents a separate rule for each $a\in A$.
That two states $q_1,q_2\in K$ are {\em $c$-apart}, notation
$q_1\apt^c q_2$, is defined by
\[q_1\apt^c q_2 \quad :=\quad\forall Q
\subseteq K\times K\, (\mbox{if }Q\mbox{ is a $c$-apartness, then } Q(q_1,q_2)).\]
\end{enumerate}
\end{defi}

In case the coalgebra $c$ is clear from the context, we will ignore
it.  In DAs, two states are bisimilar if and only if they are not
apart, which can easily be observed in the following example.

\begin{exa}
  Consider the DA given to the left below
$$\begin{array}{cc}
\hspace*{-1em}\vcenter{\begin{tikzpicture}[>=stealth,shorten >=1pt,auto,node distance=2.8cm]
  \node[state] (q0)                 {$q_0$};
  \node[state, accepting] (q1) [right of=q0]   {$q_1$};
  \node[state, accepting] (q2) [right of=q1]   {$q_2$};
  \node[state] (q3) [below of=q0]   {$q_3$};

  \path[->]          (q0)  edge                   node {$a$} (q1);
  \path[->]          (q0)  edge   [bend left=20]  node {$b$} (q3);
  \path[->]          (q1)  edge                   node {$a,b$} (q2);
  \path[->]          (q2)  edge   [bend left=20]  node {$a,b$} (q1);
  \path[->]          (q3)  edge   [bend left=20]  node {$a$} (q0);
  \path[->]          (q3)  edge                   node [swap] {$b$} (q1);
  \end{tikzpicture}}\hspace*{-9em}
  &
\hspace*{-9em}\vcenter{\begin{prooftree}
    q_3 \transa q_0 \quad q_0 \transa q_1 \quad
    \begin{prooftree}
      \neg (q_0 \downarrow) \wedge\, q_1\downarrow
      \justifies
      Q(q_0,q_1)
    \end{prooftree}
    \justifies
     Q(q_3,q_0)
  \end{prooftree}}
\end{array}$$
  
\noindent A bisimulation is given by $q_1\sim q_2$. It can be shown that $q_0
\apt q_3$ because for every apartness $Q$ we have the derivation given
on the right.
\end{exa}

We see that ``being $c$-apart'', being the smallest relation
satisfying specific closure properties, is an inductive property. This
implies that the closure properties yield a {\em derivation system\/}
for proving that two elements are $c$-apart. This will be further
explored in the next section. In the example, we are basically using
this: we have proven $q_3 \apt q_0$ by giving a derivation.

A relation $Q$ is usually (e.g.\ see~\cite{TroelstraVanDalenII},
Chapter 8) called an apartness relation if it is irreflexive,
symmetric and co-transitive. As we have already used the terminology
``apartness relation'' for the dual of a bisimulation relation, we
shall, for the present paper, refer to these as ``proper apartness
relations''.

\begin{defi}\label{def:properapart}
  A relation $Q$ is called a {\em proper apartness relation\/} if it is
  \begin{itemize}
  \item {\em irreflexive}: $\forall x\, \neg Q(x,x)$,
  \item {\em symmetric}: $\forall x, y\, (Q(x,y)\implies Q(y,x))$,
    \item {\em co-transitive}: $\forall x, y,z\, (Q(x,y)\implies Q(x,z)\vee Q(z,y))$.
  \end{itemize}
\end{defi}

It is easy to see that inequality on a set is a proper apartness relation. The
following is a standard fact that relates equivalence relations and
proper apartness relations.

\begin{lem} \label{lem.apartnegequiv}
For $R$ a relation, $R$ is an equivalence relation if and only if
$\neg R$ is a proper apartness relation.
\end{lem}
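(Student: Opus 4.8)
The plan is to prove both implications of the biconditional by directly unpacking the definitions of ``equivalence relation'' and ``proper apartness relation'' and checking that each of the three defining clauses of one notion corresponds, under complementation, to a defining clause of the other. Since $\neg R(x,y)$ just means $\neg R(x,y)$, each correspondence is a small piece of propositional/first-order reasoning, so the proof is essentially a routine verification organised around three pairs of clauses.

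\begin{myproof}
Recall that $R$ is an equivalence relation iff it is reflexive, symmetric and transitive, and that (Definition~\ref{def:properapart}) $\neg R$ is a proper apartness relation iff it is irreflexive, symmetric and co-transitive. We show the three clauses match up pairwise.

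\emph{Reflexivity vs.\ irreflexivity.} $R$ is reflexive, i.e.\ $\forall x\, R(x,x)$, iff $\forall x\, \neg\neg R(x,x)$, which is exactly irreflexivity of $\neg R$: $\forall x\, \neg (\neg R)(x,x)$.

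\emph{Symmetry vs.\ symmetry.} $R$ is symmetric, i.e.\ $\forall x,y\,(R(x,y)\implies R(y,x))$, iff (contraposing) $\forall x,y\,(\neg R(y,x)\implies \neg R(x,y))$, iff (renaming the bound variables) $\forall x,y\,((\neg R)(x,y)\implies (\neg R)(y,x))$, which is symmetry of $\neg R$.

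\emph{Transitivity vs.\ co-transitivity.} $R$ is transitive, i.e.\ $\forall x,y,z\,(R(x,z)\wedge R(z,y)\implies R(x,y))$. Taking the contrapositive of the inner implication gives $\forall x,y,z\,(\neg R(x,y)\implies \neg(R(x,z)\wedge R(z,y)))$, i.e.\ $\forall x,y,z\,(\neg R(x,y)\implies \neg R(x,z)\vee \neg R(z,y))$, which is precisely co-transitivity of $\neg R$ (with the roles of the witnesses matching those in Definition~\ref{def:properapart}). Conversely, co-transitivity of $\neg R$ yields transitivity of $R$ by contraposing back.

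Combining the three equivalences, $R$ satisfies all of reflexivity, symmetry and transitivity iff $\neg R$ satisfies all of irreflexivity, symmetry and co-transitivity, i.e.\ $R$ is an equivalence relation iff $\neg R$ is a proper apartness relation. \QED
\end{myproof}

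The only point requiring a little care --- and the closest thing to an obstacle --- is bookkeeping: making sure the contrapositive of transitivity is stated with the intermediate element $z$ in the position matching the $Q(x,z)\vee Q(z,y)$ pattern of co-transitivity, and observing that classical reasoning (double-negation elimination, contraposition) is being used freely, which is appropriate here since the ambient logic of the paper's set-theoretic relations is classical.
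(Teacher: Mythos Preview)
Your proof is correct and follows essentially the same approach as the paper's. The paper is simply terser: it dismisses reflexivity/irreflexivity and symmetry as obvious and only spells out the transitivity $\Leftrightarrow$ co-transitivity equivalence, whereas you verify all three clauses explicitly; your added remark that classical reasoning is used freely is also apt.
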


\begin{proof}
The only interesting property to check is that $R$ is transitive iff
$\neg R$ is co-transitive. If $\neg R(x,y)$ and $R(x,z)$, then $\neg
R(z,y)$ by transitivity of $R$, so we have $\neg R(x,y) \implies \neg
R(x,z) \vee \neg R(z,y))$. The other way around, suppose $R(x,y)$ and
$R(y,z)$ and $\neg R(x,z)$. Then $\neg R(x,y) \vee \neg R(z,y)$ by
co-transitivity of $\neg R$, contradiction, so $R(x,z)$.
\end{proof}

Bisimulation and apartness for DAs and streams can be defined by
induction over the structure of the functor $F:\Set \rightarrow \Set$
that we consider the coalgebra for. In the case of DAs, we have $c: K
\rightarrow F(K)$ with $F(X) = X ^A\times 2$ and for streams, we have
$c: K \rightarrow F(K)$ with $F(X) = A\times X$. The general
definition in category-theoretic terms can be found in Section
\ref{CoalgLiftSec}. A purely logical-syntactic presentation can be
found in the Appendix. 

\begin{lem}\label{lem.bisimandaprt}
  We have the following result relating bisimulation and apartness for
  the case of DAs and streams (but it also applies to the general case
  treated in the Appendix).
  \begin{enumerate}
  \item $R$ is a bisimulation if and only if $\neg R$ is an apartness.
  \item The relation $\bis$ is the union of all bisimulations, $\bis
    \;=\, \bigcup\{R \mid R \mbox{ is a bisimulation}\}$, and it is
    itself a bisimulation.
  \item The relation $\apt$ satisfies $\apt \;=\, \bigcap\{Q \mid Q
    \mbox{ is an apartness relation}\}$, and is thus the intersection
    of all apartness relations; it is itself also an apartness
    relation.
  \item $\mathord{\bis} = \neg \mathord{\apt}$.
  \end{enumerate}
\end{lem}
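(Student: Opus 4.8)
The plan is to prove the four statements of Lemma~\ref{lem.bisimandaprt} roughly in the order given, since each builds on the previous one.

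\textbf{Part (1).} First I would unfold the defining rules. To say $R$ is a bisimulation is to say that, for all states in the relation, a conjunction of ``transfer'' conditions holds (equality of observable outputs, and $R$-relatedness of successors); to say $Q$ is an apartness is to say a collection of implications hold, whose conclusion is $Q(s_1,s_2)$ and whose hypotheses are exactly the \emph{negations} of the conjuncts appearing in the bisimulation rule. So the key step is the purely propositional observation that ``$P \Rightarrow (A_1 \wedge \dots \wedge A_n)$'' is equivalent to the conjunction of ``$\neg A_i \Rightarrow \neg P$'' for $i = 1,\dots,n$, applied with $P := R(s_1,s_2)$ (so $\neg P = \neg R(s_1,s_2)$) and the $A_i$ the transfer conjuncts (e.g.\ $h(s_1)=h(s_2)$, $R(t(s_1),t(s_2))$ in the stream case). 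One has to be slightly careful that the existential/universal quantifiers in the DA bisimulation rule (over $a$, over successors $p_1,p_2$) turn into the schematic quantifiers of the apartness rules correctly, and that $\neg R(t(s_1),t(s_2))$ is literally ``$(\neg R)(t(s_1),t(s_2))$'' — but this is exactly how negation of a relation is defined. This is classical reasoning, matching the excerpt's standing convention.

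\textbf{Part (2).} This is the standard fact that bisimulations are closed under arbitrary unions, hence $\bigcup\{R \mid R \text{ is a bisimulation}\}$ is again a bisimulation, and so by definition of $\bis$ (as ``there exists a bisimulation relating $s_1,s_2$'') it equals this union. The closure-under-union step needs the transfer rule to be ``positive''/monotone in $R$: if $(s_1,s_2) \in \bigcup_i R_i$ then $(s_1,s_2)\in R_j$ for some $j$, the transfer conditions hold for $R_j$, and since each conjunct is either an output-equality (independent of $R$) or of the form $R_j(\dots) \Rightarrow \bigcup_i R_i(\dots)$, it lifts to the union. I would state this monotonicity explicitly.

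\textbf{Part (3).} Dually, apartness relations are closed under arbitrary \emph{intersections}, because each defining rule has a single relational atom $Q(s_1,s_2)$ in the \emph{conclusion} and the hypotheses are either $Q$-free or of the form $Q(p_1,p_2)$ in a \emph{hypothesis} position — so if every $Q_i$ satisfies the rule and all hypotheses hold in $\bigcap_i Q_i$, they hold in each $Q_i$, hence the conclusion holds in each $Q_i$, hence in $\bigcap_i Q_i$. Then $\bigcap\{Q \mid Q \text{ an apartness}\}$ is itself an apartness and is contained in every apartness, so it equals $\apt$ by definition of $\apt$. I expect this to be the main obstacle only in the sense of being careful: one should note that the collection of apartness relations is nonempty (the full relation $K \times K$ is trivially an apartness, since every rule's conclusion is then automatically satisfied), so the intersection is well-defined; and the schematic quantifier over $a \in A$ in the DA case must be handled, but it poses no real difficulty since each instance is a rule of the same shape.

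\textbf{Part (4).} Finally, combine the pieces. By part (1), $\neg$ is a bijection between bisimulations and apartness relations, and it is order-reversing (inclusion of relations corresponds to reverse inclusion of complements), so it sends the largest bisimulation to the smallest apartness: $\neg\bigcup\{R \mid R \text{ bisim}\} = \bigcap\{\neg R \mid R \text{ bisim}\} = \bigcap\{Q \mid Q \text{ apartness}\}$, where the last equality uses that $Q \mapsto \neg Q$ is a bijection from apartness relations onto bisimulations (part (1) again). By parts (2) and (3) the left side is $\neg\bis$ and the right side is $\apt$, so $\neg\mathord{\bis} = \mathord{\apt}$, i.e.\ $\mathord{\bis} = \neg\mathord{\apt}$ since double negation of a (classical) relation is the identity. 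The only subtlety worth flagging is that all of this uses classical logic (law of excluded middle / double-negation elimination), which is consistent with the paper working in $\Set$; constructively the correspondence in part (1) would only go one direction.
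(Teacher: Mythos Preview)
Your proposal is correct and follows essentially the same approach as the paper: part~(1) is handled by contraposition and De~Morgan manipulations on the defining closure conditions, and parts~(2)--(4) follow from closure of bisimulations under union and of apartness relations under intersection. The paper is in fact considerably terser than you are---it only spells out the chain of logical equivalences for part~(1) in the DA case and then dismisses the remaining items with the remark that $R_1 \cup R_2$ is a bisimulation whenever $R_1,R_2$ are, and $Q_1 \cap Q_2$ is an apartness whenever $Q_1,Q_2$ are---so your additional observations (nonemptiness of the class of apartness relations, the order-reversing bijection argument for part~(4)) are welcome elaborations rather than deviations.
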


\begin{proof}
  We show the first in some detail for the case of DAs (Definition
  \ref{def.DFA}). It rests on some simple logical equivalences. That $R$ is a $c$-bisimulation is equivalent to: 
  \begin{eqnarray*}
& \Leftrightarrow&\forall q_1,q_2 ( R(q_1, q_2)
    \implies \forall a\in A\, R(c_1(q_1)(a), c_1(q_2)(a)) \wedge c_2(q_1) = c_2(q_2))\\
& \Leftrightarrow&\forall q_1,q_2 ( \neg \forall a\in A\, R(c_1(q_1)(a), c_1(q_2)(a)) \wedge c_2(q_1) = c_2(q_2)) 
    \implies \neg R(q_1, q_2) \\
& \Leftrightarrow& \forall q_1,q_2 ( \exists a\in A\, \neg R(c_1(q_1)(a), c_1(q_2)(a)) \vee c_2(q_1) \neq c_2(q_2)) \implies  \neg R(q_1, q_2)\\
& \Leftrightarrow& \forall q_1,q_2 ( \exists a\in A\, \neg R(c_1(q_1)(a), c_1(q_2)(a))\! \implies\!  \neg R(q_1, q_2) \wedge (c_2(q_1) \neq c_2(q_2) \! \implies \! \neg R(q_1, q_2))),    
  \end{eqnarray*}
  which states that $\neg R$ is a $c$-apartness.

The other items are easily verified: if $R_1$ and $R_2$ are
bisimulations, then $R_1\cup R_2$ is also a bisimulation, and if $Q_1$
and $Q_2$ are apartness relations, then $Q_1\cap Q_2$ is also an
apartness relation.
\end{proof}

\begin{rem}\label{rem:derivrules}
A relation $R$ is a $c$-bisimulation in case it satisfies a specific {\em closure
  property\/} that is given in Definitions \ref{def.streams},
\ref{def.DFA} via a rule that $R$ should satisfy. Similarly, there is
a closure property that defines when $Q$ is a $c$-apartness (also
given via a rule that $Q$ should satisfy).

To prove that $s$ and $t$ are $c$-bisimilar, we need to find an $R$
that satisfies the rules for $c$-bisimulations such that $R(s,t)$
holds. Dually, to prove that $s$ and $t$ are $c$-apart, we need to
show that $Q(s,t)$ holds for every $Q$ that satisfies the rules for
$c$-apartness. This means that we can use the rules for
being a $c$-apartness as the {\em derivation rules\/} of the proof system for
proving $s\apt^c t$: we have $s\apt^c t$ if and only if there is a finite derivation of $s\apt^c t$ using these rules.

So, for apartness, the rules that define ``$Q$ is a $c$-apartness''
can be used as the derivation rules for proving $s\apt^c t$. This is
obviously not the case for bisimilarity. There the rules just
represent the closure properties that $R$ should satisfy to be a
$c$-bisimulation\footnote{One could think of using the rules for
  bisimulation as ``infinitary proof rules'', where one allows
  infinite derivations of some form, but we will not expand on that
  here.}
\end{rem}

In Sections~\ref{CoalgLiftSec} and~\ref{ApartnesSec} we will give a
more general categorical picture of bisimulation and apartness on
coalgebras.

\subsection{Apartness in constructive mathematics}
The notion of apartness is standard in constructive real analysis and
goes back to Brouwer, with Heyting giving the first axiomatic
treatment in \cite{Heyting}. (See also e.g.~\cite{TroelstraVanDalenII} Chapter 8.)
The observation is that, if one reasons in constructive logic, the
primitive notion for real numbers is apartness: if two real numbers are
apart, this can be positively decided in a finite number of steps,
just by computing better and better approximations until one
positively knows an $\epsilon$-distance between them. Then equality on
real numbers is defined as the negation of apartness: $x =y := \neg(x
\# y)$.

As a matter of fact, one can start from apartness and define equality
using its negation, and then build up the real numbers axiomatically
from there. This is done in \cite{GeuversNiqui}, where an axiomatic
description of real numbers is given and it is shown how Cauchy
sequences over the rationals form a model of that axiomatization, all
in a constructive setting, i.e.\ without using the excluded middle
rule. If one assumes apartness $\#$ to be a proper apartness (as in
our Definition \ref{def:properapart}), the defined equality is an
equivalence relation.

In the setting of the present paper, these constructive issues do not
play a role, because we reason classically.  There is one point to
make, which is the issue of congruence, which has been studied in
depth in the context of process theory
\cite{BaetenBastenReniers,Fokkink}. Then the question is if, in a
theory of terms describing processes, with a notion of bisimilarity
describing a semantic equivalence of the terms as labelled transition
systems, bisimulation is preserved by the operators of the
theory. Simply put: if $q_1 \bis p_1$ and $q_2\bis p_2$, is it the
case that $f(q_1,q_2)\bis f(p_1,p_2)$? In constructive analysis, if
one starts from apartness and defines equality as its negation, the
corresponding notion is {\em strong extensionality}.

\begin{defi}\label{def.strext}
  A function $f: K\times K \rightarrow K$ is {\em strongly
    extensional\/} if
  \[\forall x_1,x_2,y_1,y_2 \in K( f(x_1,x_2) \# f(y_1,y_2)
  \implies x_1 \# y_1 \vee x_2 \# y_2).\]
  A relation $R \subseteq K\times K$ is {\em strongly
    extensional\/} if
  \[\forall x_1,x_2,y_1,y_2 \in K( R(x_1,x_2)
  \implies R(y_1,y_2) \vee x_1 \# y_1 \vee x_2 \# y_2).\]
\end{defi}

It is easily checked that, if one defines an equivalence relation
$\sim$ as the negation of $\#$, then strong extensionality implies
congruence with respect to $\sim$. So, if we wish to deal with process
theories in terms of apartness, we will have to require operations and
relations to be strongly extensional. It turns out that weaker forms
of bisimulation (e.g.\ branching bisimulation) are not congruences,
and therefore one considers {\em rooted branching bisimulation}. In
Section \ref{sec:usingapartness} we will briefly study its complement,
{\em rooted branching apartness\/} and the connection between congruence and strong
extensionality.

\section{Weak and branching bisimulation}
\label{sec:weakbran}
We now apply the techniques that we have seen before to weak and
branching bisimulation. We do not give a categorical treatment,
because the functors proposed for weak~\cite{SokolovaVinkWoracek} and
branching~\cite{BeoharKupper} bisimulation are not so easy to work
with. Instead, we use the definition of ``bisimulation'' (for a
specific type of system) to directly define the notion of
``apartness'' as its negation, and thereby we define a derivation
system for apartness. Then, two states $s$ and $t$ are (weakly,
branching) apart iff they are not (weakly, branching) bisimilar. We
also apply our definitions in a simple example to show how apartness
(and thereby the absence of a bisimulation) can be proved.

We also rephrase some known results about branching bisimulation in
terms of apartness, notably we reprove the stuttering property for
branching bisimulation and the fact that branching bisimulation is an
equivalence relation by rephrasing these results in terms of branching
apartness. In the known proofs of these results, the notion of {\em
  semi-branching bisimulation\/} is used. Here we use a notion of {\em
  semi-branching apartness\/} for similar purposes.  Finally we look
into applications of the derivation system for actually deriving that
two states in an LTS are branching apart (and therefore not branching
bisimilar) and we suggest some new rules, using both apartness and
bisimulation, that may be useful for analyzing algorithms for
branching bisimulation.

The systems we focus on are {\em labelled transition systems},
LTSs. An LTS is a tuple $(X,\Atau, \trans)$, where $X$ is a set of
states, $\Atau = A\cup\{\tau\}$ is a set of actions (containing the
special ``silent action'' $\tau$), and $\trans \;\subseteq X \times
\Atau\times X$ is the {\em transition relation}. We write $q_1 \transu
q_2$ for $(q_1,u,q_2) \in\; \trans$ and we write $\transtt$ to denote
the reflexive transitive closure of $\transt$. So $q_1 \transtt q_2$
if $q_1 \transt \ldots \transt q_2$ in zero or more $\tau$-steps.

\begin{convention}
We will reserve $q_1 \transa q_2$ to denote a transition with an
$a$-step with $a\in A$ (so $a\neq \tau$).
\end{convention}

First we recapitulate the standard definitions of labelled transition
system and weak and branching bisimulation. We do this in a
``rule'' style. The standard definition of $R\subseteq
X\times X$ being a \emph{weak} bisimulation relation is that
we have, for all
$q,p,q' \in X$ and all $a \in A$,
\begin{eqnarray*}
 R(q,p) \wedge q \transt q'&\implies& \exists p'(p\transtt p' \wedge
 R(q', p'))\\ 
 R(q,p) \wedge q \transa q'&\implies& \exists
 p',p'',p'''(p\transtt p' \transa p'' \transtt p'''\wedge R(q',p''')),
\end{eqnarray*}
and also the symmetric variants of these two properties:
\begin{eqnarray*}
 R(p,q) \wedge q \transt q'&\implies& \exists p'(p\transtt p' \wedge
 R(p', q'))\\ 
 R(p,q) \wedge q \transa q'&\implies& \exists
 p',p'',p'''(p\transtt p' \transa p'' \transtt p'''\wedge R(p''',q')),
\end{eqnarray*}
Many rules in the rest of this paper have symmetric variants, like
branching bisimulation above. We will not give these explicitly, but
just refer to them as the ``symmetric variants'' of the rules.

We will rephrase the properties of weak/branching bisimulation
(equivalently) as rules. These look uncommon for
bisimulation, but will turn out to be useful when we look at their
inverse, apartness.

\begin{defi}\label{def.weakbran}
A relation $R\subseteq X\times X$ on a LTS $(X,\Atau, \trans)$ is a
{\em weak bisimulation relation\/} if it the
following two rules and their symmetric variants hold for $R$.
\[
\begin{prooftree}
  q \transt q' \qquad R(q,p)
  \justifies
  \exists p'(p\transtt p' \wedge R(q', p'))
  \using{\weakbist}
\end{prooftree}
\]

\[\begin{prooftree}
q \transa q'\qquad R(q,p)
\justifies
\exists p',p'',p'''(p\transtt p' \transa p'' \transtt p'''\wedge R(q',p'''))
\using{\weakbisa}
\end{prooftree}
\]
The states $q, p$ are {\em weakly bismilar}, notation $q\weakbis p$ if
and only if there exists a weak bisimulation relation $R$ such that
$R(q,p)$.

A relation $R\subseteq X\times X$ is a {\em branching bisimulation
  relation\/} if the
following two rules and their symmetric variants hold for $R$.
\[
\begin{prooftree}
  q \transt q'\qquad R(q,p)
  \justifies
  R(q', p) \vee \exists p',p''(p\transtt p' \transt p'' \wedge R(q,p')\wedge
  R(q',p''))
  \using{\branbist}
\end{prooftree}
\]

\[\begin{prooftree}
  q \transa q' \qquad R(q,p)
  \justifies
  \exists p',p''(p\transtt p' \transa p'' \wedge R(q,p')\wedge
  R(q',p''))
    \using{\branbisa}
\end{prooftree}
\]
The states $q, p$ are {\em branching bisimilar}, notation $q\branbis p$
if and only if there exists a branching bisimulation relation $R$ such
that $R(q,p)$.
\end{defi}

It is well-known that weak bisimulation is really weaker than
branching bisimulation (if $s\branbis t$, then $s\weakbis t$, but in
general not the other way around) and that various efficient
algorithms for checking branching bisimulation exist
(\cite{GrooteVaandrager,Jansenetal}). Here we wish to analyze these
notions by looking at their opposite: {\em weak apartness\/} and {\em
  branching apartness}.

\begin{defi}\label{def.weakapart}
  Given a labelled transition system $(X,\Atau, \trans)$, we say that
  $Q\subseteq X\times X$ is a {\em weak apartness relation\/} in case
  the following rules
  hold for $Q$.
\[
  \begin{prooftree}
    Q(p, q)
    \justifies
    Q(q, p)
    \using{\symm}
  \end{prooftree}
\]

\[
  \begin{prooftree}
    q \transt q' \qquad \forall p'(p\transtt p' \implies Q(q',p'))
    \justifies
    Q(q,p)
    \using{\weakapt}
  \end{prooftree}
\]

\[
  \begin{prooftree}
    q \transa q' \qquad \forall p',p'',p'''(p\transtt p' \transa p'' \transtt p'''\implies Q(q',p'''))
    \justifies
    Q(q,p)
        \using{\weakapa}
  \end{prooftree}
\] 

The states $q$ and $p$ are {\em weakly apart}, notation $q\weakap p$,
if for all weak apartness relations $Q$, we have $Q(q,p)$.

\end{defi}

The relation of ``being weakly apart'' is itself a weak apartness
relation: it is the smallest weak apartness relation, so we have an
inductive definition of ``being weakly apart'', using a derivation
system. We express this explicitly in the following Corollary to the
Definition.

\begin{cor}\label{cor.weakapart}
  Given a labelled transition system $(X,\Atau, \trans)$, and $q,p \in
  X$, we have $q\weakap p$ if and only if this can be derived using
  the following derivation rules.
\[
  \begin{prooftree}
    p \weakap q
    \justifies
    q\weakap p
    \using{\symm}
  \end{prooftree}
\]
  
  \[
  \begin{prooftree}
    q \transt q' \qquad \forall p'(p\transtt p' \implies q' \weakap p')
    \justifies
    q\weakap p
    \using{\weakapt}
  \end{prooftree}
\]

\[
  \begin{prooftree}
    q \transa q' \qquad \forall p',p'',p'''(p\transtt p' \transa p'' \transtt p'''\implies q' \weakap p''')
    \justifies
    q\weakap p
        \using{\weakapa}
  \end{prooftree}
\] 
\end{cor}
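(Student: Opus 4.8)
The plan is to recognize the statement as the standard equivalence, for an inductively defined relation, between ``belongs to the intersection of all relations closed under the defining rules'' and ``has a derivation built from those rules''. Writing $\Phi$ for the monotone operator on $\Pow(X\times X)$ induced by the three rules $\symm$, $\weakapt$, $\weakapa$ of Definition~\ref{def.weakapart} --- so $\Phi(Q)$ collects all pairs produced by a single rule application whose premises are supplied by $Q$ and by the fixed transition relation $\trans$ --- a relation $Q$ is a weak apartness relation exactly when $\Phi(Q)\subseteq Q$. By definition $\weakap$ is the intersection of all such $Q$, hence the least pre-fixpoint, hence (Knaster--Tarski) the least fixpoint of $\Phi$; and the least pre-fixpoint of a rule operator is exactly the set of pairs admitting a well-founded derivation. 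So the two directions to be checked are the two halves of this folklore, which I would nonetheless spell out.

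For ``derivable $\Rightarrow q\weakap p$'' I would first note that $\weakap$ is itself a weak apartness relation, which is immediate from its definition as an intersection: e.g.\ for $\symm$, if $p\weakap q$ then $Q(p,q)$ holds for every weak apartness relation $Q$, and since each such $Q$ is closed under $\symm$ we get $Q(q,p)$ for every $Q$, i.e.\ $q\weakap p$; the rules $\weakapt$ and $\weakapa$ are handled identically, a universal premise such as $\forall p'(p\transtt p'\implies q'\weakap p')$ being transferred to each individual $Q$ before that $Q$'s rule is applied. Since the derivable pairs form the \emph{least} relation closed under the three rules and $\weakap$ is one such relation, every derivable pair lies in $\weakap$; equivalently, one argues by induction on the height of the derivation, using at the root that $\weakap$ satisfies the rule invoked there.

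For the converse I would show that $D:=\{(q,p)\mid q\weakap p\text{ is derivable}\}$ is a weak apartness relation: if the premises of one of the three rules hold with $Q$ replaced by $D$, so that the recursive premises are themselves derivable, then appending that rule instance below the corresponding derivations produces a derivation of the conclusion, hence the conclusion is in $D$; thus $\Phi(D)\subseteq D$. As $\weakap$ is contained in every weak apartness relation, $\weakap\subseteq D$, which is the remaining inclusion.

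The only real obstacle is the meaning of ``derivation'' in the presence of the infinitary premises $\forall p'(\dots)$ and $\forall p',p'',p'''(\dots)$: the derivation trees are well-founded (finite along each branch) but need not be finitely branching, since the set of states $\tau$-reachable from $p$ may be infinite, so genuinely \emph{finite} derivations in the sense of Remark~\ref{rem:derivrules} are guaranteed only under an image-finiteness assumption on the LTS. With the well-founded-tree reading fixed, both inclusions go through verbatim and the rest is routine bookkeeping; the result is, as the name ``Corollary'' suggests, essentially immediate from Definition~\ref{def.weakapart}.
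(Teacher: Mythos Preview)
Your proposal is correct and follows essentially the same reasoning the paper has in mind: the paper does not give an explicit proof of this Corollary at all, treating it as an immediate consequence of Definition~\ref{def.weakapart} via the remark that $\weakap$, being the intersection of all weak apartness relations, is the smallest such relation and hence inductively generated. You have simply spelled out the standard least-fixpoint\,/\,derivability argument in full.

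One point worth noting: your observation about infinitary branching is a genuine refinement of what the paper says. Remark~\ref{rem:weakapt_derivrules} speaks of a \emph{finite} derivation, but as you point out, the universal premises in $\weakapt$ and $\weakapa$ range over the (possibly infinite) set of $\tau$-reachable states, so without an image-finiteness assumption the derivation trees are only well-founded, not finitely branching. The paper does not address this; your reading (well-founded trees) is the one that makes the Corollary true in full generality.
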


\begin{rem}[Also see Remark \ref{rem:derivrules}] \label{rem:weakapt_derivrules}
The notions
of weak bisimulation and weak apartness are defined using closure
properties that a relation should satisfy. As weak apartness is an
inductive notion, the rules that define the closure property for weak
apartness can be used as the derivation rules of a proof system
to derive $q\weakap p$. More precisely: we have $q\weakap p$ if and
only if this can be derived using a finite derivation with the rules
of Corollary \ref{cor.weakapart}.  Again, this is not the case for
weak bisimilarity.
\end{rem}

We now define the notion of {\em branching apartness}.

\begin{defi}\label{def.branapart}  Given a labelled transition system $(X,\Atau, \trans)$, we say that
  $Q\subseteq X\times X$ is a {\em branching apartness\/} in case the following rules hold for $Q$.

\[
  \begin{prooftree}
    Q(p, q)
    \justifies
    Q(q, p)
    \using{\symm}
  \end{prooftree}
\]

\[
  \begin{prooftree}
    q \transt q' \qquad  Q(q',p) \qquad \forall p',p''(p\transtt p' \transt p''\implies Q(q,p')\vee Q(q',p''))
    \justifies
    Q(q,p)
    \using{\branapt}
  \end{prooftree}
\]

\[
  \begin{prooftree}
    q \transa q' \qquad \forall p',p''(p\transtt p' \transa p''\implies Q(q,p')\vee Q(q',p''))
    \justifies
    Q(q,p)
        \using{\branapa}
  \end{prooftree}
\] 

The states $q$ and $p$ are {\em branching apart}, notation $q\branap p$,
if for all branching apartness relations $Q$, we have $Q(q,p)$.

\end{defi}

Again, being branching apart is an inductive definition (it is the
smallest branching apartness relation), so we have a derivation
system. We express this explicitly in the following Corollary to the
Definition, where again Remark \ref{rem:weakapt_derivrules} applies.

\begin{cor}\label{cor.branap}
  Given a labelled transition system $(X,\Atau, \trans)$, and $q,p \in
  X$, we have $q\branap p$ if and only if this can be derived using
  the following derivation rules.
  \[
  \begin{prooftree}
    p \branap q
    \justifies
    q\branap p
    \using{\symm}
  \end{prooftree}
  \]
  
  \[
  \begin{prooftree}
    q \transt q' \qquad  q' \branap p \qquad \forall p',p''(p\transtt p' \transt p''\implies q \branap p'\vee q' \branap p'')
    \justifies
    q \branap p
    \using{\branapt}
  \end{prooftree}
\]

\[
  \begin{prooftree}
    q \transa q' \qquad \forall p',p''(p\transtt p' \transa p''\implies q \branap p'\vee q' \branap p'')
    \justifies
    q \branap p
        \using{\branapa}
  \end{prooftree}
\] 
\end{cor}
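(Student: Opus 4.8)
The plan is to read Definition~\ref{def.branapart} as an inductive definition and to invoke the standard correspondence between the least pre-fixed point of a monotone operator and the set of judgments provable by a (well-founded) derivation, exactly as indicated in Remarks~\ref{rem:derivrules} and~\ref{rem:weakapt_derivrules}. Concretely, the three clauses of Definition~\ref{def.branapart} use the relation $Q$ only positively, so they determine a monotone operator $\Phi$ on the complete lattice of binary relations on $X$, where $\Phi(Q)$ collects all pairs $(q,p)$ occurring as the conclusion of an instance of one of the rules $(\symm)$, $(\branapt)$, $(\branapa)$ whose $Q$-premises hold. Then ``$Q$ is a branching apartness relation'' is literally ``$\Phi(Q)\subseteq Q$''.

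The first step is to note, as in Lemma~\ref{lem.bisimandaprt}(3), that an arbitrary intersection of branching apartness relations is again a branching apartness relation: if $Q=\bigcap_i Q_i$ with each $Q_i$ a branching apartness and the $Q$-premises of some rule instance hold, then they hold a fortiori with $Q_i$ in place of $Q$ (since $Q\subseteq Q_i$), so each $Q_i$ contains the conclusion, hence so does $Q$; the rule-by-rule check is routine and uniform across the three rules. Consequently $\branap=\bigcap\{Q\mid Q\text{ a branching apartness}\}$ is itself a branching apartness relation, and by construction the smallest one, so it is the least (pre-)fixed point of $\Phi$.

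The second step identifies this least fixed point with derivability. Let $D\subseteq X\times X$ be the set of judgments $q\branap p$ possessing a well-founded derivation built from the three rules of the corollary, which are syntactically the rules of Definition~\ref{def.branapart} with $Q(\cdot,\cdot)$ replaced by $\cdot\branap\cdot$. On the one hand $D$ is closed under the rules: given derivations witnessing the premises of a rule instance, prefixing that instance yields a derivation of its conclusion, so $\Phi(D)\subseteq D$ and $D$ is a branching apartness relation; hence $\branap\subseteq D$ by minimality. On the other hand $D\subseteq\branap$: by induction on a derivation of $q\branap p$ one shows $(q,p)\in Q$ for every branching apartness relation $Q$, the inductive step being nothing more than re-reading the matching clause of Definition~\ref{def.branapart} with $Q$ — e.g.\ if the last step is $(\branapa)$ with $q\transa q'$, the induction hypothesis turns the derived premises $q\branap p'$ or $q'\branap p''$ into $Q(q,p')$ or $Q(q',p'')$ for all relevant $p',p''$, which is exactly the premise that makes $Q(q,p)$ hold; $(\symm)$ and $(\branapt)$ are handled identically. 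Intersecting over all such $Q$ gives $D\subseteq\branap$, so $D=\branap$.

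The only genuinely delicate point I expect is the meaning of ``a finite derivation'' once the universally quantified premises of $(\branapt)$ and $(\branapa)$ range over infinitely many pairs $p',p''$, which already happens for LTSs that are not image-finite: then a single rule instance needs one immediate subderivation for each such pair, so a derivation is a well-founded but possibly infinitely branching tree rather than a literally finite object, and the induction in the second step must be well-founded induction on that tree. For image-finite LTSs these premises are finite conjunctions, derivations are genuinely finite, and the argument is entirely elementary; I would state the corollary for the well-founded reading and add a remark to this effect. The same proof applies verbatim to Corollary~\ref{cor.weakapart}.
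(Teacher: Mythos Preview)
Your proposal is correct and in fact more explicit than the paper itself: the paper gives no proof of Corollary~\ref{cor.branap} at all, treating it as an immediate consequence of Definition~\ref{def.branapart} (``We express this explicitly in the following Corollary to the Definition''), in parallel with Corollary~\ref{cor.weakapart} and the general remarks in Remarks~\ref{rem:derivrules} and~\ref{rem:weakapt_derivrules}. What you have written is exactly the standard unfolding of that remark --- Knaster--Tarski plus rule induction --- and is what is implicitly intended.

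Your caveat about well-founded versus literally finite derivations is a genuine technical point that the paper does not address: the paper speaks of ``finite derivations'' throughout, whereas, as you observe, the universally quantified premises in $(\branapt)$ and $(\branapa)$ (and likewise $(\weakapt)$, $(\weakapa)$) may range over infinitely many pairs $p',p''$ when the LTS is not image-finite, so the derivation trees are well-founded but possibly infinitely branching. This does not affect the correctness of the corollary under the well-founded reading you propose, and all later proofs in the paper that proceed ``by induction on the derivation'' go through unchanged; but it is a refinement the paper leaves implicit.
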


\begin{rem}[A note on symmetry]\label{rem:symm}
  In the rules, e.g.\ of Definition \ref{def.branapart} and Corollary
  \ref{cor.branap}, there is a choice of adding symmetry as a rule, or
  adding symmetric variants of the rules. In our presentation, we
  choose to add symmetry as a rule. In the literature on bisimulation,
  it is standard to add symmetric variants of the rules, and then it
  can be shown that the relations themselves are symmetric. To be
  clear, the symmetric variants of the rules
  of Corollary \ref{cor.branap} would be as follows.  
  \[
  \begin{prooftree}
    p \transt p' \qquad  q \branap p' \qquad \forall q',q''(q\transtt q' \transt q''\implies q' \branap p\vee q'' \branap p')
    \justifies
    q \branap p
    \using{\branapt'}
  \end{prooftree}
\]

\[
  \begin{prooftree}
    p \transa p' \qquad \forall q',q''(q\transtt q' \transa q''\implies q' \branap p\vee q'' \branap p')
    \justifies
    q \branap p
        \using{\branapa'}
  \end{prooftree}
\] 
and then one can prove that (without rule ($\symm$)), the relation  $\branap$ is symmetric.

In the following, we will regularly prove properties about an
apartness relation by {\em induction on the derivation\/} and then of
course it matters which rules one has chosen. We found that having
symmetry as a rule, and not a slightly informal ``symmetric
duplication'' of all rules is a bit more clear and concise.  In fact,
for the proofs that are given below, it doesn't really matter what
rules one has chosen: symmetry as a rule, or symmetry ``built in'' by
adding the symmetric variants of the rules. The induction proofs that
follow are mostly symmetric in either side of the apartness sign, with
one notable exception, and that is the stuttering property, Lemma
\ref{lem.apstut}.
\end{rem}

We now show how to use apartness on a few simple well-known
examples. We show how we can {\em derive\/} that two states are
branching apart (i.e.\ not branching bisimilar) by giving a derivation
of this fact using the rules for $\branap$.

\begin{exa}
  We describe two LTSs from~\cite{DeNicolaVaandrager} that serve as
  examples to show the difference between weak and branching
  bisimulation. We apply our apartness definitions to show the
  difference between $\weakap$ and $\branap$.  The LTS on the left
  consists of states $\{s,s_1,s_2,s_3,s_4, r, r_1,r_2, r_3\}$ and the
  point is that $s\branap r$, while $s \weakbis r$. The LTS on the
  right consists of states $\{q, q_1,q_2, q_3,
  q_4,q_5,p,p_1,p_2,p_3,p_4\}$ and the point is that $q\branap p$,
  while $q \weakbis p$.

  \medskip
  
    \begin{tabular}[t]{ccccc}
\begin{tikzpicture}[>=stealth,node distance=1.5cm,auto]
    \node    (s)                       {$s$};
    \node    (s1) [below left of = s]  {$s_1$};
    \node    (s4) [below of = s1]      {$s_4$};
    \node    (s3) [below right of = s] {$s_3$};
    \node    (s2) [below of = s]       {$s_2$};

    \path[->]
        (s)  edge                    node[swap] {$\tau$}  (s1)
             edge                    node {$d$}     (s3)
             edge                    node {$c$}     (s2)
        (s1) edge                    node[swap] {$c$}     (s4);
\end{tikzpicture}
&
\begin{tikzpicture}[>=stealth,node distance=1.5cm,auto]
    \node   (r)                       {$r$};
    \node   (r1) [below left of = r]  {$r_1$};
    \node   (r3) [below of = r1]      {$r_3$};
    \node   (r2) [below of = r] {$r_2$};

    \path[->]
        (r)  edge                    node[swap] {$\tau$}  (r1)
             edge                    node {$d$}     (r2)
        (r1) edge                    node[swap] {$c$}     (r3);
\end{tikzpicture}
&\qquad
\qquad
\qquad&
\begin{tikzpicture}[>=stealth,node distance=1.5cm,auto]
    \node  (q)                        {$q$};
    \node  (q1) [below of = q]        {$q_1$};
    \node  (q2) [below left of = q1]  {$q_2$};
    \node  (q5) [below right of = q]  {$q_5$};
    \node  (q6) [below right of = q5] {$q_6$};
    \node  (q3) [below right of = q1] {$q_3$};
    \node  (q4) [below of = q2]       {$q_4$};

    \path[->]
    (q)  edge                   node[swap] {$c$}       (q1)
         edge                   node {$c$}       (q5)
    (q1) edge                   node[swap] {$\tau$}  (q2)
         edge                   node {$e$}       (q3)
    (q2) edge                   node[swap] {$d$}       (q4)
    (q5) edge                   node {$d$}       (q6);
\end{tikzpicture}
&
\begin{tikzpicture}[>=stealth,node distance=1.5cm,auto]
    \node  (p)                        {$p$};
    \node  (p1) [below of = p]        {$p_1$};
    \node  (p2) [below left of = p1]   {$p_2$};
    \node  (p3) [below right of = p1] {$p_3$};
    \node  (p4) [below of = p2]       {$p_4$};

    \path[->]
    (p)  edge                    node[swap] {$c$}       (p1)
    (p1)  edge                   node[swap] {$\tau$}  (p2)
          edge                   node {$e$}       (p3)
    (p2) edge                    node[swap] {$d$}       (p4);
\end{tikzpicture}

    \end{tabular}
In the LTS on the left, we have $s\branap r_1$, because $s$ can do a
$d$-step, while $r_1$ can not.  Therefore, $s\branap r$, because
$s\trans_c s_2$ and the only possible $c$-step from $r$ is $r\transtt
r_1\trans_c r_3$, and $s\branap r_1$. Given that we now have a derivation system, we can also give a derivation of $s\branap r$:
\[\begin{prooftree}
  s \trans_c s_2 \qquad r \transt r_1 \trans_c r_3 \qquad\qquad
  \begin{prooftree}
    \begin{prooftree}
      s \trans_d s_3
      \justifies
      s \branap r_1
    \end{prooftree}
  \justifies
        s \branap r_1 \vee s_2 \branap r_3
  \end{prooftree}
  \justifies
  s\branap r
\end{prooftree}\]
On the other hand we have $s\weakbis r$. This can be seen by the
weak bisimulation $\sim$ given by the following equivalence classes: $\{s, r\}$, $\{s_1,r_1\}$, $\{s_2,s_4,r_3\}$, $\{s_3,r_2\}$. This is indeed a weak bisimulation following
Definition~\ref{def.weakbran}. A different way to prove $s\weakbis r$
is by showing $\neg s\weakap r$, which can be achieved by proving that
there is no derivation of $s\weakap r$. This is more involved, as we
have to reason about all possible derivations of $s\weakap r$. The
only relevant candidate is below, which fails on finding a derivation
of $s_2\weakap s_3$ (which does not exist).
\[\begin{prooftree}
  s \trans_c s_2 \qquad r \transt r_1 \trans_c r_3 \qquad\qquad
  \begin{prooftree}
  ??
  \justifies
  s_2 \weakap r_3
  \end{prooftree}
  \justifies
  s\weakap r
\end{prooftree}\]

In the LTS on the right, we have $q_5 \branap p_1$, because $q_5$
cannot do an $e$-step. Therefore, $q \branap p$, because $q\trans_c
q_5$ and the only $c$-step from $p$ leads to $p_1$ and $q_5 \branap
p_1$. Also here, we can give a derivation:
\[\begin{prooftree}
  q \trans_c q_5 \qquad p \trans_c p_1 \qquad\qquad
  \begin{prooftree}
    \begin{prooftree}
       p_1 \trans_e p_3
      \justifies
      q_5 \branap p_1
    \end{prooftree}
  \justifies
        q \branap p \vee q_5 \branap p_1
  \end{prooftree}
  \justifies
  q\branap p
\end{prooftree}\]
\end{exa}

The notions of weak, resp.\ branching, apartness and weak,
resp.\ branching, bisimulation relate in the standard way we have seen
before in Section \ref{sec:bisapt}: $R$ is a weak (branching)
apartness if and only if $\neg R$ is a weak (branching) bisimulation.
This also implies that we can transfer properties from
(weak/branching) bisimulation to (weak/branching) apartness and vice
versa. In the next Section, we show how we can use apartness to proved
results about bisimulation. We now summarize the results that relate
bisimulation and apartness in a couple of Lemmas.

\begin{lem}\label{lem.apartinvequiv}
A relation $R$ over an LTS is a weak (resp.\ branching) bisimulation
if and only if $\neg R$ is a weak (resp.\ branching) apartness.
\end{lem}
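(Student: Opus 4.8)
The plan is to imitate the proof of Lemma~\ref{lem.bisimandaprt}(1): read each defining rule as a universally quantified implication, take its contrapositive, and push the negation inward through the quantifiers and connectives, so that each bisimulation rule for $R$ becomes exactly the corresponding apartness rule for $Q := \neg R$. I would treat weak bisimulation explicitly; the branching case is entirely parallel, only with longer rule bodies.

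Unfolding Definition~\ref{def.weakbran}, ``$R$ is a weak bisimulation relation'' says that, for all $q,p,q'$ and all $a\in A$, we have $q\transt q'\wedge R(q,p)\implies\exists p'\,(p\transtt p'\wedge R(q',p'))$, together with the $a$-rule $\weakbisa$ and the symmetric variants of both. Contraposing the $\tau$-rule gives $q\transt q'\wedge\neg\exists p'(p\transtt p'\wedge R(q',p'))\implies\neg R(q,p)$, and since $\neg\exists p'(p\transtt p'\wedge R(q',p'))$ is classically equivalent to $\forall p'(p\transtt p'\implies\neg R(q',p'))$, this is precisely rule $\weakapt$ for $Q=\neg R$. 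The same rewriting applied to $\weakbisa$ — turning $\neg\exists p',p'',p'''(\cdots\wedge R(q',p'''))$ into $\forall p',p'',p'''(\cdots\implies\neg R(q',p'''))$ — yields rule $\weakapa$ for $Q$, and contraposing the two symmetric variants yields the symmetric variants of $\weakapt$ and $\weakapa$. Each step is an equivalence, so $R$ satisfies the weak-bisimulation rules iff $\neg R$ satisfies the weak-apartness transition rules and their symmetric variants. For the branching case the only new wrinkles are cosmetic: the disjunction $Q(q,p')\vee Q(q',p'')$ in $\branapt$ and $\branapa$ is the De Morgan dual of the conjunction $R(q,p')\wedge R(q',p'')$ in $\branbist$ and $\branbisa$, and the extra premise $Q(q',p)$ of $\branapt$ is the negation of the disjunct $R(q',p)$ in the conclusion of $\branbist$; the quantifier manipulations are the same.

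The one point that needs a little care is the treatment of symmetry. A weak (branching) bisimulation relation is closed under each transition rule \emph{and} its symmetric variant, so by the contraposition above $\neg R$ is closed under each apartness transition rule \emph{and} its symmetric variant; this has to be matched against Definitions~\ref{def.weakapart} and~\ref{def.branapart}, which instead present the symmetry rule $(\symm)$ alongside the unprimed transition rules. As recorded in Remark~\ref{rem:symm}, for the relations at issue these two presentations are interchangeable, so the equivalence goes through in either reading; spelling out how the $(\symm)$-presentation lines up with the symmetric-variant presentation is the only bookkeeping step with any content, the remainder being the chain of classical logical equivalences above.
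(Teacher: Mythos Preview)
Your proposal is correct and takes essentially the same approach as the paper: carry out the contraposition-and-De-Morgan manipulations as in Lemma~\ref{lem.bisimandaprt}, and handle the mismatch between the $(\symm)$-presentation of apartness and the symmetric-variant presentation of bisimulation by invoking Remark~\ref{rem:symm}. The paper's own proof is in fact terser than yours---it merely points to Lemma~\ref{lem.bisimandaprt} and to Remark~\ref{rem:symm}---so your explicit unpacking of the $\tau$- and $a$-rules and of the De~Morgan dualities in the branching case is, if anything, more detailed than what the paper supplies.
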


\begin{proof}
The proofs are by some standard logical manipulations, similar to the
proof of Lemma \ref{lem.bisimandaprt}. To simplify the work, it is
easiest to first replace the rule ($\symm$) by the ``symmetric
variants'' of the other rules, as discussed in Remark \ref{rem:symm}.
\end{proof}

We have
$\weakbis \;=\, \bigcup\{R \mid R\mbox{ is a weak bisimulation}\}$ and
similarly for $\branbis$ and it is straightforward to verify that
$\weakbis$ is itself a weak bisimulation (and similarly for
$\branbis$). For apartness we have the same result: $\weakap \;=\,
\bigcap\{Q \mid Q\mbox{ is a weak apartness}\}$, and similarly for
$\branap$. The last part of the Lemma follows from
\begin{eqnarray*}
\neg(q\weakbis p) &\Leftrightarrow& \neg\exists R (R \mbox{ is a
    weak bisimulation} \wedge R(q,p))\\
    &\Leftrightarrow& \forall R (R \mbox{ is a weak bisimulation} \implies
\neg R(q,p))\\
    &\Leftrightarrow& \forall Q (Q \mbox{ is a weak apartness} \implies Q(q,p))\\
&\Leftrightarrow& q\weakap p.
\end{eqnarray*}
This results in the following Lemma.

\begin{lem}
  \label{lem.largestsmallest}
  \begin{enumerate}
  \item $\weakbis$ (resp.\ $\branbis$) is the largest weak (resp.\ branching)
    bisimulation.
  \item $\weakap$ (resp.\ $\branap$) is the smallest weak (resp.\ branching)
    apartness.
  \item $\weakap = \neg\weakbis$ and $\branap = \neg\branbis$.
  \end{enumerate}
\end{lem}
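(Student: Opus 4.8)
The plan is to prove the three items of Lemma~\ref{lem.largestsmallest} by combining the characterizations already set up in the excerpt. For item~(1), I would first observe that the class of weak bisimulations is closed under arbitrary unions: if $\{R_i\}_{i\in I}$ are weak bisimulations, then for the rule $(\weakbist)$, given $q\transt q'$ and $(\bigcup_i R_i)(q,p)$, there is some $i$ with $R_i(q,p)$, hence some $p'$ with $p\transtt p'$ and $R_i(q',p')$, so a fortiori $(\bigcup_i R_i)(q',p')$; the rule $(\weakbisa)$ and the symmetric variants go through identically, and $(\symm)$ is likewise preserved. Therefore $\weakbis$, defined in Definition~\ref{def.weakapart} as the union of all weak bisimulations $R$ with $R(q,p)$ ranging over pairs, is itself a weak bisimulation, and by construction it contains every weak bisimulation, so it is the largest one. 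The argument for $\branbis$ is word-for-word the same, using the rules $(\branbist)$ and $(\branbisa)$; the only point to check is that the disjunctive conclusion of $(\branbist)$ is still satisfied after passing to a union, which is immediate since each disjunct only gets weakened.

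For item~(2), I would argue dually: the class of weak apartness relations is closed under arbitrary intersections. This follows directly from Lemma~\ref{lem.apartinvequiv} ($R$ is a weak apartness iff $\neg R$ is a weak bisimulation), since $\neg\bigcap_i Q_i = \bigcup_i \neg Q_i$ and we just showed unions of weak bisimulations are weak bisimulations; alternatively one checks the closure rules of Definition~\ref{def.weakapart} directly, which is also routine because the premises of rules like $(\weakapt)$ and $(\weakapa)$ are universally quantified and hence monotone in the wrong direction, making intersection the natural closure. Since $\weakap$ is defined as $\{(q,p) \mid \text{for all weak apartness } Q,\ Q(q,p)\}$, i.e.\ as $\bigcap\{Q \mid Q \text{ is a weak apartness}\}$, it is itself a weak apartness and is contained in every weak apartness relation, hence is the smallest. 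Again the branching case is identical, using the rules of Definition~\ref{def.branapart}, with the disjunctive conclusion of $(\branapt)$ causing no trouble under intersection for the same reason as before.

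For item~(3), I would use the chain of equivalences already displayed in the excerpt just before the statement: $\neg(q\weakbis p)$ iff there is no weak bisimulation $R$ with $R(q,p)$, iff every weak bisimulation $R$ satisfies $\neg R(q,p)$, iff every weak apartness $Q$ satisfies $Q(q,p)$ (using Lemma~\ref{lem.apartinvequiv} to pass from "$R$ weak bisimulation, $\neg R(q,p)$" to "$\neg R$ weak apartness, $(\neg R)(q,p)$", and noting that every weak apartness arises as $\neg R$ for $R = \neg Q$ a weak bisimulation), iff $q\weakap p$. Unwinding, this says $\neg\weakbis \;=\; \weakap$ as relations; the branching case is the same chain with $\branbis$ and $\branap$ in place of $\weakbis$ and $\weakap$.

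The only genuine subtlety — and the step I would flag as the place to be careful — is the back-and-forth in item~(3) between "for all weak bisimulations $R$" and "for all weak apartness relations $Q$": one must use that complementation is a bijection between the two classes (Lemma~\ref{lem.apartinvequiv}), so that quantifying over all $R$ and then negating is literally the same as quantifying over all $Q$. Everything else is the monotonicity bookkeeping for unions/intersections, which is entirely routine and which I would present compactly rather than rule-by-rule, remarking (as in Remark~\ref{rem:symm}) that replacing $(\symm)$ by the symmetric variants of the other rules does not affect any of these closure arguments.
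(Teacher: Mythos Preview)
Your proposal is correct and follows essentially the same approach as the paper: the paper states just before the lemma that $\weakbis$ is the union of all weak bisimulations and is ``straightforwardly'' itself one, dually for $\weakap$ as an intersection, and gives exactly the chain of equivalences you reproduce for item~(3). Your write-up simply unpacks the ``straightforward'' closure-under-union and closure-under-intersection checks in more detail than the paper does. One minor slip: $\weakbis$ is defined in Definition~\ref{def.weakbran}, not Definition~\ref{def.weakapart}, and for bisimulations the paper uses the symmetric variants of the rules rather than an explicit $(\symm)$ rule (which is reserved for apartness), but this does not affect your argument.
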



\subsection{Using apartness to prove results about bisimulation}
The first result we prove is that weak apartness is included in
\label{sec:apartbisim}
branching apartness, which implies the well-known result that
branching bisimulation is included in weak bisimulation. The
interesting aspect is that we prove these results by induction (on the
derivation). Then we will prove co-transitivity of branching apartness
(which implies transitivity of branching bisimulation). We introduce
semi-branching apartness as a means to prove a stuttering property and
some other basic properties (for semi-branching apartness), from which
we can conclude that semi-branching and branching apartness are the
same, from which we derive co-transitivity.

\begin{lem}\label{lem.weakapisbisap}
If $s \weakap t$, then $s\branap t$.
\end{lem}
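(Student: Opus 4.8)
The plan is to prove $s \weakap t \implies s \branap t$ by induction on the derivation of $s \weakap t$ using the rules of Corollary~\ref{cor.weakapart}. Equivalently, since $\weakap$ is the smallest weak apartness relation (Lemma~\ref{lem.largestsmallest}), it suffices to show that $\branap$ is itself a weak apartness relation, i.e.\ that $\branap$ satisfies the three closure rules $(\symm)$, $(\weakapt)$, $(\weakapa)$ of Definition~\ref{def.weakapart}; then minimality of $\weakap$ gives $\mathord{\weakap} \subseteq \mathord{\branap}$. I will phrase it as the induction-on-derivation version, which amounts to the same thing.

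The base-style cases are the transition rules, and the symmetry case $(\symm)$ is immediate since $\branap$ is symmetric (it has the $(\symm)$ rule). For the $a$-step rule $(\weakapa)$: assume $q \transa q'$ and that for all $p', p'', p'''$ with $p \transtt p' \transa p'' \transtt p'''$ we have $q' \branap p'''$ (induction hypothesis applied to the premises). I must derive $q \branap p$ using rule $(\branapa)$ of Corollary~\ref{cor.branap}, which requires: for all $p_1, p_2$ with $p \transtt p_1 \transa p_2$, that $q \branap p_1 \vee q' \branap p_2$. Given such $p_1, p_2$, take $p''' := p_2$ (using the trivial $\tau$-path $p_2 \transtt p_2$); then the weak-apartness hypothesis gives $q' \branap p_2$, which is the right disjunct, so we are done. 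The $\tau$-step rule $(\weakapt)$ is handled similarly: assume $q \transt q'$ and for all $p'$ with $p \transtt p'$ we have $q' \branap p'$. To apply $(\branapt)$ I need $q' \branap p$ — which follows from the hypothesis by taking the trivial path $p \transtt p$ — and I need that for all $p_1, p_2$ with $p \transtt p_1 \transt p_2$, $q \branap p_1 \vee q' \branap p_2$; here take $p' := p_2$ in the hypothesis to get the right disjunct $q' \branap p_2$.

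I expect no real obstacle: the point is simply that every witnessing $\tau$-$a$-$\tau$ path (resp.\ $\tau$-path) for weak apartness includes, as a special case, the shorter paths demanded by branching apartness, so the branching-apartness premises are strictly weaker to establish. The one thing to be careful about is bookkeeping around the trivial $\tau$-closure step $p \transtt p$ and making sure the induction hypothesis is applied to exactly the sub-derivations of the premises. As a corollary, $\neg(\mathord{\weakap}) \supseteq \neg(\mathord{\branap})$, i.e.\ $\branbis \subseteq \weakbis$ by Lemma~\ref{lem.largestsmallest}(3), recovering the well-known inclusion of branching bisimulation in weak bisimulation.
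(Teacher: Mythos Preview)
Your proposal is correct and follows essentially the same approach as the paper: induction on the derivation of $s \weakap t$, handling the $(\weakapt)$ and $(\weakapa)$ cases by specializing the stronger weak-apartness hypotheses (via the trivial $\transtt$ path) to obtain the right disjunct needed for $(\branapt)$ and $(\branapa)$, respectively. The paper omits the $(\symm)$ case and the final corollary you spell out, but these are immediate and your treatment of the two main cases matches the paper's exactly.
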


\begin{proof}
By induction on the derivation of $s \weakap t$, where we distinguish
cases according to the last rule.
\begin{itemize}
  \item Last rule is ($\weakapt$)
  \[
  \begin{prooftree}
    q \transt q' \qquad \forall p'(p\transtt p' \implies q' \weakap p')
    \justifies
    q\weakap p
    \using{\weakapt}
  \end{prooftree}
\]
By induction we have $\forall p'(p\transtt p' \implies q' \branap
p')$, which implies $q'\branap p$ and $\forall p', p''(p\transtt
p'\transt p'' \implies q \branap p'\vee q' \branap p'')$, which are
the hypotheses for the rule ($\branapt$), so we conclude $q\branap p$
by the rule ($\branapt$).

\item  Last rule is ($\weakapa$)
\[
  \begin{prooftree}
    q \transa q' \qquad \forall p',p'',p'''(p\transtt p' \transa p'' \transtt p'''\implies q' \weakap p''')
    \justifies
    q\weakap p
        \using{\weakapa}
  \end{prooftree}
\] 
By induction we have $\forall p',p'',p'''(p\transtt p' \transa p''
\transtt p'''\implies q' \branap p''')$, which implies $\forall p',p''(p\transtt p' \transa p''\implies q \branap p'\vee q' \branap p'')$, which is the hypothesis for the rule ($\branapa$),
so we conclude $q\branap p$ by the rule ($\branapa$).
\end{itemize}
\end{proof}

It is well-known from the literature that the relations $\weakbis$ and
$\branbis$ are equivalence relations. For $\weakbis$, the proof is
in~\cite{Milner}.  For $\branbis$, the proof is remarkably subtle, as
it is not the case in general that, if $R_1$ and $R_2$ are branching
bisimulations, then $R_1 \after R_2$ is a branching bisimulation. In
\cite{Basten} the transitivity of $\branbis$ is proven (and thereby
that $\branbis$ is an equivalence relation), using the notion of {\em
  semi-branching bisimulation}.  In~\cite{GlabbeekWeijland,Basten},
semi-branching bisimulation is also used to prove the so called {\em
  stuttering property}.  The results from those papers can also be
cast in terms of apartness, which we will do now.  We prove that
$\branbis$ is an equivalence relation by proving that $\branap$ is a
proper apartness relation and using the fact that $\branbis$ is the
complement of $\branap$. Similarly we prove an {\em apartness
  stuttering property\/} for $\branap$ and conclude the stuttering
property for $\branbis$ from that. It turns out that, for proving
co-transitivity of $\branap$ (and also stuttering) we need a notion of
{\em semi-branching apartness}, which is comparable to the complement
of the notion of semi-branching bisimulation of
~\cite{GlabbeekWeijland,Basten} (but slightly different). We introduce
those notions first.

\begin{defi}\label{def:semibapt}
A relation $Q\subseteq X\times X$ is a {\em semi-branching
  apartness\/} in case the following derivation rules hold for $Q$. (So $\sbranapt$ replaces the rule
$\branapt$.)
\[
  \begin{prooftree}
    Q(p, q)
    \justifies
    Q(q, p)
    \using{\symm}
  \end{prooftree}
\]

\[
  \begin{prooftree}
    q \transt q' \qquad  Q(q',p)
    \qquad \forall p',p''(p\transtt p' \transt p''\implies Q(q',p'') \vee (Q(q,p')\wedge Q(q,p'') ))
    \justifies
    Q(q,p)
    \using{\sbranapt}
  \end{prooftree}
\]

\[
  \begin{prooftree}
    q \transa q' \qquad \forall p',p''(p\transtt p' \transa p''\implies Q(q,p')\vee Q(q',p''))
    \justifies
    Q(q,p)
        \using{\branapa}
  \end{prooftree}
\] 
\noindent The states $q$ and $p$ are {\em semi-branching apart}, notation $q\sbranap p$,
if for all semi-branching apartness relations $Q$, we have $Q(q,p)$.
\end{defi}

So the rules $\symm$ and $\branapa$ are the same as for branching
bisimulation of Definition \ref{def.branapart}, and only the rule for
$\tau$-steps has been modified. Note that in particular, to derive
$Q(q,p)$ from $q\transt q'$, we need to prove $Q(q',p)$ first.

\begin{cor}\label{cor.semibapt}
The states $q$ and $p$ are semi-branching apart, $q\sbranap p$, if
this can be derived from the following rules.
  \[
  \begin{prooftree}
    p \sbranap q
    \justifies
    q\sbranap p
    \using{\symm}
  \end{prooftree}
  \]

  \[
  \begin{prooftree}
    q \transt q' \qquad  q'\sbranap p
    \qquad \forall p',p''(p\transtt p' \transt p''\implies q'\sbranap p'' \vee (q\sbranap p'\wedge q \sbranap p'' ))
    \justifies
    q \sbranap p
    \using{\sbranapt}
  \end{prooftree}
\]

\[
  \begin{prooftree}
    q \transa q' \qquad \forall p',p''(p\transtt p' \transa p''\implies q \sbranap p'\vee q' \sbranap p'')
    \justifies
    q \sbranap p
        \using{\branapa}
  \end{prooftree}
\] 

\end{cor}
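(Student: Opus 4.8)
The plan is to obtain this Corollary from Definition~\ref{def:semibapt} in exactly the way Corollary~\ref{cor.weakapart} was obtained from Definition~\ref{def.weakapart} and Corollary~\ref{cor.branap} from Definition~\ref{def.branapart}: by Definition~\ref{def:semibapt} the relation $\sbranap$ is $\bigcap\{Q \mid Q \text{ is a semi-branching apartness}\}$, so the first task is to show that this intersection is itself a semi-branching apartness relation. Then $\sbranap$ is the \emph{smallest} such relation, the rules defining ``$Q$ is a semi-branching apartness'' become an inductive definition of $\sbranap$, and hence form a derivation system for it --- which is precisely the assertion of the Corollary.

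First I would check that the three rules of Definition~\ref{def:semibapt} are \emph{monotone} in $Q$, i.e.\ that $Q$ occurs only positively in each premise: in ($\symm$) this is immediate, in ($\branapa$) the premise is a universally quantified disjunction of $Q$-atoms, and in ($\sbranapt$) the premises are $Q(q',p)$ together with a universally quantified combination of the $Q$-atoms $Q(q',p'')$, $Q(q,p')$, $Q(q,p'')$ using only $\vee$ and $\wedge$. From this it follows that if $\{Q_i\}_i$ is any family of semi-branching apartness relations and $Q = \bigcap_i Q_i$, then whenever an instance of the premises of a rule holds for $Q$ it also holds for every $Q_i$ (since $Q \subseteq Q_i$ and the premises are monotone in the relation), so the conclusion holds in each $Q_i$ and therefore in $Q$. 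Hence $\sbranap = \bigcap\{Q \mid Q \text{ is a semi-branching apartness}\}$ is again a semi-branching apartness, and it is contained in every other one.

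Next I would invoke the standard equivalence between ``least relation closed under a set of monotone rules'' and ``derivability in the associated proof system''. Let $D$ be the set of pairs $(q,p)$ admitting a well-founded derivation built from the three rules of the Corollary, read as a proof system with the single judgement form $\cdot\sbranap\cdot$. Then $D$ is itself closed under those rules, hence is a semi-branching apartness relation, so $\sbranap \subseteq D$ by minimality. Conversely, any semi-branching apartness relation $Q$ validates all three rules, so a routine induction on the height (ordinal rank) of derivations gives $D \subseteq Q$; applying this with $Q = \sbranap$ (which we just showed is one) yields $D \subseteq \sbranap$. Therefore $D = \sbranap$, as claimed.

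The only subtlety, rather than a genuine obstacle, concerns the word ``derivation'': because the premises of ($\sbranapt$) and ($\branapa$) contain the universal quantifier $\forall p',p''$, a derivation may branch over all matching $\tau$-transition paths $p\transtt p' \transt p''$ (resp.\ $p\transtt p'\transa p''$) and so need not be finitely branching; the induction in the previous paragraph is then an induction on well-founded trees, equivalently on an ordinal rank --- exactly as is already implicit for Corollaries~\ref{cor.weakapart} and~\ref{cor.branap} and as flagged in Remark~\ref{rem:weakapt_derivrules}. When the LTS is finitely branching these derivations are finite in the literal sense. Beyond this point the argument is just an unwinding of the positivity of the rules and carries no further difficulty.
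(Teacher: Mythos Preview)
Your proposal is correct and follows exactly the approach the paper intends: the paper gives no explicit proof of this Corollary, treating it as an immediate consequence of Definition~\ref{def:semibapt} in the same way that Corollaries~\ref{cor.weakapart} and~\ref{cor.branap} follow from their respective definitions. Your argument---monotonicity of the rules, closure of the intersection, and the standard least-fixed-point/derivability equivalence---is precisely the reasoning implicit in the paper's Remarks~\ref{rem:derivrules} and~\ref{rem:weakapt_derivrules}, just spelled out in more detail than the paper bothers with.
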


We also define the dual (complement) notion of a semi-branching
bisimulation relation.

\begin{defi}\label{def:semibis}
A relation $R\subseteq X\times X$ is a {\em semi-branching bisimulation
  relation\/} if the
following two derivation rules and the symmetry rule hold for $R$.
\[
\begin{prooftree}
  q \transt q'\qquad R(q,p)
  \justifies
  R(q',p) \vee \exists p',p''(p\transtt p' \transt p'' \wedge R(q',p'') \wedge (R(q,p')\vee R(q,p'')))
  \using{\sbranbist}
\end{prooftree}
\]

\[\begin{prooftree}
  q \transa q' \qquad R(q,p)
  \justifies
  \exists p',p''(p\transtt p' \transa p'' \wedge R(q,p')\wedge
  R(q',p''))
    \using{\sbranbisa}
\end{prooftree}
\]
The states $q, p$ are {\em semi-branching bisimilar}, notation $q\sbranbis p$
if and only if there exists a semi-branching bisimulation relation $R$ such
that $R(q,p)$.
\end{defi}


It can again be shown that $Q$ is a semi-branching apartness if and
only if $\neg Q$ is a semi-branching bisimulation. Using this and the
fact that $\sbranap$ is the smallest semi-branching apartness and
$\sbranbis$ is the largest semi-branching bisimulation, we obtain that
$q\sbranap p \Leftrightarrow \neg(q\sbranbis p)$.

Our definition of semi-branching bisimulation is slightly different
from the one in \cite{Basten} and \cite{GlabbeekWeijland}, but it can
be shown that they are equivalent.

The rest of this section will be devoted to proving the co-transitivity of $\branap$ (and thereby that $\branbis$ is an equivalence relation) in the following steps.
\begin{enumerate}
\item We prove that $q \sbranap p \implies q\branap p$: Lemma
  \ref{lem.sbranbran}.
\item We prove a number of basic Lemmas for $\sbranap$; typically
  useful results we would also like to have for $\branap$, but we
  can't obtain directly for $\branap$: Lemma \ref{lem.branapI} and Corollary \ref{cor.branapII}
\item We prove the {\em apartness stuttering property\/} for
  $\sbranap$: Lemma \ref{lem.apstut}.
\item We prove that $q \branap p \implies q\sbranap p$, using the
  apartness stuttering property, and we conclude that $\branap \quad=\quad
  \sbranap$: Lemma \ref{lem.bransbran}.
\item We prove co-transitivity for $\branap$, using the basic lemmas mentioned above.
\end{enumerate}
Many of the proofs will proceed by induction on the derivation, where
we use the apartness as an inductively defined relation (defined via
derivation rules). For one of the basic Lemmas under (2) we will move
over to the ``bisimulation view'', as the result seems easier to obtain
there.

\begin{lem}\label{lem.sbranbran}
For all states $q,p$, $q \sbranap p \implies q\branap p$.
\end{lem}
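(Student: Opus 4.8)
The plan is to prove $q \sbranap p \implies q \branap p$ by induction on the derivation of $q \sbranap p$ (using the rules of Corollary \ref{cor.semibapt}), showing in each case that the derived judgment $q \branap p$ follows from the rules of Corollary \ref{cor.branap}. The symmetry case is immediate: if the last rule is $(\symm)$ with premise $p \sbranap q$, the induction hypothesis gives $p \branap q$, and $(\symm)$ for $\branap$ yields $q \branap p$. The case where the last rule is $(\branapa)$ is also trivial, since this rule is literally identical for $\sbranap$ and $\branap$: the premises $q \transa q'$ and $\forall p', p''(p \transtt p' \transa p'' \implies q \sbranap p' \vee q' \sbranap p'')$ turn, via the induction hypothesis applied to each disjunct, into the premises of $(\branapa)$ for $\branap$, and we conclude.

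The only real work is in the case where the last rule is $(\sbranapt)$, with premises $q \transt q'$, $q' \sbranap p$, and $\forall p', p''(p \transtt p' \transt p'' \implies q' \sbranap p'' \vee (q \sbranap p' \wedge q \sbranap p''))$. Applying the induction hypothesis, we obtain $q' \branap p$ and $\forall p', p''(p \transtt p' \transt p'' \implies q' \branap p'' \vee (q \branap p' \wedge q \branap p''))$. To invoke $(\branapt)$ for $\branap$ we need $q \transt q'$ (given), $q' \branap p$ (obtained), and $\forall p', p''(p \transtt p' \transt p'' \implies q \branap p' \vee q' \branap p'')$. So the key step is a purely propositional weakening: from $q' \branap p'' \vee (q \branap p' \wedge q \branap p'')$ we must extract $q \branap p' \vee q' \branap p''$. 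In the first disjunct $q' \branap p''$ gives the right disjunct of the target directly; in the second disjunct $q \branap p'$ gives the left disjunct. Hence the implication holds for every relevant $p', p''$, and $(\branapt)$ yields $q \branap p$.

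The main obstacle — really the only nontrivial point — is recognizing that the semi-branching $\tau$-rule $(\sbranapt)$ has a \emph{stronger} hypothesis (a stronger requirement on the derivation, i.e.\ a logically stronger premise built from $Q$-facts, since $Q(q',p'') \vee (Q(q,p')\wedge Q(q,p''))$ implies $Q(q,p')\vee Q(q',p'')$) and an extra premise $Q(q',p)$ that exactly matches what $(\branapt)$ needs, so that any semi-branching derivation can be ``downgraded'' step by step into a branching derivation. There is no delicate interleaving or measure argument here, unlike the reverse inclusion (Lemma \ref{lem.bransbran}), which genuinely needs the stuttering property; in this direction the structural induction goes through directly. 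One should also double-check that nothing about the rooted/outermost structure of the derivation is disturbed, but since we replace each rule application by a single corresponding $\branap$-rule application of the same shape, well-foundedness of the resulting derivation is preserved.
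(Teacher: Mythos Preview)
Your proof is correct and follows essentially the same approach as the paper: induction on the derivation of $q \sbranap p$, with the only non-trivial case being $(\sbranapt)$, handled by the propositional weakening $q' \branap p'' \vee (q \branap p' \wedge q \branap p'') \implies q \branap p' \vee q' \branap p''$. The paper's proof is slightly terser (it only spells out the $(\sbranapt)$ case), but the argument is identical.
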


\begin{proof}
  By induction on the derivation of $q \sbranap p$. The only interesting case is when the last rule applied is ($\sbranapt$).
\[
  \begin{prooftree}
    q \transt q' \qquad  q'\sbranap p
    \qquad \forall p',p''(p\transtt p' \transt p''\implies q'\sbranap p'' \vee (q\sbranap p'\wedge q \sbranap p'' ))
    \justifies
    q \sbranap p
    \using{\sbranapt}
  \end{prooftree}
\]
We have $q \transt q'$ and by induction hypothesis $q' \branap p$ and
$\forall p',p''(p\transtt p' \transt p''\implies q'\branap p'' \vee
(q\branap p'\wedge q \branap p''))$. To apply rule ($\branapt$) and
conclude $q \branap p$ we need to prove $\forall p',p''(p\transtt p'
\transt p''\implies q \branap p'\vee q' \branap p'')$.
Let $p',p''$ be such that $p\transtt p'
\transt p''$. Form the induction hypothesis we have two cases.
\begin{itemize}
\item Case $q'\branap p''$; then $q \branap p'\vee q' \branap p''$, so done.
\item Case $q\branap p'\wedge q \branap p''$; then $q \branap p'$ and
  so $q \branap p'\vee q' \branap p''$ and done.
\end{itemize}
\end{proof}

We first state two simple derivable rules, that are nevertheless
convenient to make explicit for use in further proofs.

\begin{lem}\label{lem.branapI}
The following two derived rules holds for $\sbranap$.
\begin{enumerate}
\item
  \[
\begin{prooftree}
  p\transtt t \quad q\transt q'\quad   q'\sbranap  p \quad  \forall p',p''(p\transtt p' \transt p'' \implies q' \sbranap p''\vee (q\sbranap p' \wedge q\sbranap p'')
  ) 
  \justifies
q\sbranap t
\end{prooftree}
\]
\item
\[\begin{prooftree}
p\transtt t \qquad  q\transa q'\qquad \forall p',p''(p\transtt p' \transa p'' \implies q\sbranap p'\vee
  q' \sbranap p'')
  \justifies
  q\sbranap t
\end{prooftree}
\]
\end{enumerate}
\end{lem}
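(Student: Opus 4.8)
The plan is to derive both rules directly from the rules defining $\sbranap$ in Corollary~\ref{cor.semibapt}, using only the elementary observation that a $\tau$-path $p \transtt t$ \emph{shrinks} the range of any quantification over $\tau$-successors: whenever $t \transtt u$ we have $p \transtt t \transtt u$, hence $p \transtt u$. Thus every hypothesis of the form ``$\forall p'\,(p \transtt p' \implies \dots)$'', and likewise with an extra $\transt$ or $\transa$ step appended, entails the corresponding statement with $t$ in place of $p$. I do not expect to need induction.

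For item~(2) I would apply rule~$(\branapa)$ with $t$ in the role of the right-hand state. Its first premise $q \transa q'$ is given. For its universal premise, take $p',p''$ with $t \transtt p' \transa p''$; then $p \transtt p'$ by the observation above, so $p \transtt p' \transa p''$, and the hypothesis of the derived rule yields $q \sbranap p' \vee q' \sbranap p''$. Rule~$(\branapa)$ then gives $q \sbranap t$, as required.

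For item~(1) I would distinguish whether the $\tau$-path $p \transtt t$ is empty. If $p = t$, the four hypotheses of the derived rule are verbatim the premises of rule~$(\sbranapt)$ with right-hand state $p$, so $q \sbranap p$, i.e.\ $q \sbranap t$. Otherwise write $p \transtt \hat t \transt t$, splitting off the last $\tau$-step. Instantiating the hypothesis $\forall p',p''\,(p\transtt p' \transt p'' \implies q' \sbranap p'' \vee (q\sbranap p'\wedge q\sbranap p''))$ at $p' := \hat t$ and $p'' := t$ gives $q' \sbranap t \vee (q \sbranap \hat t \wedge q \sbranap t)$. In the right disjunct we already have $q \sbranap t$. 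In the left disjunct, $q' \sbranap t$ holds, so I would apply rule~$(\sbranapt)$ with right-hand state $t$: the premise $q \transt q'$ is given, $q' \sbranap t$ has just been obtained, and the remaining premise $\forall p',p''\,(t\transtt p' \transt p'' \implies q'\sbranap p'' \vee (q\sbranap p'\wedge q \sbranap p''))$ follows from the hypothesis of the derived rule exactly as in item~(2). Hence $q \sbranap t$.

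There is no genuine obstacle: both items collapse to a single application of $(\branapa)$, resp.\ $(\sbranapt)$, once one notices that $p \transtt t$ only restricts the universal premises and that, in item~(1), instantiating the $\tau$-premise at the final step already hands us either $q'\sbranap t$ or $q\sbranap t$ outright. The one point needing care is the degenerate case $p = t$ in item~(1), where there is no ``final $\tau$-step'' and one must fall back on rule~$(\sbranapt)$ as stated.
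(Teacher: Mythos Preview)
Your proposal is correct and follows essentially the same approach as the paper: for item~(1) you handle the empty case $p=t$ by a direct application of $(\sbranapt)$, then in the non-empty case split off the last $\tau$-step, instantiate the universal hypothesis there to obtain the disjunction $q'\sbranap t \vee (q\sbranap \hat t \wedge q\sbranap t)$, and in the left disjunct apply $(\sbranapt)$ at $t$ using transitivity of $\transtt$ to carry the universal premise over---exactly as the paper does; item~(2) is the analogous but simpler argument with $(\branapa)$, which the paper only sketches.
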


\begin{proof}
For the proof of (1), assume (a) $p\transtt t$, (b) $q\transt q'$, (c)
$q'\sbranap p$ and (d) $\forall p',p''(p\transtt p' \transt p''
\implies q' \sbranap p''\vee (q\sbranap p' \wedge q\sbranap
p'')$. Then by rule ($\sbranapt$), we find $q\sbranap p$, so we may
assume that (a) $p\transtt t$ is non-empty and we have (e) $p \transtt p'
\transt t$.

We use (e) in (d), taking $t$ for $p''$ and find that $q' \sbranap
t$ or $q\sbranap p' \wedge q\sbranap t$. In the latter case  we have $q\sbranap t$ and we are done. In case $q' \sbranap t$, to prove $q\sbranap t$, we apply rule
($\sbranapt$). We need to show that $\forall t',t''(t\transtt t'
\transt t'' \implies q' \sbranap t''\vee (q\sbranap t' \wedge
q\sbranap t'')$, which follows from $p \transtt p' \transt t$ and (d).

The proof of (2) is similar, but slightly simpler.

\end{proof}

\begin{lem}\label{lem.branbisII}
The following two derived rules holds for $\sbranbis$ (and as a matter of
fact they hold for any semi-branching bisimulation relation).
\begin{enumerate}
  \weg{
  \item
  \[
\begin{prooftree}
  q \transtt q'\transt q''\qquad q\sbranbis p
  \justifies
  q''\sbranbis  p \vee \exists p',p''(p\transtt p' \transt p'' \wedge q'\sbranbis p'\wedge
  q'' \sbranbis p'')
\end{prooftree}
\]}
  \item
  \[
\begin{prooftree}
  q_0 \transtt q\transt q'\qquad q_0\sbranbis p
  \justifies
  q'\sbranbis  p \vee \exists p',p''(p\transtt p' \transt p'' \wedge q'\sbranbis p''\wedge
  (q \sbranbis p' \vee q \sbranbis p''))
\end{prooftree}
\]

\item
\[\begin{prooftree}
  q_0 \transtt q\transa q'\qquad q_0\sbranbis p
  \justifies
\exists p',p''(p\transtt p' \transa p'' \wedge q\sbranbis p'\wedge
  q' \sbranbis p'')
\end{prooftree}
\]
\end{enumerate}
\end{lem}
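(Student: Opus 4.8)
The plan is to prove item~1 by induction on the length $n$ of the $\tau$-sequence $q_0 \transtt q$, consuming the \emph{last} $\tau$-step of this sequence rather than the first; item~2 then follows from item~1 plus one extra application of ($\sbranbisa$). Throughout I write $R$ for the relation in question; the argument uses only that $R$ satisfies the defining rules ($\sbranbist$) and ($\sbranbisa$), so it indeed applies to $\sbranbis$ and to any semi-branching bisimulation relation.

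For item~1, the base case $n=0$ has $q_0 = q$, so the claim is exactly the conclusion of rule ($\sbranbist$) applied to $q \transt q'$ and $R(q,p)$. For the inductive step, decompose the sequence as $q_0 \transtt q_{n-1} \transt q$ with $q_0 \transtt q_{n-1}$ of length $n-1$, and apply the induction hypothesis to this shorter sequence (extended by $q_{n-1}\transt q$) together with $R(q_0,p)$. This gives either $R(q,p)$, or states $p_1,p_2$ with $p\transtt p_1 \transt p_2$ and $R(q,p_2)$ (the extra disjunct involving $q_{n-1}$ that the hypothesis also supplies is not needed). In the first case, applying ($\sbranbist$) to $q \transt q'$ and $R(q,p)$ produces precisely the required conclusion. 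In the second case, apply ($\sbranbist$) to $q \transt q'$ and $R(q,p_2)$; its output is either $R(q',p_2)$ --- in which case $p_1,p_2$ witness the existential, with the disjunct $R(q,p_1)\vee R(q,p_2)$ discharged by $R(q,p_2)$ --- or states $p_2',p_2''$ with $p_2 \transtt p_2' \transt p_2''$, $R(q',p_2'')$ and $R(q,p_2')\vee R(q,p_2'')$, in which case these same $p_2',p_2''$ work once we observe that $p \transtt p_1 \transt p_2 \transtt p_2'$ collapses, by transitivity of $\transtt$, to $p \transtt p_2'$.

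For item~2, if $q_0 = q$ apply ($\sbranbisa$) to $q \transa q'$ and $R(q,p)$ directly; otherwise decompose $q_0 \transtt q_{n-1} \transt q$ and use item~1 on this sequence (with its final $\tau$-step $q_{n-1}\transt q$) and $R(q_0,p)$ to get either $R(q,p)$ or states $p_1,p_2$ with $p \transtt p_1 \transt p_2$ and $R(q,p_2)$. In either case one application of ($\sbranbisa$) to the $a$-step out of $q$ --- against $p$, respectively against $p_2$ --- followed by absorbing the $\tau$-prefix $p \transtt p_1 \transt p_2 \transtt \cdot$ into the resulting $\tau$-path on the $p$-side, yields the claim.

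The only genuine obstacle is choosing the right induction: one must peel off the \emph{last} $\tau$-step of $q_0 \transtt q$, so that after the induction hypothesis one is positioned exactly at $q$ --- the state that still carries the pending $\transt q'$ (or $\transa q'$) step --- and can finish with a single further application of a defining rule. Peeling off the first step instead leaves the ``$R(q,p')\vee R(q,p'')$'' disjunct to be witnessed while one only knows facts of the form $R(q_0,\cdot)$ or $R(q_1,\cdot)$ about the candidate witnesses, and the argument does not close. Beyond this choice of induction, everything is routine bookkeeping with $\transtt$-paths and the semi-branching rules.
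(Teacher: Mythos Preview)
Your proof is correct and follows essentially the same approach as the paper: both induct on the length of $q_0 \transtt q$, peel off the \emph{last} $\tau$-step to obtain (via the induction hypothesis) either $R(q,p)$ or $R(q,p_2)$ for some $p_2$ reachable from $p$ by $\tau$-steps, and then apply the relevant defining rule once more from $q$. Your organization of item~2 as a corollary of item~1 (rather than a separate induction) and your explicit remark on why peeling from the front fails are minor presentational additions, not a different argument.
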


\begin{proof}
Assuming $q_0 \transtt q'$ has the shape $q_0 \transt q_1 \ldots
\transt q_n=q$, the proof proceeds by induction on $n$. We only treat (1), because (2) is similar (but slightly simpler).
  \begin{itemize}
\item ($n=0$) We need to show that the following holds
  \[
  \begin{prooftree}
  q_0 \transt q'\qquad q_0\sbranbis p
  \justifies
  q'\sbranbis  p \vee \exists p',p''(p\transtt p' \transt p'' \wedge q' \sbranbis p'' \wedge(q_0\sbranbis p'\vee q_0\sbranbis p'')
  )
\end{prooftree}
  \]
  which is immediate by ($\sbranbist$)
\item ($n>0$) We need to show that the following holds
  \[
  \begin{prooftree}
  q_0 \transtt q_1 \transt q\transt q'\qquad q_0\sbranbis p
  \justifies
  q'\sbranbis  p \vee \exists p',p''(p\transtt p' \transt p'' \wedge q' \sbranbis p'' \wedge(q\sbranbis p'\vee q\sbranbis p''))
\end{prooftree}
  \]
  By the induction hypothesis, we find
  $$  q\sbranbis  p \vee \exists p',p''(p\transtt p' \transt p'' \wedge q \sbranbis p'' \wedge(q_1\sbranbis p'\vee q_1\sbranbis p'')).$$
\begin{itemize}
\item  In case $q\sbranbis p$, we apply rule ($\sbranbist$) and conclude
  $q'\sbranbis p \vee \exists p',p''(p\transtt p' \transt p'' \wedge
  q' \sbranbis p'' \wedge(q\sbranbis p'\vee q\sbranbis p''))$ and we are done.
\item In the other case, consider the $p', p''$ for which  $p\transtt p' \transt p''$, $q \sbranbis p''$ and $q_1\sbranbis p'\vee q_1\sbranbis p''$. We have $q \transt q'$ and $q \sbranbis p''$, so by rule ($\sbranbist$) we derive  $q'\sbranbis p'' \vee \exists p_1,p_2(p''\transtt p_1 \transt p_2 \wedge
  q' \sbranbis p_2 \wedge(q\sbranbis p_1\vee q\sbranbis p_2))$.
\begin{itemize}
\item In case $q'\sbranbis p''$, we have $q'\sbranbis p'' \wedge
  (q\sbranbis p'\vee q\sbranbis p'')$ and we are done.
\item In case $\exists p_1,p_2(p''\transtt p_1 \transt p_2 \wedge q'
  \sbranbis p_2 \wedge(q\sbranbis p_1\vee q\sbranbis p_2))$ we also
  have $\exists p_1,p_2(p\transtt p_1 \transt p_2 \wedge q' \sbranbis
  p_2 \wedge(q\sbranbis p_1\vee q\sbranbis p_2))$ and we are done.
\end{itemize}
\end{itemize}
  \end{itemize}

\end{proof}

\begin{cor}\label{cor.branapII}
The following two derived rules holds for $\sbranap$.
\begin{enumerate}
\item
  \[
\begin{prooftree}
  q_0 \transtt q\transt q'\qquad   q'\sbranap  p \qquad  \forall p',p''(p\transtt p' \transt p'' \implies q' \sbranap p''\vee (q\sbranap p' \wedge q\sbranap p'')
  ) 
  \justifies
q_0\sbranap p
\end{prooftree}
\]
\item
\[\begin{prooftree}
  q_0 \transtt q\transa q'\qquad \forall p',p''(p\transtt p' \transa p'' \implies q\sbranap p'\vee
  q' \sbranap p'')
  \justifies
  q_0\sbranap p
\end{prooftree}
\]
\end{enumerate}
\end{cor}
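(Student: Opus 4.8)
The plan is to read both derived rules off Lemma~\ref{lem.branbisII} by negation rather than running a fresh induction, exploiting that semi-branching apartness and semi-branching bisimulation are exact complements. Recall from the discussion following Definition~\ref{def:semibis} that $q\sbranap p \Leftrightarrow \neg(q\sbranbis p)$ and that $\sbranbis$ is itself a semi-branching bisimulation relation, so the two derived rules of Lemma~\ref{lem.branbisII} apply to $\sbranbis$. Each of those rules is an implication whose antecedent is a conjunction of transition premises (namely $q_0\transtt q\transt q'$, resp.\ $q_0\transtt q\transa q'$) together with the single conjunct $q_0\sbranbis p$, this being the only occurrence of $\sbranbis$ in the antecedent. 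Since we reason classically, an implication $A\wedge B\Rightarrow C$ is equivalent to $A\wedge\neg C\Rightarrow\neg B$; applying this with $A$ the transition premises, $B$ the conjunct $q_0\sbranbis p$, and $C$ the succedent, we obtain a rule whose premises are the unchanged transition premises together with $\neg C$, and whose conclusion is $\neg(q_0\sbranbis p)$, i.e.\ $q_0\sbranap p$.

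It then remains to compute $\neg C$ for each of the two succedents, which is pure De~Morgan plus double-negation elimination. For the $\tau$-rule, negating
\[
  q'\sbranbis p \;\vee\; \exists p',p''\bigl(p\transtt p' \transt p'' \wedge q'\sbranbis p'' \wedge (q\sbranbis p' \vee q\sbranbis p'')\bigr)
\]
gives
\[
  q'\sbranap p \;\wedge\; \forall p',p''\bigl(p\transtt p' \transt p'' \implies q'\sbranap p'' \vee (q\sbranap p' \wedge q\sbranap p'')\bigr),
\]
which is exactly the conjunction of the two remaining premises of rule~(1). For the $a$-rule, negating $\exists p',p''\bigl(p\transtt p' \transa p'' \wedge q\sbranbis p' \wedge q'\sbranbis p''\bigr)$ gives $\forall p',p''\bigl(p\transtt p' \transa p'' \implies q\sbranap p' \vee q'\sbranap p''\bigr)$, which is the remaining premise of rule~(2). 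Combining with the previous paragraph yields precisely the two claimed derived rules.

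There is no genuinely hard step here; the argument is quantifier pushing, sound because the whole development is classical. The only thing to be careful about is bookkeeping: that the transition prefix $q_0\transtt q$ (of length $\ge 0$) followed by the single step $q\transt q'$ (resp.\ $q\transa q'$) is carried over verbatim as a premise, and that it is the lone conjunct $q_0\sbranbis p$ in the antecedent of Lemma~\ref{lem.branbisII} that flips to become the conclusion $q_0\sbranap p$, while $q$, $q'$ and $p$ keep their roles. If a self-contained argument were preferred, one could instead mimic the inductive proof of Lemma~\ref{lem.branbisII} directly, inducting on the length of $q_0\transtt q$ and applying rule~($\sbranapt$) together with Lemma~\ref{lem.branapI}; but this merely re-derives the same logical content.
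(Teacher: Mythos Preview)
Your proof is correct and follows precisely the paper's approach: the paper's proof reads in full ``Immediately from Lemma~\ref{lem.branbisII} by taking the complement,'' and you have simply spelled out what that complementation entails (the contrapositive maneuver and the De~Morgan calculations). Your write-up is more detailed than the paper's one-line proof, but the method is identical.
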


\begin{proof}
Immediately from Lemma \ref{lem.branbisII} by taking the complement. 
\end{proof}

In the literature on branching bisimulation, the ``stuttering
property'' refers to the following property for a relation $R$, that we depict as a rule
here.
\begin{equation}
  \label{stutprop}
  \begin{prooftree}
  r \transt r_1 \transt\ldots \transt r_n \trans t \quad (n\geq0)
  \qquad R(r,p) \qquad R(t,p)
  \justifies
   \forall i (1\leq i \leq n) R(r_i,p)
  \end{prooftree}
\end{equation}
So, if in a $\tau$-path, the first and the last state are bisimilar with
$p$, then all states in between are bisimilar with $p$.
In~\cite{GlabbeekWeijland} (and also in other papers), the stuttering
property is proved for $\branbis$.  We cast this property in terms of
apartness.

\begin{defi}\label{def:stuttering}
A relation $Q$ satisfies the {\em apartness stuttering property\/} if
the following rule holds for $Q$.
\[\begin{prooftree}
  r \transtt q \transtt t  \qquad Q(q,p)
  \justifies
  Q(r,p) \vee Q(t,p)
  \using{\stut}
\end{prooftree}\]
\end{defi}

The equivalence between $Q$ being an apartness stuttering property and
$\neg Q$ satisfying the stuttering property of \ref{stutprop} should
be clear. Another way of phrasing the stuttering property for
bisimulations, e.g.\ in \cite{DeNicolaVaandrager}, is as follows.
\begin{equation}
  \label{stutpropNV}
  \begin{prooftree}
  r \transt r_1 \transt\ldots \transt r_n \trans t \quad (n\geq0)
  \qquad R(r,t)
  \justifies
   \forall i (1\leq i \leq n) R(r_0,r_i)
  \end{prooftree}
\end{equation}
The apartness variation of this property is
\begin{equation}
  \label{stutpropNVap}
  \begin{prooftree}
  r \transtt s\transtt t \quad
  \qquad Q(r,s)
  \justifies
Q(r,t)
  \end{prooftree}
\end{equation}
Property \ref{stutpropNVap} follows easily from the apartness stuttering property of
Definition \ref{def:stuttering}, using irreflexivity of $Q$.

\begin{lem}\label{lem.apstut}
The relation $\sbranap$ (semi-branching apartness) satisfies the
apartness stuttering property:
\[\begin{prooftree}
  r \transtt q \transtt t  \qquad q\sbranap p
  \justifies
  r \sbranap p \vee t\sbranap p
  \using{\stut}
\end{prooftree}\]
\end{lem}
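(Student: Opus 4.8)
The plan is to prove the statement by induction on the derivation of $q \sbranap p$, using the derivation rules of Corollary~\ref{cor.semibapt}: $(\symm)$, $(\sbranapt)$ and $(\branapa)$. As flagged in Remark~\ref{rem:symm}, this is the one induction that is not symmetric in the two arguments of $\sbranap$: the rules $(\sbranapt)$ and $(\branapa)$ only ``unfold'' the left-hand argument, so the $(\symm)$ rule, which swaps the two sides, leaves nothing to recurse on. To repair this I will prove, simultaneously by induction on the derivation of $q \sbranap p$, both statement~(i), ``for all $r,t$ with $r \transtt q \transtt t$ we have $r \sbranap p \vee t \sbranap p$'' (which is the claim of the Lemma), and the mirrored statement~(ii), ``for all $r,t$ with $r \transtt p \transtt t$ we have $q \sbranap r \vee q \sbranap t$''. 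By symmetry of $\sbranap$, (ii) for a pair $(q,p)$ is nothing but (i) for $(p,q)$, but the induction genuinely interleaves the two through $(\symm)$, so I keep both.

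If the last rule is $(\symm)$, the subderivation proves $p \sbranap q$; the induction hypothesis gives (i) and (ii) for $(p,q)$, and then (i) for $(q,p)$ is precisely (ii) for $(p,q)$ read through symmetry of $\sbranap$, while (ii) for $(q,p)$ is precisely (i) for $(p,q)$. No use of the $\tau$-paths is needed in this case. If the last rule is $(\sbranapt)$, then $q \transt q'$, $q' \sbranap p$, and $\forall p',p''(p \transtt p' \transt p'' \implies q' \sbranap p'' \vee (q \sbranap p' \wedge q \sbranap p''))$. For (i): from $r \transtt q \transtt t$ we get $r \transtt q \transt q'$, so Corollary~\ref{cor.branapII}(1) (with $q_0 := r$) yields $r \sbranap p$. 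For (ii): from $r \transtt p \transtt t$ we get $p \transtt t$, so Lemma~\ref{lem.branapI}(1) yields $q \sbranap t$. The case $(\branapa)$ is entirely analogous, using Corollary~\ref{cor.branapII}(2) for (i) and Lemma~\ref{lem.branapI}(2) for (ii).

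The main obstacle is exactly the $(\symm)$ rule and the asymmetry it creates, and the key idea is to carry the mirrored statement~(ii) through the induction so that $(\symm)$ can be handled by simply swapping sides. Once that is in place, the two $\tau$-path rules reduce immediately to the derived rules that were isolated beforehand in Lemma~\ref{lem.branapI} and Corollary~\ref{cor.branapII} --- which is precisely why those were proved first. A minor point: the cited derived rules already cover the degenerate cases $r = q$ (resp.\ $p = t$) of an empty $\tau$-path, so there is no need for separate base cases.
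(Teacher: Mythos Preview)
Your proposal is correct and essentially mirrors the paper's proof: the paper also case-analyses the last rule and, instead of carrying the mirrored statement (ii) through the induction, simply splits the $(\symm)$ case into two sub-cases according to whether the preceding rule was $(\sbranapt)$ or $(\branapa)$ applied to $(p,q)$, then uses Lemma~\ref{lem.branapI} to conclude $p \sbranap t$ in those cases and Corollary~\ref{cor.branapII} to conclude $r \sbranap p$ in the direct cases. Your simultaneous (i)/(ii) packaging is a cleaner way to handle repeated $(\symm)$'s, but the substantive steps --- which derived rule is invoked in which case --- are identical.
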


\begin{proof}
By induction on the proof of $q\sbranap p$. There are four cases to
consider: either $q\sbranap p$ was derived by rule $(\sbranapt)$ or
$(\branapa)$, or $p\sbranap q$ was derived by rule $(\sbranapt)$ or
$(\branapa)$, and then $q\sbranap p$ was derived by symmetry
($\symm$).
\begin{itemize}
\item Case $q\sbranap p$ was derived using rule ($\sbranapt$). 
So we have
\[
  \begin{prooftree}
    q \transt q' \qquad  q'\sbranap p
    \qquad \forall p',p''(p\transtt p' \transt p''\implies q'\sbranap p'' \vee (q\sbranap p'\wedge q \sbranap p''))
    \justifies
    q \sbranap p
    \using{\sbranapt}
  \end{prooftree}
\]
Then we conclude $r \sbranap p$ using Corollary \ref{cor.branapII}
(1), and so $r \sbranap p \vee t\sbranap p$.

\item Case $q\sbranap p$ was derived using rule ($\branapa$). 
So we have
\[
  \begin{prooftree}
    q \transa q' \qquad \forall p',p''(p\transtt p' \transa p''\implies q \sbranap p'\vee q' \sbranap p'')
    \justifies
    q \sbranap p
        \using{\branapa}
  \end{prooftree}
\] 
Then we conclude $r \sbranap p$ using Corollary \ref{cor.branapII}
(2), and so $  r \sbranap p \vee t\sbranap p$.

\item Case $p\sbranap q$ was derived using rule ($\sbranapt$). 
So we have
\[
  \begin{prooftree}
    p \transt p' \qquad  p'\sbranap q
    \qquad \forall q',q''(q\transtt q' \transt q''\implies p'\sbranap q'' \vee (p\sbranap q'\wedge p \sbranap q''))
    \justifies
    p \sbranap q
    \using{\sbranapt}
  \end{prooftree}
\]
Then we conclude $p \sbranap t$ using Lemma \ref{lem.branapI} (1), and
so $r \sbranap p \vee t\sbranap p$.

\item Case $p\sbranap q$ was derived using rule ($\branapa$). 
So we have
\[
  \begin{prooftree}
    p \transa p' \qquad \forall q',q''(q\transtt q' \transa q''\implies p \sbranap q'\vee p' \sbranap q'')
    \justifies
    p \sbranap q
        \using{\branapa}
  \end{prooftree}
\] 
Then we conclude $p \sbranap t$ using Lemma \ref{lem.branapI} (2), and
so $  r \sbranap p \vee t\sbranap p$.
\end{itemize}
\end{proof}

\begin{lem}\label{lem.bransbran}
Branching apartness is included in semi-branching apartness and thereby
the two notions coincide: $\branap\quad =\quad \sbranap$.
\end{lem}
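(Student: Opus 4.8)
The plan is to prove the non-trivial implication $q \branap p \implies q \sbranap p$ by induction on the derivation of $q \branap p$, distinguishing cases according to the last rule applied; together with Lemma~\ref{lem.sbranbran}, which already gives $q \sbranap p \implies q \branap p$, this yields $\branap \;=\; \sbranap$.

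First I would dispatch the two easy cases. If the last rule is ($\symm$), the induction hypothesis applied to the sub-derivation of $p \branap q$ gives $p \sbranap q$, and one application of ($\symm$) for $\sbranap$ gives $q \sbranap p$. If the last rule is ($\branapa$), its premises are $q \transa q'$ and $\forall p',p''(p \transtt p' \transa p'' \implies q \branap p' \vee q' \branap p'')$; since the rule ($\branapa$) is literally the same for branching and semi-branching apartness, the induction hypothesis turns the quantified premise into exactly the premise of the $\sbranap$-instance of ($\branapa$), and we conclude $q \sbranap p$.

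The real work is the case where the last rule is ($\branapt$), with premises $q \transt q'$, $q' \branap p$ and $\forall p',p''(p \transtt p' \transt p'' \implies q \branap p' \vee q' \branap p'')$. By induction we obtain $q' \sbranap p$ and $\forall p',p''(p \transtt p' \transt p'' \implies q \sbranap p' \vee q' \sbranap p'')$. To reach $q \sbranap p$ via the rule ($\sbranapt$) we still need $\forall p',p''(p \transtt p' \transt p'' \implies q' \sbranap p'' \vee (q \sbranap p' \wedge q \sbranap p''))$, so the point is to upgrade ``$q \sbranap p'$'' so that it also delivers ``$q \sbranap p''$'' (or, failing that, ``$q' \sbranap p''$''). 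This is exactly where the apartness stuttering property is used. Fix $p',p''$ with $p \transtt p' \transt p''$, so that $p \transtt p' \transtt p''$ is a $\tau$-path through $p'$. If $q' \sbranap p''$, the required disjunction holds at once. If instead $q \sbranap p'$, then $p' \sbranap q$ by symmetry, so Lemma~\ref{lem.apstut} applied to $p \transtt p' \transtt p''$ gives $p \sbranap q \vee p'' \sbranap q$, i.e.\ $q \sbranap p \vee q \sbranap p''$: in the first subcase we have the goal $q \sbranap p$ outright, and in the second $q \sbranap p' \wedge q \sbranap p''$, verifying the disjunction. Hence either $q \sbranap p$ has already been derived, or the premise of ($\sbranapt$) holds for all relevant $p',p''$ and an application of ($\sbranapt$) to $q \transt q'$ and $q' \sbranap p$ yields $q \sbranap p$.

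I expect the step just described to be the main obstacle: it reflects the deliberate mismatch between the $\tau$-rules ($\branapt$) and ($\sbranapt$), where the latter additionally constrains the pair $(q, p'')$, and bridging this gap is precisely why the apartness stuttering property of Lemma~\ref{lem.apstut} had to be proved beforehand.
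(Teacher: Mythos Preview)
Your proposal is correct and follows essentially the same approach as the paper's proof: induction on the derivation of $q \branap p$, with the only non-trivial case being ($\branapt$), where the apartness stuttering property (Lemma~\ref{lem.apstut}) is applied to the path $p \transtt p' \transtt p''$ to upgrade $q \sbranap p'$ to either $q \sbranap p$ (done) or $q \sbranap p''$ (yielding the conjunct needed for ($\sbranapt$)). You are slightly more explicit than the paper about the use of symmetry before invoking the stuttering lemma and about the ($\symm$) and ($\branapa$) cases, but the argument is the same.
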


\begin{proof}
We prove $q \branap p \implies q\sbranap p$ by induction on the
derivation of $q \branap p$, using the apartness stuttering
property. We conclude $\branap\quad =\quad \sbranap$ using Lemma
\ref{lem.sbranbran}.

For the induction we only treat the case where $q\branap p$ has been
derived using the rule ($\branapt$), as the other cases are
immediate. So assume we have the following.
  \[
  \begin{prooftree}
    q \transt q' \qquad  q' \branap p \qquad \forall p',p''(p\transtt p' \transt p''\implies q \branap p'\vee q' \branap p'')
    \justifies
    q \branap p
    \using{\branapt}
  \end{prooftree}
\]
So we have $q \transt q'$ and by induction hypothesis we have (a) $q'
\sbranap p$ and (b) $\forall p',p''(p\transtt p' \transt p''\implies q
\sbranap p'\vee q' \sbranap p'')$.  To be able to apply the rule
($\sbranapt$) to conclude $q \sbranap p$, we need to prove
$$\forall p',p''(p\transtt p' \transt p''\implies q' \sbranap p'' \vee
(q \sbranap p'\wedge q \sbranap p'')).$$
Let $p', p''$ be such that $p\transtt p' \transt p''$. Using (b) we
have two cases.
\begin{itemize}
\item Case $q \sbranap p'$. Then, by $p\transtt p' \transt p''$ and
  the stuttering property (Lemma \ref{lem.apstut}), we have $q
  \sbranap p \vee q \sbranap p''$. In case $q \sbranap p$, we are
  done, because that's exactly what we had to prove in the end; in
  case $q \sbranap p''$ we have $q' \sbranap p'' \vee (q \sbranap
  p'\wedge q \sbranap p'')$ and we are done.
\item Case $q' \sbranap p''$. Then $q' \sbranap p'' \vee (q \sbranap
  p'\wedge q \sbranap p'')$ and we are done.
\end{itemize}

\end{proof}

As a consequence of this Lemma, Corollary \ref{cor.branapII} and
Lemma \ref{lem.branapI} also apply to branching apartness, $\branap$.

\begin{lem}\label{lem.branapcotrans}
  The relation $\branap$ is co-transitive: for all $q,p,r$: if $q
  \branap p$, then $q \branap r \vee r \branap p$.
\end{lem}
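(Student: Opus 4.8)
The plan is to pass to semi-branching apartness and argue by induction on derivations. By Lemma~\ref{lem.bransbran} we have $\branap = \sbranap$, so it suffices to prove that $\sbranap$ is co-transitive, and I would do this by showing, by induction on the derivation of $q\sbranap p$, the statement ``$\forall r\,(q\sbranap r\vee r\sbranap p)$''. The induction splits on the last rule used, with rule set $\{\symm,\sbranapt,\branapa\}$ as in Corollary~\ref{cor.semibapt}; throughout I would use the derived rules of Corollary~\ref{cor.branapII}, and, since we reason classically, I would freely case-split on whether the relevant ``matching'' conditions hold for the target state $r$.

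The rule $(\symm)$ is immediate: from $p\sbranap q$ the induction hypothesis gives $p\sbranap r\vee r\sbranap q$ for the given $r$, and one application of $(\symm)$ to either disjunct yields $q\sbranap r\vee r\sbranap p$. For $(\branapa)$, with premises $q\transa q'$ and $\forall p',p''(p\transtt p'\transa p''\Rightarrow q\sbranap p'\vee q'\sbranap p'')$, I would split on whether $\forall r',r''(r\transtt r'\transa r''\Rightarrow q\sbranap r'\vee q'\sbranap r'')$ holds. If it holds, $(\branapa)$ with the very same transition $q\transa q'$ gives $q\sbranap r$. If it fails, fix a witness $r\transtt r'\transa r''$ with $\neg(q\sbranap r')$ and $\neg(q'\sbranap r'')$; then for each $p',p''$ with $p\transtt p'\transa p''$ the premise above, together with the induction hypothesis applied to the relevant subderivation ($q\sbranap p'$ with third argument $r'$, or $q'\sbranap p''$ with third argument $r''$), forces $r'\sbranap p'\vee r''\sbranap p''$, and Corollary~\ref{cor.branapII}(2) applied to $r\transtt r'\transa r''$ delivers $r\sbranap p$.

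The rule $(\sbranapt)$ is the heart of the argument and the step I expect to be the main obstacle, precisely because it is the one in which the $\tau$-closures must be handled with care. Its premises are $q\transt q'$, $q'\sbranap p$, and $H:=\forall p',p''(p\transtt p'\transt p''\Rightarrow q'\sbranap p''\vee(q\sbranap p'\wedge q\sbranap p''))$. First apply the induction hypothesis to $q'\sbranap p$ to get $q'\sbranap r\vee r\sbranap p$; the second disjunct finishes the case, so take the remaining case $q'\sbranap r$. Now split on whether the $(\sbranapt)$-premise $\forall r',r''(r\transtt r'\transt r''\Rightarrow q'\sbranap r''\vee(q\sbranap r'\wedge q\sbranap r''))$ holds for $r$. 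If it does, $(\sbranapt)$ with $q\transt q'$ and $q'\sbranap r$ gives $q\sbranap r$. If it fails, fix a witness $r\transtt r'\transt r''$ with $\neg(q'\sbranap r'')$ and $\neg(q\sbranap r')\vee\neg(q\sbranap r'')$; the induction hypothesis on $q'\sbranap p$ with third argument $r''$, together with $\neg(q'\sbranap r'')$, yields $r''\sbranap p$, and it remains only to verify the hypothesis of Corollary~\ref{cor.branapII}(1) for the chain $r\transtt r'\transt r''$ and $r''\sbranap p$, i.e.\ $\forall p',p''(p\transtt p'\transt p''\Rightarrow r''\sbranap p''\vee(r'\sbranap p'\wedge r'\sbranap p''))$. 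For each such $p',p''$, clause $H$ gives $q'\sbranap p''$ or $q\sbranap p'\wedge q\sbranap p''$; feeding these into the induction hypothesis with third argument $r'$ or $r''$, and using $\neg(q'\sbranap r'')$ respectively the disjunction $\neg(q\sbranap r')\vee\neg(q\sbranap r'')$, produces the required disjunct in each subcase. Corollary~\ref{cor.branapII}(1) then yields $r\sbranap p$, completing the induction.

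With $\branap=\sbranap$ from Lemma~\ref{lem.bransbran}, co-transitivity of $\branap$ follows, and hence, via Lemma~\ref{lem.largestsmallest}(3) and Lemma~\ref{lem.apartnegequiv}, transitivity of $\branbis$, so that $\branbis$ is an equivalence relation. The genuinely delicate points are the bookkeeping in the $(\sbranapt)$ case --- which induction hypothesis is invoked on which subderivation, and with which third argument --- and checking that every $\tau$-chain produced is in exactly the shape $r\transtt r'\transt r''$ demanded by Corollary~\ref{cor.branapII}(1); note that the symmetric variants of the rules (Remark~\ref{rem:symm}) never need to be unfolded, as the whole argument is symmetric in the two sides of $\sbranap$ except through the rule $(\symm)$ itself.
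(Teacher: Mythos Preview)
Your argument is correct and follows essentially the same strategy as the paper: induction on derivations, with the failure case handled by fixing a witness $r',r''$ and invoking Corollary~\ref{cor.branapII} to push apartness from the witness back to $r$. The one genuine difference is that the paper inducts directly on $\branap$-derivations (using that $\branap=\sbranap$ only to justify applying Corollary~\ref{cor.branapII} to $\branap$), whereas you first pass to $\sbranap$ and induct on $\sbranap$-derivations. Your choice has the advantage that the $\tau$-case hypothesis you must re-establish, namely $r''\sbranap p''\vee(r'\sbranap p'\wedge r'\sbranap p'')$, is exactly the shape required by Corollary~\ref{cor.branapII}(1); in the paper's version one obtains the $\branapt$-shaped disjunction $r'\branap p'\vee r''\branap p''$ and must then bridge the gap to the $\sbranapt$-shaped hypothesis of the corollary (implicitly via $\branap=\sbranap$ again, or via stuttering). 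The paper's failure witness satisfies both $\neg(q\branap r')$ and $\neg(q'\branap r'')$, giving slightly stronger information, while yours satisfies $\neg(q'\sbranap r'')$ together with only the disjunction $\neg(q\sbranap r')\vee\neg(q\sbranap r'')$, which is why your sub-case analysis must invoke the induction hypothesis on $q\sbranap p''$ with third argument $r'$ (not just $r''$) to obtain the conjunction $r'\sbranap p'\wedge r'\sbranap p''$ --- a detail you glossed but which does go through.
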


\begin{proof}
We prove $q\branap p \implies \forall r (q\branap r \vee r \branap p)$
by induction on the derivation of $q\branap p$, using the properties
we have proved before about $\branap$ and $\sbranap$.
\begin{itemize}
  \item Case $q \branap p$ was derived using ($\branapt$).
  \[
  \begin{prooftree}
    q \transt q' \qquad  q' \branap p \qquad \forall p',p''(p\transtt p' \transt p''\implies q \branap p'\vee q' \branap p'')
    \justifies
    q \branap p
    \using{\branapt}
  \end{prooftree}
  \]
  Let $r$ be a state. If (a) $q'\branap r$ and (b) $\forall
  r',r''(r\transtt r' \transt r''\implies q \branap r'\vee q' \branap
  r'')$, then $q\branap r$ and we are done. Otherwise, $\neg(q'\branap
  r)$ or
  $$\exists r',r''(r\transtt r' \transt r''\wedge \neg(q \branap r')
  \wedge \neg(q' \branap r'')).$$
  If $\neg(q'\branap r)$, we apply
  induction on $q' \branap p$ to derive $q'\branap r \vee r\branap p$, from which
  we conclude $r\branap p$ and we are done.

In the other case
we consider $r',r''$ with
(d) $\neg(q \branap r')$ and (e) $\neg(q' \branap r'')$. We will prove
that $r\branap p$. Let $p',p''$ be such that $p\transtt p' \transt
p''$. (If there are no such $p', p''$, then $r\branap p$ due to
Corollary \ref{cor.branapII} (1) and the fact that $r'' \branap p$,
which follows from induction on $q' \branap p$, which yields
$q'\branap r'' \vee r''\branap p$, but we know $\neg(q' \branap
r'')$ from (e).) Then $q \branap p'\vee q' \branap p''$.
\begin{itemize}
\item Case $q \branap p'$. Then by induction $q \branap r' \vee
  p'\branap r'$, so $p'\branap r'$ by (d) and so $p'\branap r' \vee
  p''\branap r''$.
\item Case $q' \branap p''$. Then by induction $q' \branap r'' \vee
  p''\branap r''$, so $p''\branap r''$ by (e) and so $p'\branap r'
  \vee p''\branap r''$.
\end{itemize}
So $r'\branap p$ and we apply
Corollary \ref{cor.branapII} (1), to conclude $r\branap p$.
\item Case $q \branap p$ was derived using ($\branapa$).
\[
  \begin{prooftree}
    q \transa q' \qquad \forall p',p''(p\transtt p' \transa p''\implies q \branap p'\vee q' \branap p'')
    \justifies
    q \branap p
        \using{\branapa}
  \end{prooftree}
\] 
  Let $r$ be a state. If $\forall
  r',r''(r\transtt r' \transa r''\implies q \branap r'\vee q' \branap
  r'')$, then $q\branap r$ and we are done. Otherwise
  $$\exists r',r''(r\transtt r' \transa r''\wedge \neg(q \branap r')
  \wedge \neg(q' \branap r'')).$$
Consider $r',r''$ with (d) $\neg(q \branap r')$ and
(e) $\neg(q' \branap r''))$. We will prove that $r\branap p$. Let
$p',p''$ be such that $p\transtt p' \transa p''$. (If there are no such $p', p''$, then $r\branap p$ due to Corollary \ref{cor.branapII} (2).) Then $q \branap
p'\vee q' \branap p''$.
\begin{itemize}
\item Case $q \branap p'$. Then by induction $q \branap r' \vee
  p'\branap r'$, so $p'\branap r'$ by (d) and so $p'\branap r' \vee
  p''\branap r''$.
\item Case $q' \branap p''$. Then by induction $q' \branap r'' \vee
  p''\branap r''$, so $p''\branap r''$ by (e) and so $p'\branap r'
  \vee p''\branap r''$.
\end{itemize}
So $r'\branap p$ and we apply
Corollary \ref{cor.branapII} (2), to conclude $r\branap p$.
\end{itemize}

\end{proof}
      
The co-transitivity is the crucial property for showing that $\branap$
is a proper apartness relation.

\begin{thm}
  \label{thm.apartisapart}
  The relation $\branap$ is a
  proper apartness relation (in the sense of Definition
  \ref{def:properapart}).
\end{thm}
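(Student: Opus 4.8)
The plan is to verify the three clauses of Definition~\ref{def:properapart} for $\branap$ one by one; two of them are already in hand and only irreflexivity needs a short extra argument. Co-transitivity of $\branap$ is exactly the statement of Lemma~\ref{lem.branapcotrans}. Symmetry is immediate: by Lemma~\ref{lem.largestsmallest}(2) the relation $\branap$ is itself a branching apartness relation, hence it is closed under the rule ($\symm$), so $q\branap p$ implies $p\branap q$.

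It remains to prove irreflexivity, $\neg(q\branap q)$ for every state $q$. I would obtain this from the complement characterization $\branap = \neg\branbis$ of Lemma~\ref{lem.largestsmallest}(3): it suffices to show that $\branbis$ is reflexive, and for that it is enough to check that the diagonal $\Delta = \{(x,x)\mid x\in X\}$ is a branching bisimulation relation in the sense of Definition~\ref{def.weakbran}. This is a routine verification: for the rule ($\branbist$), given $q\transt q'$ and $\Delta(q,p)$ (so $p=q$), take the witnesses $p'=q$ and $p''=q'$; then $p\transtt p'$ (in zero $\tau$-steps), $p'\transt p''$, $\Delta(q,p')$ and $\Delta(q',p'')$, which supplies the required disjunct. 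For the rule ($\branbisa$) the same witnesses work with $p'\transa p''$ in place of $p'\transt p''$, and since $\Delta$ is symmetric its symmetric variants hold as well. Hence $\Delta\subseteq\branbis$ by Lemma~\ref{lem.largestsmallest}(1), so $q\branbis q$ for all $q$, and therefore $\neg(q\branap q)$.

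I do not expect a genuine obstacle here: all the delicate work — the identity $\branap=\sbranap$ (Lemma~\ref{lem.bransbran}), the apartness stuttering property (Lemma~\ref{lem.apstut}), and the inductive proof of co-transitivity (Lemma~\ref{lem.branapcotrans}) — has already been carried out, and this theorem is merely the assembly step. The one point to keep in mind is not to conflate the two meanings of ``apartness'': an arbitrary branching apartness relation need not be irreflexive (the full relation $X\times X$ is one), so irreflexivity genuinely relies on $\branap$ being the \emph{least} branching apartness relation, i.e.\ the complement of the reflexive relation $\branbis$.
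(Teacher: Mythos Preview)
Your proposal is correct. Symmetry and co-transitivity are handled exactly as in the paper (via the rule $(\symm)$ and Lemma~\ref{lem.branapcotrans}). For irreflexivity you take a different but perfectly valid route: you pass to the complement using $\branap=\neg\branbis$ and verify that the diagonal $\Delta$ is a branching bisimulation, which is immediate. The paper instead stays on the apartness side and argues directly from the derivation system: it supposes a shortest derivation of some $q\branap q$ and observes that the last rule, whether $(\branapa)$ or $(\branapt)$, would produce a strictly shorter derivation of either $q\branap q$ or $q'\branap q'$, a contradiction. Your argument is the standard bisimulation-side observation; the paper's argument exploits the inductive nature of $\branap$ in keeping with the theme of the section. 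Your closing remark that an arbitrary branching apartness relation need not be irreflexive is exactly the point that makes the paper's minimality argument (or, equivalently, your appeal to $\branbis$) necessary.
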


\begin{proof}
We need to verify irreflexivity, symmetry and
co-transitivity. Symmetry is built in and co-transitivity has been
proved in Lemma \ref{lem.branapcotrans}. For irreflexivity, consider
the shortest derivation of $q\branap q$ (for some $q$). If this is
derived using rule $\branapa$, we have $q\transa q'$ and $q\branap q
\vee q'\branap q'$, which means that there is a shorter derivation of
a reflexivity, contradiction.  If this is derived using rule
$\branapt$, we have $q\transt q'$ and $q\branap q \vee q'\branap q'$,
which again means that there is a shorter derivation of a reflexivity,
contradiction. So there is no derivation of $q\branap q$ for any $q$.
\end{proof}

\begin{cor}
\label{lem.equiv}
The relation $\branbis$ is an equivalence relation.
\end{cor}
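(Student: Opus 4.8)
The statement to prove is Corollary~\ref{lem.equiv}: the relation $\branbis$ is an equivalence relation. The strategy is to transfer the result from the apartness side, where all the hard work has already been done. The key facts to assemble are: (i) by Lemma~\ref{lem.largestsmallest}(3), $\branbis = \neg\branap$; (ii) by Theorem~\ref{thm.apartisapart}, $\branap$ is a proper apartness relation, i.e.\ irreflexive, symmetric and co-transitive; and (iii) by Lemma~\ref{lem.apartnegequiv}, a relation $R$ is an equivalence relation if and only if $\neg R$ is a proper apartness relation. Combining these three is immediate.

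\begin{proof}
By Lemma~\ref{lem.largestsmallest}(3) we have $\branbis = \neg\branap$, and by Theorem~\ref{thm.apartisapart} the relation $\branap$ is a proper apartness relation. Hence $\neg(\neg\branap) = \branbis$ is, by Lemma~\ref{lem.apartnegequiv}, an equivalence relation. Concretely: irreflexivity of $\branap$ gives reflexivity of $\branbis$ (for every $q$, $\neg(q\branap q)$, so $q\branbis q$); symmetry of $\branap$ gives symmetry of $\branbis$; and co-transitivity of $\branap$ gives transitivity of $\branbis$ (if $q\branbis p$ and $p\branbis r$ but $\neg(q\branbis r)$, i.e.\ $q\branap r$, then by co-transitivity $q\branap p$ or $p\branap r$, contradicting $q\branbis p$ and $p\branbis r$).
\end{proof}

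There is essentially no obstacle here: the content of the corollary is entirely contained in Theorem~\ref{thm.apartisapart} together with the duality lemmas, and it is stated as a corollary precisely because the substantive work (co-transitivity of $\branap$ via the semi-branching apartness machinery and the apartness stuttering property) has already been carried out in Lemmas~\ref{lem.sbranbran}--\ref{lem.branapcotrans}. The only point worth a moment's care is the direction of the logical manipulations in deducing transitivity from co-transitivity of the complement, but this is exactly the argument already spelled out in the proof of Lemma~\ref{lem.apartnegequiv}, so one may simply cite it.
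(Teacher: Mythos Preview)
Your proof is correct and follows exactly the paper's approach: it derives the result immediately from Theorem~\ref{thm.apartisapart} together with the fact that $\branbis$ is the complement of $\branap$ (Lemma~\ref{lem.largestsmallest}(3)) and Lemma~\ref{lem.apartnegequiv}. The only difference is that you spell out the three individual property transfers explicitly, whereas the paper leaves them implicit.
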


\begin{proof}
Immediately from Theorem \ref{thm.apartisapart} using the fact that
$\branbis$ is the complement of $\branap$.
\end{proof}

\subsection{Using branching apartness}\label{sec:usingapartness}
Further research has to establish whether the notion of apartness is
really useful in the study and analysis of labelled transition
systems. In the previous section we have shown how to use apartness in
the meta-theory of branching bisimulation to give some new proofs for
known properties. In follow up research we would like to analyze
well-known algorithms for checking branching bisimulation, as
in~\cite{Jansenetal}, and possibly develop variations on those
algorithms.  One way to decide branching bisimilarity of states in a
finite LTS is by deciding branching apartness. In the present section,
we give some ideas of what an algorithm for deciding branching
apartness could look like and we also give some variations of the
rules for branching apartness, also combined with branching
bisimulation that might provide useful. In the end of this section, we
briefly mention {\em rooted branching apartness\/} as the complement
of {\em rooted branching bisimulation}. Rooted branching bisimulation
is a congruence \cite{GlabbeekWeijland,Fokkink}, while branching
bisimulation is not. For apartness this means that operations are
strongly extensional with respect to rooted branching apartness, while
they are not with respect to branching apartness.

An obvious algorithm to decide $q \branap p$ is by trying to find a
derivation of $q\branap p$ in a structured way and concluding that
$q\branbis p$ holds in case such a derivation cannot be found. It may
look as if, for LTSs with loops, this could lead to an infinite search
process. But this can be avoided if we look for a {\em shortest
  derivation\/} and keep track of goals that we have already
encountered. If we encounter the goal again, we can conclude it is not
provable. Also, some of the goals will be disjunctions of apartness
assertions, like $q \branap p'\vee q' \branap p''$. In that case we
will search for a proof of $q \branap p'$ and for a proof of $q'
\branap p''$, in parallel, and we conclude as soon as we have found a
proof of one of them. To clarify this point a bit better we show two
pairs of LTSs with loops and how a proof of branching apartness is
found for the first pair, and a proof of branching bisimilarity for the
second pair.

\begin{exa}\label{exa.infLTS}
We give 4 LTSs with loops.\\  
\begin{tabular}[t]{ccccc}
\begin{tikzpicture}[>=stealth,node distance=1.5cm,auto]
    \node  (p0)                        {$p_0$};
    \node  (p1) [below of = p0]        {$p_1$};
    \node  (p2) [right of = p0]        {$p_2$};
    \node  (p3) [below of = p2]        {$p_3$};

    \path[->]
    (p0) edge                   node[swap] {$d$}    (p1)
         edge                   node {$\tau$}       (p2)
    (p1) edge [bend right]     node[swap] {$c$}    (p0)
         edge [bend left=90]     node {$e$}          (p0)
    (p2) edge      node[swap] {$d$}    (p3)
    (p3) edge [bend right]      node[swap] {$c$}          (p2);
\end{tikzpicture}
&
\qquad
\begin{tikzpicture}[>=stealth,node distance=1.5cm,auto]
    \node  (q0)                        {$q_0$};
    \node  (q1) [below of = q0]        {$q_1$};
    \node  (q2) [right of = q0]        {$q_2$};
    \node  (q3) [below of = q2]        {$q_3$};
    \node  (q4) [right of = q3]        {$q_4$};
    \path[->]
    (q0) edge                   node[swap] {$d$}    (q1)
         edge                   node {$d$}       (q2)
    (q1) edge [bend right]      node[swap] {$c$}    (q0)
         edge [bend left=90]    node {$e$}          (q0)
    (q2) edge                   node[swap] {$c$}    (q3)
    (q3) edge [bend right]      node[swap] {$d$}    (q4)
    (q4) edge [bend right]      node[swap] {$c$}    (q3);
\end{tikzpicture}
&\qquad

\qquad&
\begin{tikzpicture}[>=stealth,node distance=1.5cm,auto]
    \node  (q)                        {$q$};
    \node  (q') [below of = q]        {$q'$};

    \path[->]
    (q)  edge  [bend right]      node[swap] {$d$}       (q')
    (q') edge  [bend right]      node[swap] {$d$}       (q);
\end{tikzpicture}
&
\qquad

\begin{tikzpicture}[>=stealth,node distance=1.5cm,auto]
    \node  (p)                        {$p$};

    \path[->]
    (p)  edge   [loop below=90]    node[swap] {$d$}       (p);
\end{tikzpicture}
    \end{tabular}

\noindent In the first two LTSs, we have $q_0 \branap p_0$, which is established by the derivation below.
{\small
\[
\begin{prooftree}
q_0 \trans_d q_2 
\quad
 \begin{prooftree}
  \begin{prooftree}
   \begin{prooftree}
     \begin{prooftree}
      p_1 \trans_e p_0
      \justifies
      p_1 \branap q_2
     \end{prooftree}
       \justifies
      q_2 \branap p_1
     \end{prooftree}
  \justifies
        q_0 \branap p_0 \vee q_2 \branap p_1
  \end{prooftree}
\qquad
 \begin{prooftree}
   \begin{prooftree}
        q_0 \trans_d q_1 \quad
      \begin{prooftree}
        \begin{prooftree}
          \begin{prooftree}
           q_1 \trans_e q_0
          \justifies
           q_1 \branap p_3
          \end{prooftree}
        \justifies 
          q_0 \branap p_2 \vee q_1 \branap p_3
        \end{prooftree}
      \justifies
        \forall p',p''(p_2\transtt p' \trans_d p''\implies q_0 \branap p'\vee q_1 \branap p'')
      \end{prooftree}
    \justifies
      q_0 \branap p_2
    \end{prooftree}
   \justifies
        q_0 \branap p_2 \vee q_2 \branap p_3
  \end{prooftree}
  \justifies
  \forall p',p''(p_0\transtt p' \trans_d p''\implies q_0 \branap p'\vee q_2 \branap p'')
 \end{prooftree}
  \justifies
  q_0\branap p_0
\end{prooftree}\]
}

Observe that an algorithm would have to go through all possible
$d$-steps from $q_0$ and ``replay'' them from $p_0$. We have chosen
the ``successful'' $d$-step that leads to a derivation of $q_0\branap
p_0$. Similarly in proving $ q_0\branap p_2$, we have chosen the
successful $d$-step, $q_0\trans_d q_1$. When proving a disjunction, an
algorithm would have to try to prove both parts of the disjunction in
parallel. We have only shown the successful one.

For the third and fourth LTS, we have $q \branbis p$, so we want to show
that $\neg (q\branap p)$. This is achieved by trying to find the
shortest derivation and observing there is none. This search leads to
the following derivation.
\[ \begin{prooftree}
  q \trans_d  q' \quad
  \begin{prooftree}
    \begin{prooftree}
      q' \trans_d  q \quad
      \begin{prooftree}
        \mbox{fail}
        \justifies
            q'\branap p \vee q\branap p
          \end{prooftree}
      \justifies
      q'\branap p
    \end{prooftree}
    \justifies
    q\branap p \vee q'\branap p
  \end{prooftree}
  \justifies
  q\branap p
\end{prooftree}
  \]
  Note that this is the complete search tree for a derivation of
  $q\branap p$, where we have stopped at a branch as soon as we find a
  goal that we have already encountered. Therefore we fail at the goal
  $q'\branap p \vee q\branap p$, because both $q'\branap p$ and
  $q\branap p$ have already been encountered. Conclusion:
  $\neg(q\branap p)$, that is $q\branbis p$.
\end{exa}

We now look into some variations on the rules for branching apartness.

\begin{lem}\label{lem.branapalt}
  The following alternative $\branapa$-rule\footnote{Thanks to David N.\ Jansen for suggesting
    this rule} is sound for $\branap$.
 \[
  \begin{prooftree}
    q \transa q' \qquad \forall p',p''(p\transtt p' \transa p''\implies p \branap p'\vee q' \branap p'')
    \justifies
    q \branap p
        \using{\branapaalt}
  \end{prooftree}
\] 
\end{lem}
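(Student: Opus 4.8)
The plan is to \emph{reduce} the alternative rule $\branapaalt$ to the ordinary $\branapa$-rule of Definition~\ref{def.branapart}, which we are entitled to use for $\branap$ itself since $\branap$ is a branching apartness relation (Corollary~\ref{cor.branap}). So I assume the premises of $\branapaalt$, namely $q \transa q'$ together with
\[(\star)\qquad \forall p',p''\,(p\transtt p' \transa p''\implies p \branap p'\vee q' \branap p''),\]
and the goal is to establish $q \branap p$.

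The heart of the argument is to convert $(\star)$ into the premise of $\branapa$, i.e.\ into $\forall p',p''\,(p\transtt p' \transa p''\implies q \branap p'\vee q' \branap p'')$. So fix $p',p''$ with $p\transtt p' \transa p''$. By $(\star)$ we have $q'\branap p''$ — in which case we already have one of the wanted disjuncts — or else $p\branap p'$. In the second case apply co-transitivity (Lemma~\ref{lem.branapcotrans}) to $p \branap p'$ with the intermediate state $q$, obtaining $p\branap q \vee q\branap p'$; if $q\branap p'$ we again have a wanted disjunct, and if $p\branap q$ then by symmetry of $\branap$ (Theorem~\ref{thm.apartisapart}) we get $q\branap p$ directly. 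Summarising, for every $p',p''$ with $p\transtt p' \transa p''$ we have derived $q\branap p'\vee q' \branap p''\vee q\branap p$.

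Now either the last disjunct $q\branap p$ has been obtained for some such pair, and we are done; or it holds for no pair, in which case $\forall p',p''\,(p\transtt p' \transa p''\implies q \branap p'\vee q' \branap p'')$, so that together with $q\transa q'$ the premises of rule $\branapa$ are met and $q\branap p$ follows by that rule. In both cases $q\branap p$, which is exactly the soundness of $\branapaalt$ for $\branap$. The only genuinely substantial ingredient is the appeal to co-transitivity, i.e.\ the previously established (and nontrivial) Lemma~\ref{lem.branapcotrans}; granted that, plus symmetry, the reduction is routine, so I expect no further obstacle. One should merely note that $p\transtt p'$ permits $p=p'$, in which case $p\branap p'$ is excluded by irreflexivity and $(\star)$ forces $q'\branap p''$ — but this is already covered by the first case above and needs no separate treatment.
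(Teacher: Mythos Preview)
Your argument is correct and follows essentially the same approach as the paper: both reduce $\branapaalt$ to the original $\branapa$-rule by using co-transitivity (Lemma~\ref{lem.branapcotrans}) to turn the disjunct $p\branap p'$ into $q\branap p' \vee q\branap p$, then handle the extra disjunct $q\branap p$ separately. The only cosmetic difference is that the paper wraps the whole thing in a proof by contradiction (assume $\neg(q\branap p)$ up front), whereas you do the equivalent case split at the end; the logical content is identical.
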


\begin{proof}
\weg{First we prove rule ($\branapaalt$) from rule ($\branapa$), soundness.}
Assume we have $q \transa q'$ and $\forall p',p''(p\transtt p' \transa
p''\implies p \branap p'\vee q' \branap p'')$. We need to prove $q
\branap p$. Suppose $\neg q \branap p$. We want to apply the original
$\branapa$-rule, so we need to prove the hypothesis to that rule,
which is $\forall p',p''(p\transtt p' \transa p''\implies q \branap
p'\vee q' \branap p'')$. Let $p', p''$ be such that $p\transtt p'
\transa p''$. Then $p \branap p'\vee q' \branap p''$.
\begin{itemize}
\item Case $p \branap p'$. Then $q \branap p'\vee q \branap p$ by
co-transitivity. We know from our assumption that $\neg q \branap p$,
so $q \branap p'$ and so $q \branap p'\vee q' \branap p''$ and done.
\item Case $q' \branap p''$. Then $q \branap p'\vee q' \branap p''$ and done.
So we can apply the original $\branapa$-rule and conclude $q \branap
p$. This contradicts our assumption $\neg q \branap p$, so we conclude
$q \branap p$.
\qedhere
\end{itemize}
\weg{
  Now we prove rule ($\branapa$) from rule ($\branapaalt$), completeness.
Assume we have $q \transa q'$ and $\forall p',p''(p\transtt p' \transa
p''\implies q \branap p'\vee q' \branap p'')$. We need to prove $q
\branap p$. Suppose $\neg q \branap p$. We want to apply the
$\branapaalt$-rule, so we need to prove the hypothesis to that rule,
which is $\forall p',p''(p\transtt p' \transa p''\implies p \branap
p'\vee q' \branap p'')$. Let $p', p''$ be such that $p\transtt p'
\transa p''$. Then $q \branap p'\vee q' \branap p''$.
\begin{itemize}
\item Case $q \branap p'$. Then $q \branap p\vee p \branap p'$ by
co-transitivity. We know from our assumption that $\neg q \branap p$,
so $p \branap p'$ and so $p \branap p'\vee q' \branap p''$ and done.
\item Case $q' \branap p''$. Then $q \branap p'\vee q' \branap p''$ and done.
So we can apply the $\branapaalt$-rule and conclude $q \branap
p$. This contradicts our assumption $\neg q \branap p$, so we conclude
$q \branap p$.
\end{itemize}
}
\end{proof}

We conjecture that the rule ($\branapaalt$) is also complete for
proving $\branap$, that is: if we replace rule ($\branapa$) with rule
($\branapaalt$) we can derive the same apartness judgments. If we
write $\branapA$ for the system with rule ($\branapa$) replaced by
rule ($\branapaalt$), Lemma \ref{lem.branapalt} states that $q
\branapA p \implies q \branap p$. For the proof of completeness, $q
\branap p \implies q \branapA p$, it seems we need to prove
co-transitivity of $\branapA$ first.

\weg{
\begin{lem}
If $Q$ is a co-transitive semi-branching bisimulation that satisfies
the apartness stuttering property, it is a branching bisimulation.
\end{lem}

\begin{proof}
  Let $Q$ be a co-transitive semi-branching bisimulation that
  satisfies the apartness stuttering property. We need to prove that
  $Q$ satisfies the rule $\branapt$:
  \[
  \begin{prooftree}
    q \transt q' \qquad  Q(q',p) \qquad \forall p',p''(p\transtt p' \transt p''\implies Q(q,p')\vee Q(q',p''))
    \justifies
    Q(q,p)
    \using{\branapt}
  \end{prooftree}
\]
So suppose $q \transt q'$, $Q(q',p)$ (1) and $\forall p',p''(p\transtt
p' \transt p''\implies Q(q,p')\vee Q(q',p''))$ (2), and assume $\neg Q(q,p)$.
We want to apply the $\sbranapt$-rule to conclude $Q(q,p)$, so we are done if we
prove the ``missing'' condition for the $\sbranapt$-rule:
\[\forall p'(p\transtt p'\implies Q(q,p')\vee Q(q',p')).\]

\noindent So let $p'$ be such that $p\transtt p'$.

If $p'=p$ and $p\transtt p'$ is empty, we are done.

If $p\transt p'$, then we have (using (2)): $Q(q,p)\vee Q(q',p')$. The
first contradicts our assumption and the second implies $Q(q,p')\vee
Q(q',p')$.

If $p\transt p_0 \transt p'$, then (using (2)) we have : $Q(q,p_0)\vee
Q(q',p')$. The second implies $Q(q,p')\vee Q(q',p')$. For the first,
we apply the stuttering property to derive $Q(q,p) \vee Q(q,p')$. Now
$Q(q,p)$ contradicts our assumption, so we have $Q(q,p')$ from which
we conclude $Q(q,p')\vee Q(q',p')$.
\end{proof}
}
\smallskip

Using the notion of apartness, we can also add some rules that combine
apartness and bisimulation and that may be useful in analyzing or
developing new algorithms for checking branching bisimulation, as
in~\cite{Jansenetal}.

\begin{lem}\label{lem.soundrules-apt}
  The following two rules are sound for proving branching apartness $\branap$.
  \[
  \begin{prooftree}
    q \transt q' \qquad  q' \branap p \qquad \forall p',p''(p\transtt p' \transt p''\wedge q' \branbis p''\implies q \branap p')
    \justifies
    q \branap p
    \using{\branapt^{\branbis}}
  \end{prooftree}
\]

\[
  \begin{prooftree}
    q \transa q' \qquad \forall p',p''(p\transtt p' \transa p'' \wedge q' \branbis p''\implies q \branap p')
    \justifies
    q \branap p
        \using{\branapa^{\branbis}}
  \end{prooftree}
\] 
\end{lem}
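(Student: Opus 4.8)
The plan is to prove each of the two rules sound for $\branap$ separately, in both cases arguing by contradiction: assume the hypotheses of the new rule hold but $\neg(q\branap p)$, and derive a contradiction by exhibiting that the hypotheses of the original rule ($\branapt$), respectively ($\branapa$), must then hold, so that $q\branap p$ follows after all. The key ingredients available to us are co-transitivity of $\branap$ (Lemma \ref{lem.branapcotrans}) and the fact that $\branbis = \neg\branap$ (Lemma \ref{lem.largestsmallest}(3)), so that $q'\branbis p''$ is literally the same as $\neg(q'\branap p'')$.

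For the rule ($\branapa^{\branbis}$): assume $q\transa q'$ and $\forall p',p''(p\transtt p'\transa p'' \wedge q'\branbis p'' \implies q\branap p')$, and suppose towards a contradiction that $\neg(q\branap p)$. To apply the original ($\branapa$) rule it suffices to show $\forall p',p''(p\transtt p'\transa p''\implies q\branap p'\vee q'\branap p'')$. So fix $p',p''$ with $p\transtt p'\transa p''$. Either $q'\branap p''$, and we are done, or $\neg(q'\branap p'')$, i.e.\ $q'\branbis p''$; in that case the hypothesis of the new rule gives $q\branap p'$, and again we are done. Hence ($\branapa$) applies and we get $q\branap p$, contradicting our assumption, so $q\branap p$ holds. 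For the rule ($\branapt^{\branbis}$) the argument is the same in spirit but must also reproduce the extra premise $q'\branap p$ of ($\branapt$), which is given directly as a hypothesis of the new rule, so nothing new is needed there; the disjunction-splitting step over $p\transtt p'\transt p''$ runs exactly as above, using $q'\branbis p'' \Leftrightarrow \neg(q'\branap p'')$ to invoke the new hypothesis when $q'$ and $p''$ are not apart.

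The only real subtlety — and the place where co-transitivity is presumably meant to be used, matching the pattern of Lemma \ref{lem.branapalt} — is whether the case split ``$q'\branap p''$ or $\neg(q'\branap p'')$'' is genuinely sufficient, or whether one needs to route through $p\branap p'$ and co-transitivity as in the proof of ($\branapaalt$). I expect that here the clean classical dichotomy on $q'\branap p''$ already closes both cases, so co-transitivity may in fact not be needed for these two rules; if a gap appears it will be exactly at reconstructing the premise $q\branap p'$ from $p\branap p'$ via co-transitivity with $\neg(q\branap p)$, precisely as in Lemma \ref{lem.branapalt}. So the proof I would write is: for each rule, assume the premises and $\neg(q\branap p)$, fix a witnessing pair along $p\transtt p'\transt p''$ (resp.\ $\transa$), split on whether $q'$ and the endpoint are apart, apply the new rule's implication in the ``bisimilar'' case and take the left disjunct of the original rule's conclusion in the ``apart'' case, conclude the premise of the original rule, apply it to get $q\branap p$, contradiction. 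The main obstacle is keeping the bookkeeping between $\branbis$ and $\neg\branap$ straight and making sure the reconstructed premises for ($\branapt$) — both $q'\branap p$ and the $\forall p',p''$ clause — are each produced; the $\tau$-case rule ($\branapt^{\branbis}$) is the one to write out in full, with ($\branapa^{\branbis}$) dispatched as ``entirely analogous, but simpler''.
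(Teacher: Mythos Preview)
Your proposal is correct and matches the paper's approach: the paper's proof is simply ``immediate from the fact that $\branap = \neg\branbis$ and Corollary~\ref{cor.branap}'', and your case split on $q'\branap p''$ versus $q'\branbis p''$ is exactly that logical unfolding. One minor point: the contradiction wrapper is unnecessary --- the assumption $\neg(q\branap p)$ is never actually used in your argument, since the case split already reconstructs the premise of the original rule directly; you correctly anticipated this when noting that co-transitivity is not needed here.
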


\begin{proof}
The proof is immediate from the fact that $\branap = \neg \branbis$ and Corollary \ref{cor.branap}.
\end{proof}

In the literature, the rules concerning  bisimulation are often depicted in a diagram for better memorization.
The two rules above can be depicted as follows.

\begin{tabular}[t]{rlcrl}
  \begin{tikzcd}[row sep=large,column sep=large]
    q\arrow[r, dash, dashed, "\apt"]\arrow[d, "\tau"'] & p\arrow[d, two heads, "\tau"] \\
    q'\arrow[ru, dash, "\apt"]\arrow[r, dash, "\apt"{name=U}]\arrow[rd, dash, "\bis"'{name=V}]& p' \arrow[d, "\tau"] \\
    & p''
    \arrow[Rightarrow,from=V,to=U]
  \end{tikzcd}
&  $\branapt^{\branbis}$
&\qquad\qquad&
  \begin{tikzcd}[row sep=large,column sep=large]
    q\arrow[r, dash, dashed, "\apt"]\arrow[d, "a"'] & p\arrow[d, two heads, "\tau"] \\
    q'\arrow[r, dash, "\apt"{name=U}]\arrow[rd, dash, "\bis"'{name=V}]& p' \arrow[d, "a"] \\
    & p''
    \arrow[Rightarrow,from=V,to=U]
  \end{tikzcd}
  & $\branapa^{\branbis}$
    \end{tabular}

\noindent On the left: Suppose that $q' \branap p$ and for all $p',p''$ with
$p\transtt p' \transt p''$, if $q' \branbis p''$, then $q \branap
p'$. Then (in dashes): $q \branap p$.

Similarly, we introduce an adapted rule for branching bisimulation.

\begin{lem}\label{lem.soundrules-bis}
  The following rule is sound for branching bisimulations. If $R$ is a
  branching bisimulation, then the following rule (and its symmetric
  variant) holds.
\[
\begin{prooftree}
  q \transt q'\qquad R(q,p)  \qquad  q'\branap p
  \justifies
\exists p',p''(p\transtt p' \transt p'' \wedge R(q,p')\wedge
  R(q',p''))
\end{prooftree}
\]
\end{lem}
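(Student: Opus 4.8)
The plan is to read the conclusion off directly from the defining rule $(\branbist)$ of a branching bisimulation, using the extra hypothesis $q' \branap p$ only to discard one of the two disjuncts that rule produces. Since $R$ is a branching bisimulation, rule $(\branbist)$ of Definition~\ref{def.weakbran} applies to the hypotheses $q \transt q'$ and $R(q,p)$, giving
\[
R(q',p) \;\vee\; \exists p',p''\,\bigl(p\transtt p' \transt p'' \wedge R(q,p')\wedge R(q',p'')\bigr).
\]
The second disjunct is exactly the desired conclusion, so the whole argument reduces to ruling out the first.

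To rule out $R(q',p)$ I would invoke Lemma~\ref{lem.largestsmallest}: item~(1) gives $R \subseteq \branbis$, so $R(q',p)$ would force $q' \branbis p$; but item~(3) states $\branap = \neg\branbis$, so the hypothesis $q' \branap p$ is precisely $\neg(q' \branbis p)$, a contradiction. Hence the left disjunct is impossible and the right one holds, which proves soundness of the rule. The symmetric variant is obtained in exactly the same way, starting from the symmetric variant of $(\branbist)$ instead.

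There is no real obstacle here: the statement is a one-step consequence of the definition of branching bisimulation together with the characterisation $\branap = \neg\branbis$. The only points needing (minor) care are that the inclusion used is $R \subseteq \branbis$ and not the reverse, and that ``$q' \branap p$'' must be read as ``$q'$ and $p$ are not branching bisimilar'', which is exactly what Lemma~\ref{lem.largestsmallest}(3) licenses.
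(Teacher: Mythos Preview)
Your argument is correct and matches the paper's proof essentially line for line: apply rule $(\branbist)$ to obtain the disjunction, then eliminate the disjunct $R(q',p)$ using $q'\branap p$. The paper phrases the elimination step as ``if $R$ is a branching bisimulation, then $q'\branap p \implies \neg R(q',p)$'', which is exactly what you establish via $R \subseteq {\branbis}$ and $\branap = \neg\branbis$; the two justifications are dual formulations of the same inclusion.
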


\begin{proof}
The proof is immediate from the definition of branching bisimulation
(Definition \ref{def.weakbran}) and the fact that, if $R$ is a
branching bisimulation, then $q'\branap p \implies \neg R(q',p)$.
\end{proof}

The notion of rooted branching bisimulation has been introduced to
recover the failure of the congruence property for branching
bisimulation \cite{GlabbeekWeijland, BaetenBastenReniers,Fokkink}. If
two labelled transition systems are branching bisimilar
and one composes them both with a third one, one expects the newly obtained LTSs to
be branching bisimilar again. But this is not the case. Of course,
this also depends on the notion of composition and therefore this
problem is usually cast in terms of process terms and a process
operator $f$, where congruence means that if $q_1 \branbis p_1$ and
$q_2\branbis p_2$, then $f(q_1,q_2) \branbis f(p_1, p_2)$. Here, the
bisimulation equivalence should be understood as being between the
interpretation of the process terms as LTSs. A main example of
non-congruence arises from the non-deterministic choice operator
$+$. A process $q+p$ can non-deterministically choose for $q$ and do
a step in $q$ or for $p$ and do a step in $p$. The LTS for $q+p$
arises from joining the LTS for $q$ with the one for $p$ at the root
node. This leads to the non-congruence exemplified
below.

\begin{exa}\label{exa:noncong}
For the two LTSs on the left we have $q_0 \branbis p_0$. If we compose
them via non-deterministic choice with an LTS that does just a
$c$-step, we get the two LTSs on the right, for which we have $\neg(q_0
\branbis p_0)$.\\
  
\begin{tabular}[t]{ccccc}
\begin{tikzpicture}[>=stealth,node distance=1.5cm,auto]
    \node    (q0)                         {$q_0$};
    \node    (q') [below of = q0]         {$q'$};
    \node    (q1) [below of = q']         {$q_1$};

    \path[->]
        (q0) edge                    node[swap] {$\tau$}  (q')
        (q') edge                    node[swap] {$a$}     (q1);
\end{tikzpicture}
&
\qquad
\begin{tikzpicture}[>=stealth,node distance=1.5cm,auto]
    \node    (p0)                         {$p_0$};
    \node    (p1) [below of = p0]         {$p_1$};

    \path[->]
        (p0) edge                    node[swap] {$a$}  (p1);
 
\end{tikzpicture}
\qquad\qquad
&
\qquad\qquad
\begin{tikzpicture}[>=stealth,node distance=1.5cm,auto]
    \node    (q0)                         {$q_0$};
    \node    (q') [below left of = q0]    {$q'$};
    \node    (q1) [below of = q']         {$q_1$};
    \node    (q2) [below right of = q0]   {$q_2$};

    \path[->]
        (q0) edge                    node[swap] {$\tau$}  (q')
             edge                    node {$c$}           (q2)
        (q') edge                    node[swap] {$a$}     (q1);
\end{tikzpicture}
&
\qquad
\begin{tikzpicture}[>=stealth,node distance=1.5cm,auto]
    \node    (p0)                         {$p_0$};
    \node    (p1) [below left of = p0]    {$p_1$};
    \node    (p2) [below right of = p0]   {$p_2$};

    \path[->]
        (p0) edge                    node[swap] {$a$}     (p1)
             edge                    node {$c$}           (p2);
\end{tikzpicture}

    \end{tabular}

For the LTSs on the right, a derivation of $q_0 \branap p_0$ is as follows.
\[
\begin{prooftree}
  q_0 \transt q' \quad
  \begin{prooftree}
    \begin{prooftree}
      p_0 \trans_c p_2
      \justifies
         p_0 \branap q'
    \end{prooftree}
    \justifies
    q' \branap p_0
  \end{prooftree}
  \quad
    \begin{prooftree}
       \mbox{trivial}
      \justifies
    \forall p',p''( p_0 \transtt p'\transt p'' \implies q_0 \branap p' \vee q'\branap p'')
    \end{prooftree}
  \justifies
  q_o\branap p_0
\end{prooftree}
\]
\end{exa}

To remedy the non-congruence, the notion of {\em rooted branching
  bisimulation\/} has been introduced.

\begin{defiC}[\cite{GlabbeekWeijland}]\label{def:rootbran}
  A relation $R$ on an LTS is a {\em rooted branching bisimulation\/} if
the following properties hold.
\begin{itemize}
\item $R$ is a branching bisimulation.
\item $R$ satisfies the rule ($\rbranbisx$), where $x$ ranges over $\Atau$,
  \[
  \begin{prooftree}
q \trans_x q' \qquad R(q,p)
    \justifies
\exists p' ( p \trans_x p' \wedge  q'\branbis  p')
    \using{\rbranbisx}
  \end{prooftree}
  \]
\end{itemize}
The states $q$ and $p$ are {\em rooted branching bisimilar}, notation
$q \rbranbis p$ if there is a rooted branching bisimulation $R$ such
that $R(q,p)$.

As a complement we define that the relation $Q$ on an LTS is a {\em rooted branching apartness\/} if
the following properties hold.
\begin{itemize}
\item $Q$ is a branching apartness.
\item $Q$ satisfies the rule ($\rbranapx$), where $x$ ranges over $\Atau$,
  \[
  \begin{prooftree}
q \trans_x q' \qquad \forall p' ( p \trans_x p' \implies  q'\branap  p')
    \justifies
Q(q,p)
    \using{\rbranapx}
  \end{prooftree}
  \]
\end{itemize}
The states $q$ and $p$ are {\em rooted branching apart}, notation
$q \rbranap p$ if for all rooted branching apartness relations $Q$, we have $Q(q,p)$.
\end{defiC}

It is easy to see that $q \rbranbis p$ if and only if $\neg(q\branap
p)$ and that $q\branap p$ if and only if this is derivable using the
rules ($\branap$), ($\branapt$) and ($\rbranapx$).

\begin{exa}\label{exa:rootbran}
As a continuation of Example \ref{exa:noncong}, we can now show that
both for the two LTSs on the right and also for the two LTSs on the
left, we have $q_0 \rbranap p_0$. For the pair on the left, the
derivation is simply as follows.
  \[
  \begin{prooftree}
    q_0 \transt q' \qquad
    \begin{prooftree}
      \mbox{trivial}
      \justifies
    \forall p' ( p_0 \transt p' \implies  q'\branap  p')
    \end{prooftree}
    \justifies
  q_0 \rbranap p_0
  \end{prooftree}
  \]
Thereby, congruence has been regained, by refining branching
bisimulation (or strengthening branching apartness).
\end{exa}

To understand how congruence is regained in general and relate that to
strong extensionality, assume we have an operation $+$ on LTSs that
represents non-deterministic choice, which means to join the
root-nodes of two LTSs. So process terms of the form $q+p$ have the
following behavior as LTS.

\[
\begin{prooftree}
p_1 \transx p_1'
  \justifies
  p_1 + p_2 \transx p_1'
\end{prooftree}
\qquad 
\begin{prooftree}
p_2 \transx p_2'
  \justifies
  p_1 + p_2 \transx p_2'
\end{prooftree}
\]

The interpretation of $q+p$ as an LTS should be understood as the union
of the LTSs arising from $q$ and $p$ that are joined at their root
node. That rooted branching bisimulation is a congruence for the
$+$ operator means that, if $q_1 \rbranbis p_1$ and $q_2 \rbranbis
p_2$, then $(q_1 + q_2) \rbranbis (p_1 + p_2)$. We will look into this
property from the ``apartness side'', where we want $+$ to be strongly
extensional for $\rbranap$: if $(q_1 + q_2) \rbranap (p_1 + p_2)$,
then $q_1 \rbranap p_1 \vee q_2 \rbranap p_2$. We can prove strong
extensionality of $+$ as follows. Assume that $(q_1 + q_2) \rbranap
(p_1 + p_2)$. This means that there is a state $r$ and $x\in\Atau$
with $(q_1 + q_2) \transx r$ and $\forall r' ( (p_1+p_2) \transx r'
\implies r\branap r')$.  The transition $(q_1 + q_2) \transx r$ either
arises from $q_1 \transx r$ or from $q_2\transx r$. In case $q_1
\transx r$ we have the derivation below. In case $q_2\transx r$ we
have a symmetric situation that we don't depict here.
\[
\begin{prooftree}
  \begin{prooftree}
q_1 \transx r
  \justifies
  (q_1 + q_2) \transx r
\end{prooftree}
  \qquad
  \forall r' ( (p_1+p_2) \transx r' \implies r\branap r')
  \justifies
  (q_1 + q_2) \rbranap (p_1 + p_2)
\end{prooftree}
\]
So we have $\forall r' ( (p_1+p_2) \transx r' \implies r\branap r')$,
which implies $\forall r' ( p_1 \transx r' \implies r\branap r')$,
which, together with $q_1 \transx r$, using rule ($\rbranapx$) gives
$q_1 \rbranap p_1$. In case $q_2\transx r$ we
have a symmetric situation, so we can conclude $q_1 \rbranap p_1 \vee q_2 \rbranap p_2$.

\section{Coalgebras and lifting}\label{CoalgLiftSec}

This section recalls some basic facts about the description of
bisimulations on coalgebras in terms of lifting of the functor from
sets to relations.

We write $\Rel$ for the category of binary relations $R \subseteq
X\times X$. A morphism $f \colon (R\subseteq X\times X) \rightarrow
(S\subseteq Y\times Y)$ in $\Rel$ is a function $f \colon X
\rightarrow Y$ between the underlying sets satisfying $(x_{1},x_{2})
\in R \Longrightarrow (f(x_{1}), f(x_{2})) \in S$. This can
equivalently be expressed via the existence of a function $f'$ in:
\begin{equation}
\label{RelMapDiag}
\vcenter{\xymatrix@R-0.5pc{
R\ar@{^(->}[d]_{\tuple{r_{1},r_{2}}}\ar@{-->}[rr]^-{f'} & & 
   S\ar@{^(->}[d]^{\tuple{s_{1},s_{2}}}
\\
X\times X\ar[rr]^-{f\times f} & & Y\times Y
}}
\end{equation}

\noindent We can also describe this situation via an inclusion $R
\subseteq (f\times f)^{-1}(S)$, where $(f\times f)^{-1}(S) =
\set{(x_{1}, x_{2})}{(f(x_{1}), f(x_{2})) \in S}$. There is an obvious
functor $\Rel \rightarrow \Set$ which sends a relation $R\subseteq
X\times X$ to its underlying set $X$. The poset $\Pow(X\times X)$ of
relations on a set $X$ is often called the ``fibre over $X$'', since it
mapped by this functor to $X$.

In this setting we restrict ourselves to (endo)functors $F \colon \Set
\rightarrow \Set$, with associated category $\CoAlg(F)$ of coalgebras.
There is a standard way to ``lift'' such a functor $F$ from $\Set$ to
$\Rel$ in a commuting diagram, as on the left below.
\begin{equation}
\label{RelliftDiag}
\vcenter{\xymatrix@R-0.5pc{
\Rel\ar[rr]^-{\rellift(F)}\ar[d] & & \Rel\ar[d]
& \hspace*{4em} & 
\Rel\ar[rr]^-{\rellift(F)} & & \Rel
\\
\Set\ar[rr]^-{F} & & \Set
& &
\Set\ar[rr]^-{F}\ar[u]^{\Eq} & & \Set\ar[u]_{\Eq}
}}
\end{equation}

\noindent For $R\subseteq X\times X$ one obtains $\rellift(F)(R) \subseteq
F(X)\times F(X)$ via:
\begin{equation}
\label{RelliftEqn}
\begin{array}{rcl}
\rellift(F)(R)
& \coloneqq &
\set{(u_{1},u_{2})}{\exin{w}{F(R)}{F(r_{1})(w) = u_{1}
   \mbox{ and } F(r_{2})(w) = u_{2}}}.
\end{array}
\end{equation}

\noindent Here we write the inclusion map $R \hookrightarrow X\times
X$ as a pair $\tuple{r_{1}, r_{2}} \colon R \rightarrow X\times X$. 

It is not hard to see that $\rellift(F)$ is functorial: for a morphism
$f \colon (R\subseteq X\times X) \rightarrow (S\subseteq Y\times Y)$
in $\Rel$ there is a (unique) map $f'\colon R \rightarrow S$ as
in~\eqref{RelMapDiag}. We show that $F(f)$ is a morphism
$\rellift(f)(R) \rightarrow \rellift(F)(S)$. So let $(u_{1},u_{2}) \in
\rellift(F)(R)$, say via $w\in F(R)$ with $F(r_{i})(w) = u_{i}$.  Then
$w' \coloneqq F(f')(w) \in F(S)$ satisfies $F(s_{i})(w') = F(s_{i}
\after f')(w) = F(f \after r_{i})(w) = F(f)(u_{i})$. This shows that
the pair $(F(f)(u_{1}), F(f)(u_{2}))$ is in $\rellift(F)(S)$.

A \emph{bisimulation} for a coalgebra $c\colon X \rightarrow F(X)$ is
a relation $R\subseteq X\times X$ for which $c$ is a map in the
category $\Rel$ of the form $c\colon R \rightarrow \rellift(F)(R)$.
Thus we may consider the category of coalgebras
$\CoAlg\big(\rellift(F)\big)$ as the category of bisimulations --- for
$F$-coalgebras.

\begin{lem}
\label{RelliftEqLem}
Relation lifting commutes with equality, as in the diagram on the
right in~\eqref{RelliftDiag}, where $\Eq(X) = \set{(x,x)}{x\in X}
\subseteq X\times X$.
\end{lem}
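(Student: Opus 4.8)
The plan is to show the two inclusions $\Eq(F(X)) \subseteq \rellift(F)(\Eq(X))$ and $\rellift(F)(\Eq(X)) \subseteq \Eq(F(X))$ by unfolding the defining formula \eqref{RelliftEqn}. Throughout, write the inclusion $\Eq(X) \hookrightarrow X\times X$ as the pair $\tuple{e_1,e_2}$; since $\Eq(X) = \set{(x,x)}{x\in X}$, this relation is canonically isomorphic to $X$ itself via the diagonal $\langle\idmap,\idmap\rangle\colon X \to \Eq(X)$, and under this identification $e_1 = e_2 = \idmap_X$ (composed with the iso). Consequently $F(\Eq(X)) \cong F(X)$ and $F(e_1) = F(e_2)$ is (up to this iso) the identity on $F(X)$.

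For the inclusion $\Eq(F(X)) \subseteq \rellift(F)(\Eq(X))$, take $(u,u) \in \Eq(F(X))$, so $u \in F(X)$. Using the iso $F(X) \cong F(\Eq(X))$, pick the corresponding witness $w \in F(\Eq(X))$; then $F(e_1)(w) = u = F(e_2)(w)$, so $(u,u)$ lies in $\rellift(F)(\Eq(X))$ by \eqref{RelliftEqn}. Conversely, suppose $(u_1,u_2) \in \rellift(F)(\Eq(X))$, witnessed by $w \in F(\Eq(X))$ with $F(e_1)(w) = u_1$ and $F(e_2)(w) = u_2$. Since $e_1 = e_2$ as maps $\Eq(X) \to X$ (both being the first, equivalently second, projection restricted to the diagonal), we get $u_1 = F(e_1)(w) = F(e_2)(w) = u_2$, hence $(u_1,u_2) \in \Eq(F(X))$. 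This establishes the equality $\rellift(F)(\Eq(X)) = \Eq(F(X))$, which is exactly the commutation of the right-hand square in \eqref{RelliftDiag}.

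The one genuine point to be careful about — the ``main obstacle'', though it is minor — is the bookkeeping with the canonical isomorphism $\Eq(X) \cong X$: strictly, $\Eq(X)$ is defined as a subset of $X\times X$, so $e_1,e_2$ are the two projections $\Eq(X)\to X$, and the key observation making the argument work is simply that these two projections agree on $\Eq(X)$ (since every element has the form $(x,x)$). Once that is noted, $F$ applied to the single map $e_1 = e_2$ gives a single map $F(\Eq(X)) \to F(X)$, which is moreover a bijection because $\langle\idmap,\idmap\rangle$ and $e_1$ are mutually inverse bijections between $X$ and $\Eq(X)$ and $F$ preserves bijections (being a functor on $\Set$). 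The surjectivity of this map supplies the witness needed for the first inclusion, and the fact that it is the \emph{same} map for both components forces $u_1 = u_2$ in the second. No further structure of $F$ is used.
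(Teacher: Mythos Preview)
Your proof is correct and follows essentially the same approach as the paper: both identify the key point that the two projections $e_1,e_2\colon \Eq(X)\to X$ coincide, use this to get $u_1=u_2$ for one inclusion, and use the isomorphism $\Eq(X)\cong X$ (applied under $F$) to produce the witness $w$ for the other inclusion. The paper's proof is a bit terser but otherwise identical in structure.
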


\begin{myproof}
There is an obvious isomorphism $\varphi$ in:
\[ \xymatrix@R-0.5pc{
\Eq(X)\ar@/_2ex/@{^(->}[dr]_-{\tuple{r_{1},r_{2}}}\ar[rr]_-{\cong}^-{\varphi} 
   & & X\ar@/^2ex/[dl]^-{\tuple{\idmap,\idmap}}
\\
& X &
} \]

\noindent In particular, $r_{1} = r_{2}$.

Hence if $(u_{1},u_{2})\in\rellift(F)(\Eq(X))$, say via $w\in
F(\Eq(X))$ with $F(r_{i}) = u_{i}$, then $u_{1} = u_{2}$ since $r_{1}
= r_{2}$. In the other direction, for $u\in F(X)$ we have $(u,u) \in
\rellift(F)(\Eq(X))$ via $w = F(\varphi^{-1})(u) \in F(\Eq(X))$ that
satisfies $F(r_{i})(w) = F(\idmap)(u) = u$. \QED
\end{myproof}

For (Kripke) polynomial functors the generic form of relation
lifting~\eqref{RelliftEqn} specializes to well-known formulas,
see~\cite{Jacobs16} for details.

\begin{lem}
\label{RelliftPolynomialLem}
Relation lifting satisfies:
\begin{enumerate}
\item $\rellift(\idmap) = \idmap$;

\item $\rellift(A) = \Eq(A)$, where $A$ on the left is the
  constant-$A$ functor;

\item $\rellift(F_{1}\times F_{2})(R) = 
   \set{(u,v)}{(\pi_{1}u,\pi_{1}v)\in\rellift(F_{1})(R) \mbox{ and }
    (\pi_{2}u,\pi_{2}v)\in\rellift(F_{1})(R)}$;

\item $\rellift(F_{1} + F_{2})(R) = 
   \set{(\kappa_{1}u,\kappa_{1}v)}{(u,v)\in\rellift(F_{1})(R)} \,\cup\, 
   \set{(\kappa_{2}u,\kappa_{2}v)}{(u,v)\in\rellift(F_{2})(R)}$;

\item $\rellift(F^{A})(R) = \set{(f,g)}{\allin{a}{A}{(f(a),g(a)) \in
   \rellift(F)(R)}}$;

\item $\rellift(\Pow)(R) =
  \set{(U,V)}{\allin{x}{U}{\exin{y}{V}{(x,y)\in R}} \mbox{ and }
    \allin{y}{V}{\exin{x}{U}{(x,y)\in R}}}$. \QED
\end{enumerate}
\end{lem}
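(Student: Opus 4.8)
The plan is to prove each of the six clauses by directly unfolding the definition of relation lifting in~\eqref{RelliftEqn}, using in every case the standard description of how the functor at hand acts on objects and on morphisms. Throughout, I write $\tuple{r_1,r_2}\colon R\hookrightarrow X\times X$ for the inclusion, so that a witness $w\in F(R)$ contributes precisely the pair $(F(r_1)(w),F(r_2)(w))$ to $\rellift(F)(R)$. For clause~(1), the identity functor sends $R$ to $R$ and $r_i$ to $r_i$, so $\rellift(\idmap)(R)=\set{(r_1(w),r_2(w))}{w\in R}=R$ because $\tuple{r_1,r_2}$ is the inclusion. For clause~(2), the constant-$A$ functor sends every object to $A$ and every morphism to $\idmap_A$, so the witnesses range over $A$ and $\rellift(A)(R)=\set{(w,w)}{w\in A}=\Eq(A)$; both of these can, if one prefers, be phrased via evident isomorphisms in the style of the proof of Lemma~\ref{RelliftEqLem}.

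For clause~(3) one uses $(F_1\times F_2)(R)=F_1(R)\times F_2(R)$ and that $(F_1\times F_2)(r_i)=F_1(r_i)\times F_2(r_i)$ acts componentwise. A witness for a pair $(u,v)$ is then a pair $(w_1,w_2)$ with $w_j\in F_j(R)$, and the two coordinates decouple: $w_1$ witnesses $(\pi_1 u,\pi_1 v)\in\rellift(F_1)(R)$ and $w_2$ witnesses $(\pi_2 u,\pi_2 v)\in\rellift(F_2)(R)$, chosen independently, while conversely a pair of such witnesses assembles into one. For clause~(4), $(F_1+F_2)(R)=F_1(R)+F_2(R)$ and $(F_1+F_2)(r_i)$ acts summand-wise, in particular it preserves the coproduct injections $\kappa_1,\kappa_2$. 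Hence a single common witness $w\in F_1(R)+F_2(R)$ for a pair forces both components of the pair to lie in the same summand $\kappa_j(-)$, and then $w$ (with $\kappa_j$ stripped off) is exactly a witness in $\rellift(F_j)(R)$; the converse is immediate by applying $\kappa_j$.

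For clause~(5), using $F^A(R)=F(R)^A$ and that $F^A(r_i)$ is post-composition with $F(r_i)$, a witness for $(f,g)$ is a function $w\colon A\to F(R)$ with $F(r_1)(w(a))=f(a)$ and $F(r_2)(w(a))=g(a)$ for all $a\in A$, i.e.\ $w(a)$ witnesses $(f(a),g(a))\in\rellift(F)(R)$ pointwise; the forward inclusion is then trivial, and for the reverse one must glue the pointwise witnesses into one function $w$, which is exactly where the axiom of choice is used (legitimate here, since we reason classically). For clause~(6), with $\Pow(R)=\set{W}{W\subseteq R}$ and $\Pow(r_i)$ the direct image $W\mapsto r_i[W]$, a witness for $(U,V)$ is a subset $W\subseteq R$ with $r_1[W]=U$ and $r_2[W]=V$; reading off such a $W$ immediately yields the two back-and-forth conditions, and conversely the canonical witness $W\coloneqq R\cap(U\times V)$ works precisely because ``$\allin{x}{U}{\exin{y}{V}{(x,y)\in R}}$'' and its mirror image say exactly that $r_1[W]=U$ and $r_2[W]=V$.

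I expect the only genuinely content-bearing steps to be the last two: in clause~(5) one has to notice that the pointwise-to-global passage requires a choice function, and in clause~(6) one has to hit on the right witness $R\cap(U\times V)$; all the other clauses are bookkeeping once the action of the functors on morphisms has been spelled out. Full details are in~\cite{Jacobs16}.
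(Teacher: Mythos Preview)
Your proposal is correct and is exactly the standard unfolding one would expect; the paper itself does not give a proof of this lemma at all---it states the result with a trailing \QED{} and simply refers the reader to~\cite{Jacobs16} for details. Your argument matches what is found there, including the use of choice in clause~(5) and the canonical witness $R\cap(U\times V)$ in clause~(6).
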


\begin{exa}
\label{BisimEx}
We shall elaborate what it means that a coalgebra $c \colon Y
\rightarrow F(Y)$ also forms a coalgebra $c\colon R \rightarrow
\rellift(F)(R)$, for a relation $R\subseteq Y\times Y$. We shall do so
for three different instantiations of the functor $F$. This gives
uniform description of the same examples used in
Section~\ref{sec:bisapt}.
\begin{enumerate}
\item Let $F$ be the functor $F(X) = A\times X$ for streams, as in
  Definition~\ref{def.streams}. We can then describe the coalgebra $c
  \colon Y \rightarrow A\times Y$ as a pair $c = \tuple{h,t}$ for a
  ``head'' function $h\colon Y \rightarrow A$ and a ``tail'' function
  $t\colon Y \rightarrow Y$. The lifting is:
\[ \begin{array}{rcl}
\rellift(F)(R)
& = &
\set{((a_{1},y_{1}), (a_{2},y_{2}))}{a_{1} = a_{2} \mbox{ and }
   R(y_{1},y_{2})}
\\
& = &
\set{((a,y_{1}), (a,y_{2}))}{R(y_{1},y_{2})}.
\end{array} \]

\noindent Thus, $\tuple{h,t}$ being a map $R \rightarrow \rellift(F)(R)$
in $\Rel$ corresponds to the usual definition of bisimulation:
\[ R\big(x_{1},x_{2}\big)
\;\Longrightarrow\;
h(x_{1}) = h(x_{2}) \mbox{ and } R\big(t(x_{1}), t(x_{2})\big). \]

\item For deterministic automata one uses the functor $F(X) = X^{A}
  \times 2$, where $2 = \{0,1\}$, as in Definition~\ref{def.DFA}. A
  coalgebra is again a tuple $c = \tuple{c_{1},c_{2}} \colon Y
  \rightarrow Y^{A} \times 2$, with the following standard notation:
\[ \begin{array}{rclcrcl}
y \trans_{a} y'
& \Leftrightarrow &
c_{1}(y)(a) = y'
& \qquad\mbox{and}\qquad &
y\downarrow
& \Leftrightarrow &
c_{2}(y) = 0.
\end{array} \]

\noindent The notation $y\downarrow$ denotes that $y$ is a final
state. Having a coalgebra $\tuple{c_{1},c_{2}} \colon R \rightarrow
\rellift(F)(R)$ now means:
\[ R(x_{1},x_{2})
\;\Longrightarrow\;
\left\{\begin{array}{l}
x_{1}\downarrow \mbox{ iff } x_{2}\downarrow
\\
\quad\mbox{and}
\\
\all{a_{1},a_{2}}{\all{y_{1},y_{2}}{x_{1} \trans_{a_1} y_{1} \;\&\; 
   x_{2} \trans_{a_2} y_{2} \Rightarrow a_{1} = a_{2} \;\&\; R(y_{1},y_{2})}}.
\end{array}\right. \]

\item We now use a functor $F(X) = \Pow\big(X + X\times A\times
  X\big)$ and investigate the associated form of bisimulation. We show
  that it resembles the formulation that we have seen earlier for
  weaker forms of bisimulation.

So lets start with a coalgebra $c \colon Y \rightarrow \Pow\big(Y +
Y\times A\times Y\big)$ and write:
\[ \begin{array}{rclcrcl}
x \trans_{1} y
& \Leftrightarrow &
\kappa_{1}y\in c(x)
& \qquad\mbox{and}\qquad &
x \trans_{2} y \trans_{a} z
& \Leftrightarrow &
\kappa_{2}(y,a,z) \in c(x).
\end{array} \]

\noindent Here we regard $\trans_{1}$ and $\trans_{2}$ as two
different forms of silent steps. There is a coalgebra $c \colon R
\rightarrow \rellift(F)(R)$ when:
\[ R(x_{1},x_{2})
\;\Longrightarrow\;
\left\{\begin{array}{l}
x_{1} \trans_{1} y_{1} \Longrightarrow \ex{y_2}{x_{2} \trans_{1} y_{2}}, 
  \quad\mbox{and}
\\
x_{1} \trans_{2} y_{1} \trans_{a} z_{1} \Longrightarrow
  \ex{y_{2}, z_{2}}{x_{2} \trans_{2} y_{2} \trans_{a} z_{2}}
\\
\quad\mbox{and}
\\
x_{2} \trans_{1} y_{2} \Longrightarrow \ex{y_1}{x_{1} \trans_{1} y_{1}}, 
  \quad\mbox{and}
\\
x_{2} \trans_{2} y_{2} \trans_{a} z_{2} \Longrightarrow
  \ex{y_{1}, z_{1}}{x_{1} \trans_{2} y_{1} \trans_{a} z_{1}}.
\end{array}\right. \]

\noindent These clauses show that with this categorical way of
capturing silent steps (via $\trans_{1}$ and $\trans_{2}$) means that
the numbers of silent steps in both coordinates must be equal. This
differs in an important way from weak/branching bisimulation where a
single silent step on one coordinate can be mimicked by multiple (zero
or more) silent steps in the other coordinate. We refer to the
literature for different ways of bringing the two approaches closer
togeter~\cite{SokolovaVinkWoracek,BeoharKupper,Brengos15,BrengosMP15,GoncharovP14}.
\end{enumerate}
\end{exa}

The next result gives a concrete description of what is captured
abstractly in~\cite{HermidaJ98}.

\begin{prop}
\label{EqFunProp}
The equality functor $\Eq \colon \Set \rightarrow \Rel$ restricts to
$\Eq \colon \CoAlg(F) \rightarrow \CoAlg(\rellift(F))$, for each
functor $F$, and preserves final coalgebras. This implies that
bisimilar elements become equal when mapped to the final coalgebra.
\end{prop}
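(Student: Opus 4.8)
### Proof proposal for Proposition \ref{EqFunProp}

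The plan is to verify three things: (i) that $\Eq$ sends an $F$-coalgebra to an $\rellift(F)$-coalgebra, i.e.\ that it lands in $\CoAlg(\rellift(F))$; (ii) that it is functorial on coalgebra morphisms; and (iii) that it preserves finality. The statement about bisimilar elements then follows formally.

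For (i), take a coalgebra $c \colon X \rightarrow F(X)$. We must produce a morphism in $\Rel$ of the form $\Eq(c) \colon \Eq(X) \rightarrow \rellift(F)(\Eq(X))$ lying over the function $c$. By Lemma \ref{RelliftEqLem}, $\rellift(F)(\Eq(X)) = \Eq(F(X))$, so what is required is precisely that $c \colon X \rightarrow F(X)$ be a morphism $\Eq(X) \rightarrow \Eq(F(X))$ in $\Rel$ — and this holds for \emph{any} function, since $(x,x)\in\Eq(X)$ trivially implies $(c(x),c(x))\in\Eq(F(X))$. So $\Eq(c)$ is a well-defined $\rellift(F)$-coalgebra on the set $X$. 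For (ii), if $f \colon (X,c) \rightarrow (Y,d)$ is an $F$-coalgebra morphism, then in particular $f \colon X \rightarrow Y$, so $f$ is a morphism $\Eq(X)\rightarrow\Eq(Y)$ in $\Rel$; that it commutes with the $\rellift(F)$-coalgebra structures $\Eq(c)$ and $\Eq(d)$ is the same square as the one witnessing that $f$ is an $F$-coalgebra morphism, just read in $\Rel$ over $\Set$. Hence $\Eq$ restricts to a functor $\CoAlg(F)\rightarrow\CoAlg(\rellift(F))$.

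For (iii), let $\zeta \colon Z \rightarrow F(Z)$ be the final $F$-coalgebra. I claim $\Eq(\zeta) \colon \Eq(Z) \rightarrow \rellift(F)(\Eq(Z))$ is final in $\CoAlg(\rellift(F))$. Given any $\rellift(F)$-coalgebra $\gamma \colon R \rightarrow \rellift(F)(R)$ with $R\subseteq X\times X$, composing with the forgetful functor $\Rel\rightarrow\Set$ gives an $F$-coalgebra on $X$ (the underlying function of $\gamma$), so by finality there is a unique $F$-coalgebra morphism $h \colon X \rightarrow Z$. Then $h$ is automatically a morphism $R \rightarrow \Eq(Z)$ in $\Rel$: if $(x_1,x_2)\in R$, applying the lifted structure map $\gamma$ and using that $h$ identifies both coordinates in the limit — more directly, since $(x_1,x_2)\in R$ and $\gamma$ is a $\rellift(F)$-coalgebra, the two elements $x_1,x_2$ are bisimilar, hence $h(x_1) = h(x_2)$ by the usual coinduction argument, i.e.\ $(h(x_1),h(x_2))\in\Eq(Z)$. (Alternatively one observes directly that the diagonal $X \rightarrow X\times X$ precomposed with $h\times h$ factors through $\Eq(Z)$.) Uniqueness of $h$ as an $\rellift(F)$-coalgebra morphism follows from uniqueness as an $F$-coalgebra morphism, since the forgetful functor $\CoAlg(\rellift(F))\rightarrow\CoAlg(F)$ is faithful. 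Thus $\Eq(\zeta)$ is final. Finally, ``bisimilar elements become equal in the final coalgebra'': if $x_1,x_2\in X$ are bisimilar via a bisimulation $R$ (a coalgebra $c\colon R\rightarrow\rellift(F)(R)$) with $(x_1,x_2)\in R$, the unique map $h$ to the final coalgebra is a morphism $R\rightarrow\Eq(Z)$ by the above, so $(h(x_1),h(x_2))\in\Eq(Z)$, i.e.\ $h(x_1) = h(x_2)$.

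The only genuinely non-routine point is (iii), and within it the step showing the mediating map $h$ respects the relations. The clean way to see it is exactly the one just used: the claim $(h(x_1),h(x_2))\in\Eq(Z)$ \emph{is} the coinduction principle, which we are simultaneously re-deriving; to avoid circularity one should phrase it via the factorization in diagram~\eqref{RelMapDiag} — $R\subseteq (h\times h)^{-1}(\Eq(Z))$ holds iff $h\circ r_1 = h\circ r_2$, and both of these are $F$-coalgebra morphisms $R\rightarrow Z$ (where $R$ carries the coalgebra underlying $\gamma$), hence equal by finality. That identity $h\circ r_1 = h\circ r_2$ is the heart of the argument; everything else is bookkeeping about forgetful functors between coalgebra categories.
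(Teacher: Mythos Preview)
Your argument is correct and takes a recognizably different route from the paper's for part~(iii). The paper works with the \emph{quotient} $X/R$: since $q_R \colon R \rightarrow \Eq(X/R)$ is a map in $\Rel$, functoriality of $\rellift(F)$ together with Lemma~\ref{RelliftEqLem} shows that $F(q_R)\circ c$ coequalizes the two projections of $R$, so $c$ descends to a coalgebra $c/R$ on $X/R$; the unique map $g\colon X\rightarrow Z$ then factors through $X/R$, giving $g(x_1)=g(x_2)$ for $(x_1,x_2)\in R$. You instead equip the \emph{set} $R$ with an $F$-coalgebra structure making the projections $r_1,r_2$ into $F$-coalgebra morphisms, and conclude $h\circ r_1 = h\circ r_2$ by finality of $Z$. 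This is the Aczel--Mendler span picture of bisimulation, as opposed to the paper's quotient picture; both are standard and yield the same conclusion.

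One point to tighten: the phrase ``the coalgebra underlying $\gamma$'' on the set $R$ is not quite accurate. The forgetful functor $\Rel\rightarrow\Set$ sends $\gamma$ to the function $c\colon X\rightarrow F(X)$, not to anything with domain $R$. What your argument actually needs is a \emph{chosen} lift $\tilde c\colon R\rightarrow F(R)$ of the map $(c\times c)\restriction_R \colon R \rightarrow \rellift(F)(R)$ along the surjection $\langle F(r_1),F(r_2)\rangle\colon F(R)\twoheadrightarrow\rellift(F)(R)$. Such a lift exists by the very definition~\eqref{RelliftEqn} of $\rellift(F)(R)$ (using the axiom of choice in $\Set$), and with it each $r_i$ becomes an $F$-coalgebra morphism $(R,\tilde c)\rightarrow(X,c)$, after which your finality argument goes through. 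The paper's quotient approach sidesteps this choice entirely, which is its conceptual advantage; your span approach is more direct and is essentially the textbook form of the coinduction principle.
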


\begin{myproof}
Let $c \colon X \rightarrow F(X)$ be an arbitrary coalgebra. Applying
the equality functor to it yields a $\rellift(F)$-coalgebra:
\[ \xymatrix@C+1pc{
\Eq(X)\ar[r]^-{c} & \Eq(F(X)) = \rellift(F)(\Eq(X)).
} \]

\noindent Now let $\zeta \colon Z \stackrel{\cong}{\longrightarrow}
F(Z)$ be the final $F$-coalgebra. Let $c \colon R \rightarrow
\rellift(F)(R)$ be a coalgebra, so that $R$ is a bisimulation on $c
\colon X \rightarrow F(X)$. We write $X/R$ for the quotient of $X$
under (the least equivalence relation containing) $R$, with quotient
map $q_{R} \colon X \rightarrow X/R$. Then $q_{R} \colon R
\hookrightarrow \Eq(X/R)$ in $\Rel$, giving:
\[ \xymatrix@C+1pc{
R\ar[r]^-{c} & \rellift(F)(R)\ar[r]^-{F(q_{R})} &
   \rellift(F)\big(\Eq(X/R)\big) = \Eq(F(X/R)).
} \]

\noindent This means that there is a unique coalgebra $c/R \colon X/R
\rightarrow F(X/R)$ with $c/R \after q_{R} = F(q_{R}) \after c$. As a
result, the unique coalgebra homomorphism $g$ from $c$ to the final coalgebra
$\zeta$ factors as:
\[ \xymatrix{
F(X)\ar[rr]^-{F(q_{R})} & & F(X/R)\ar@{-->}[rr]^-{F(f)} & & F(Z)
\\
X\ar[u]^{c}\ar[rr]^-{q_R}\ar@/_3ex/[rrrr]_-{g} & & 
   X/R\ar[u]^{c/R}\ar@{-->}[rr]^-{f} & & Z\ar[u]_-{\cong}^-{\zeta}
} \]

\noindent Then, for $(x,x') \in R$, $q_{R}(x) = q_{R}(x')$, and
therefore $g(x) = g(x')$. This means that $g$ is a
$\rellift(F)$-coalgebra homomorphism $R \rightarrow \Eq(Z)$. By
finality of $\zeta \colon Z \stackrel{\cong}{\longrightarrow} F(Z)$
there can be at most one such homomorphism. \QED
\end{myproof}

We conclude this section with the following observation. As we have seen,
$R$ is a bisimulation for a coalgebra $c$ when:
\[ \begin{array}{rcl}
R
& \subseteq &
(c\times c)^{-1}\big(\rellift(F)(R)\big).
\end{array} \]

\noindent Thus, the bisimilarity $\bis^{c}$ --- that is, the greatest
bisimulation on $c$ --- can be obtained as the greatest post-fixed
point (final coalgebra) of the monotone operator $R \longmapsto
(c\times c)^{-1}\big(\rellift(F)(R)\big)$.

\section{Apartness}\label{ApartnesSec}

The opposite $\op{\Rel}$ of the category of relations contains
relations as objects with reversed arrows. We are going to use a
different category $\fop{\Rel}$ which is the ``fibred opposite'',
where the order relations in the fibres are reversed. This is an
instance of a more general
construction~\cite[Defn.~1.10.11]{Jacobs99}.

\begin{defi}
\label{FopDef}
The category $\fop{\Rel}$ has binary relations as objects. A morphism
$f\colon (R\subseteq X\times X) \rightarrow (S\subseteq Y\times Y)$ is
a function $f\colon X\rightarrow Y$ satisfying $R \supseteq (f\times
f)^{-1}(S)$. This means that $(f(x_{1}),f(x_{2})) \in S$ implies
$(x_{1},x_{2})\in R$. There is an obvious forgetful functor
$\fop{\Rel} \rightarrow \Set$, given by $(R\subseteq X\times X)
\mapsto X$.

\begin{enumerate}
\item We shall write $\neg \colon \Rel \rightarrow \fop{\Rel}$ for the
  negation functor, where $\neg R = \set{(x,x')}{(x,x')\not\in R}$.
  On morphisms we have $\neg(f) = f$.

\item We write $\nEq \coloneqq \neg \after \Eq \colon \Set
  \rightarrow \fop{\Rel}$ for the inequality functor, sending a set
  $X$ to $\nEq(X) = \set{(x,x')}{x\neq x'} \subseteq X\times X$.

\item For a functor $F\colon \Set \rightarrow \Set$ we define
  `opposite relation lifting' $\foprellift(F) \coloneqq \neg \after
  \rellift(F) \after \neg \colon \fop{\Rel} \rightarrow
  \fop{\Rel}$. Then we have a situation:
\[ \xymatrix@R-1pc@C-1pc{
& & \fop{\Rel}\ar[rrrrrr]^-{\foprellift(F)}\ar[ddl]|(0.52){\hole} 
   & & & & & & \fop{\Rel}\ar[ddl]
\\
\Rel\ar[urr]^{\neg}\ar[rrrrrr]^(0.6){\rellift(F)}\ar[dr] & & & & & &
   \Rel\ar[urr]_-{\neg}\ar[dr]
\\
& \Set\ar[rrrrrr]^-{F} & & & & & & \Set
} \]
\end{enumerate}
\end{defi}

It is easy to see that $\fop{\big(\fop{\Rel}\big)} = \Rel$ and that
$\neg$ is a functor $\Rel \rightarrow \fop{\Rel}$: if $R \subseteq
(f\times f)^{-1}(S)$ then:
\[ \begin{array}{rcccl}
\neg R
& \supseteq &
\neg(f\times f)^{-1}(S)
& = &
(f\times f)^{-1}(\neg S).
\end{array} \]

\noindent Moreover, inequality is a functor since for $f\colon X
\rightarrow Y$ one has $f(x) \neq f(x') \Longrightarrow x\neq x'$,
that is, $\nEq(X) \supseteq (f\times f)^{-1}(\nEq(Y))$. Similarly,
$\neg$ can be described as a functor $\neg \colon \fop{\Rel}
\rightarrow \Rel$. Clearly, the composite $\neg \after \neg \colon
\Rel \rightarrow \fop{\Rel} \rightarrow \Rel$ is the identity functor,
and similarly for $\neg \after \neg \colon \fop{\Rel} \rightarrow \Rel
\rightarrow \fop{\Rel}$.

Since
$\rellift(F)$ commutes with equality, see Lemma~\ref{RelliftEqLem},
the opposite relation lifting $\foprellift(F)$ commutes with
inequality:
\[ \begin{array}{rcl}
\foprellift(F)\big(\nEq(X)\big)
\hspace*{\arraycolsep}=\hspace*{\arraycolsep}
\neg\rellift(F)\big(\neg\neg\Eq(X)\big)
& = &
\neg\rellift(F)\big(\Eq(X)\big)
\\
& = &
\neg \Eq(F(X))
\hspace*{\arraycolsep}=\hspace*{\arraycolsep}
\nEq(F(X)).
\end{array} \]

We have the following analogue of Lemma~\ref{RelliftPolynomialLem} for
opposite relation lifting.

\begin{lem}
\label{FopRelliftPolynomialLem}
Relation lifting satisfies:
\begin{enumerate}
\item $\foprellift(\idmap) = \idmap$;

\item $\foprellift(A) = \nEq(A)$;

\item $\foprellift(F_{1}\times F_{2})(R) = 
   \set{(u,v)}{(\pi_{1}u,\pi_{1}v)\in\foprellift(F_{1})(R) \mbox{ or }
    (\pi_{2}u,\pi_{2}v)\in\foprellift(F_{1})(R)}$;

\item $\begin{array}[t]{rcl}\foprellift(F_{1} + F_{2})(R)
   & = &
   \set{(\kappa_{1}u, \kappa_{2}v)}{u\in F_{1}(X), v\in F_{2}(X)} \;\cup
\\
& &
   \set{(\kappa_{1}u, \kappa_{2}v)}{u\in F_{1}(X), v\in F_{2}(X)} \;\cup\
\\
& &
   \set{(\kappa_{1}u,\kappa_{2}v)}{(u,v)\in\foprellift(F_{1})(R)} \;\cup
\\
& &
   \set{(\kappa_{1}u,\kappa_{2}v)}{(u,v)\in\foprellift(F_{2})(R)}
\\
& = &
\{(a,b) \mid \all{u,v}{\begin{array}[t]{l}
   a = \kappa_{1}u, b = \kappa_{1}v \Rightarrow (u,v)\in\foprellift(F_{1})(R)
   \mbox{ and } \\
   a = \kappa_{2}u, b = \kappa_{2}v \Rightarrow (u,v)\in\foprellift(F_{2})(R)\}
   \end{array}}
\end{array}$

\item $\foprellift(F^{A})(R) = \set{(f,g)}{\exin{a}{A}{(f(a),g(a)) \in
   \foprellift(F)(R)}}$;

\item $\foprellift(\Pow)(R) =
  \set{(U,V)}{\exin{x}{U}{\allin{y}{V}{(x,y)\in R}} \mbox{ and }
    \exin{y}{V}{\allin{x}{U}{(x,y)\in R}}}$. \QED
\end{enumerate}
\end{lem}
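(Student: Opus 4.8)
The plan is to derive each clause by unwinding the definition $\foprellift(F) = \neg\after\rellift(F)\after\neg$, substituting the corresponding clause of Lemma~\ref{RelliftPolynomialLem}, and then simplifying the resulting set-complement by classical (De Morgan) reasoning. For $R\subseteq X\times X$ one has $(u,v)\in\foprellift(F)(R)$ iff $(u,v)\notin\rellift(F)(\neg R)$, and throughout I would use that $\neg$ is its own inverse on objects (as already noted in the text, $\neg\after\neg$ is the identity on both $\Rel$ and $\fop{\Rel}$): this lets one rewrite an inner occurrence of $\rellift(F_{i})(\neg R)$ as the complement of $\foprellift(F_{i})(R)$, and an inner $\neg R$ as $R$. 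This is precisely the manipulation used just above to show that $\foprellift(F)$ commutes with inequality.

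Items~(1) and~(2) would then be immediate: $\foprellift(\idmap)(R)=\neg\neg R=R$, and $\foprellift(A)(R)=\neg\,\rellift(A)(\neg R)=\neg\,\Eq(A)=\nEq(A)$ by Lemma~\ref{RelliftPolynomialLem}(2). For the product~(3), the exponent~(5) and the powerset~(6) the only work is pushing the outer negation through the quantifier prefix of the $\rellift$-formula, together with the complement rewriting above: the leading conjunction in~(3) becomes a disjunction, the leading universal quantifier over $A$ in~(5) becomes an existential one, and the two nested $\forall\exists$ conjuncts defining $\rellift(\Pow)$ dualize to $\exists\forall$ disjuncts. No finality or further categorical input is required.

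The one clause deserving care is the coproduct~(4). A pair $(a,b)$ over $F_{1}(X)+F_{2}(X)$ falls into exactly one of four cases according to whether $a$ and $b$ are $\kappa_{1}$- or $\kappa_{2}$-injections. By Lemma~\ref{RelliftPolynomialLem}(4), the relation $\rellift(F_{1}+F_{2})(\neg R)$ contains no ``mismatched'' pair $(\kappa_{1}u,\kappa_{2}v)$ or $(\kappa_{2}u,\kappa_{1}v)$, and it contains a ``matched'' pair $(\kappa_{i}u,\kappa_{i}v)$ exactly when $(u,v)\in\rellift(F_{i})(\neg R)$. Taking complements, every mismatched pair lies in $\foprellift(F_{1}+F_{2})(R)$, whereas a matched pair $(\kappa_{i}u,\kappa_{i}v)$ lies in it iff $(u,v)\in\foprellift(F_{i})(R)$; this yields the displayed union, and the equivalent ``$\all{u,v}{\cdots}$'' description is just the remark that for a mismatched pair both implications in that formula hold vacuously, while for a matched $\kappa_{i}$-pair the $i$-th implication fixes membership in $\foprellift(F_{i})(R)$. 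I expect this four-way case split on the coproduct injections to be the main (though still routine) obstacle; everything else is mechanical De Morgan bookkeeping.
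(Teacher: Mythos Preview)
Your approach is correct and is exactly what the paper has in mind: the lemma is stated with a terminal \QED\ and no proof, so the intended argument is precisely the routine unwinding of $\foprellift(F)=\neg\after\rellift(F)\after\neg$ together with Lemma~\ref{RelliftPolynomialLem} and De~Morgan, which you carry out carefully (including the four-way case split for the coproduct).

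One point worth flagging: your De~Morgan computation for item~(6) correctly produces a \emph{disjunction} of the two $\exists\forall$ clauses, since the negation of the conjunction in $\rellift(\Pow)(\neg R)$ is a disjunction. The displayed statement in the paper has ``and'' there, which appears to be a typo; your reasoning is right, and you may want to note the discrepancy rather than silently match the stated ``and''.
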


\begin{exa}
\label{ApartEx}
We look at the analogue of Example~\ref{BisimEx}, using apartness
instead of bisimulation.
\begin{enumerate}
\item A relation $R\subseteq Y\times Y$ is an apartness relation
for the functor $F(X) = A\times X$ when there is a coalgebra
$\tuple{h,t} \colon R \rightarrow \foprellift(F)(R)$ in $\fop{\Rel}$.
This amounts to:
\[ \begin{array}{rclcrcl}
h(x_{1}) \neq h(x_{2})
& \Longrightarrow &
R\big(x_{1},x_{2}\big)
& \qquad\mbox{and}\qquad &
R\big(t(x_{1}),t(x_{2})\big)
& \Longrightarrow &
R\big(x_{1},x_{2}\big).
\end{array} \]

\noindent Apartness $\apt$ is the least relation $R$ for which these
two implications hold. In such a situation one commonly writes these
implications as rules:
\[ \begin{prooftree}
h(x_{1}) \neq h(x_{2})
\justifies
x_{1} \apt x_{2}
\end{prooftree}
\hspace*{8em}
\begin{prooftree}
t(x_{1}) \apt t(x_{2})
\justifies
x_{1} \apt x_{2}
\end{prooftree} \]

\noindent Alternatively, $x_{1} \apt x_{2}$ iff
$h\big(t^{n}(x_{1})\big) \neq h\big(t^{n}(x_{2})\big)$ for some
$n\in\mathbb{N}$.

\item For a deterministic automaton, $R$ is an apartness relation
  when:
\[ \left.\begin{array}{r}
x_{1}\downarrow \;\&\; \neg \big(x_{2}\downarrow\big)
\\
\mbox{or}\quad
\\
x_{2}\downarrow \;\&\; \neg \big(x_{1}\downarrow\big)
\\
\mbox{or}\quad
\\
\mbox{for some }a, \;
x_{1} \trans_{a} y_{1} \;\&\; x_{2} \trans_{a} y_{2} \;\&\; R\big(y_{1},y_{2}\big)
\end{array}\right\}
\;\Longrightarrow\;
R\big(x_{1},x_{2}\big). \]

\item For an apartness relation $R$ for the functor $F(X) = \Pow\big(X
  + X\times A\times X\big)$ there are many cases to distinghuish:
one has $R(x_{1},x_{2})$ if either:
\begin{itemize}
\item $x_{1} \trans_{1} y_{1}$, but there is no $y_{2}$ with $x_{2}
  \trans_{1} y_{2}$;

\item $x_{1} \trans_{1} y_{1}$ and $x_{2} \trans_{1} y_{2}$ with
  $R(y_{1}, y_{2})$;

\item $x_{1} \trans_{2} y_{1} \trans_{a_1} z_{1}$, but there are no
$y_{2},a_{2}, z_{2}$ with $x_{2} \trans_{2} y_{2} \trans_{a_2} z_{2}$;

\item $x_{1} \trans_{2} y_{1} \trans_{a} z_{1}$, but there are no
$y_{2},z_{2}$ with $x_{2} \trans_{2} y_{2} \trans_{a} z_{2}$;

\item $x_{1} \trans_{2} y_{1} \trans_{a} z_{1}$ and
$x_{2} \trans_{2} y_{2} \trans_{a} z_{2}$ with $R(y_{1},y_{2})$;

\item $x_{1} \trans_{2} y_{1} \trans_{a} z_{1}$ and
$x_{2} \trans_{2} y_{2} \trans_{a} z_{2}$ with $R(z_{1},z_{2})$,
\end{itemize}

\noindent and similarly for the six symmetric cases, starting with $x_{2}$.
\end{enumerate}
\end{exa}

\begin{defi}
\label{ApartnessDef}
Let $c \colon X \rightarrow F(X)$ be an arbitrary coalgebra.
\begin{enumerate}
\item An \emph{apartness relation} for $c$ is a relation $R$ on $X$
for which $c$ is a coalgebra $c\colon R \rightarrow \foprellift(F)(R)$.
This means that:
\[ \begin{array}{rclcrcl}
R 
& \supseteq &
(c\times c)^{-1}\big(\foprellift(F)(R)\big)
& \quad\mbox{or, equivalently}\quad &
\neg R
& \subseteq &
(c\times c)^{-1}\big(\rellift(F)(\neg R)\big).
\end{array} \]

\noindent Thus, $R$ is an apartness relation iff $\neg R$ is a
bisimulation relation.

\item \emph{Apartness} $\apt^{c}$ is the greatest apartness relation,
  w.r.t.\ the order $\supseteq$.
\end{enumerate}
\end{defi}

The category $\CoAlg(\foprellift(F))$ thus has apartness relations as
objects. In order to find the apartness relation on a coalgebra $c
\colon X \rightarrow F(X)$ we need to find the \emph{greatest
  post-fixed point} (final coalgebra) of the mapping $R \longmapsto
(c\times c)^{-1}\big(\rellift(F)(\neg R)\big)$ in the poset
$(\Pow(X\times X), \supseteq)$ with opposite order. A crucial
observation is that this is the \emph{least pre-fixed point} (initial
algebra) in $\Pow(X\times X)$ with usual inclusion order
$\subseteq$. Elements of such an initial algebra can typically be
constructed in a finite number of steps. This corresponds to the idea
that finding a difference in behaviour of coalgebra can be done in
finitely many steps --- although you may not know how many steps ---
whereas showing indistinguishability of behaviour involves all steps.

\begin{prop}
\label{nEqFunProp}
The inequality functor $\nEq \colon \Set \rightarrow \fop{\Rel}$
restricts to $\nEq \colon \CoAlg(F) \rightarrow
\CoAlg(\foprellift(F))$, for each functor $F$, and preserves final
coalgebras. This implies that elements that are apart become non-equal
(different) when mapped to the final coalgebra.
\end{prop}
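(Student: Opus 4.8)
The plan is to obtain Proposition~\ref{nEqFunProp} as the formal dual of Proposition~\ref{EqFunProp}, transported along the negation functor, rather than by redoing the quotient construction. Three facts established earlier do all the work: $\neg\after\neg = \idmap$ on both $\Rel$ and $\fop{\Rel}$ (so each $\neg$ is an isomorphism of categories $\Rel \cong \fop{\Rel}$), the definitional equality $\foprellift(F) = \neg\after\rellift(F)\after\neg$, and the fact from Lemma~\ref{RelliftEqLem} that relation lifting commutes with equality.

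First I would observe that $\neg$ lifts to an isomorphism of coalgebra categories $\CoAlg(\rellift(F)) \cong \CoAlg(\foprellift(F))$. Indeed, from $\foprellift(F) = \neg\after\rellift(F)\after\neg$ and $\neg\after\neg = \idmap$ one gets the strict intertwining $\foprellift(F)\after\neg = \neg\after\rellift(F)$ of the two endofunctors along $\neg$, so $\neg$ sends a coalgebra $c \colon R \to \rellift(F)(R)$ to a coalgebra $c \colon \neg R \to \neg\rellift(F)(R) = \foprellift(F)(\neg R)$; this is exactly the observation, recorded in Definition~\ref{ApartnessDef}, that $R$ is a bisimulation for $c$ iff $\neg R$ is an apartness relation for $c$. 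Since $\neg$ is an involutive isomorphism on the base categories, the induced functor on coalgebras is again an isomorphism of categories, with inverse the lift of $\neg \colon \fop{\Rel} \to \Rel$ (which intertwines the functors the other way by the same computation).

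With this in hand the two halves of the statement are immediate. For the restriction: $\nEq = \neg\after\Eq$, and by Proposition~\ref{EqFunProp} the equality functor restricts to $\Eq \colon \CoAlg(F) \to \CoAlg(\rellift(F))$, so post-composing with the isomorphism above gives $\nEq \colon \CoAlg(F) \to \CoAlg(\foprellift(F))$; concretely, applying $\nEq$ to $c \colon X \to F(X)$ yields $\nEq(c) \colon \nEq(X) \to \nEq(F(X))$, and $\nEq(F(X)) = \foprellift(F)(\nEq(X))$ by the commutation of opposite relation lifting with inequality shown just before the statement. For preservation of final coalgebras: if $\zeta \colon Z \stackrel{\cong}{\longrightarrow} F(Z)$ is final in $\CoAlg(F)$, then by Proposition~\ref{EqFunProp} the coalgebra $\Eq(\zeta)$ is final in $\CoAlg(\rellift(F))$; an isomorphism of categories preserves final objects, and it sends $\Eq(\zeta)$ to $\neg\Eq(\zeta) = \nEq(\zeta)$ (using $\neg\Eq(Z) = \nEq(Z)$ and that morphisms in $\fop{\Rel}$ are determined by their underlying functions), so $\nEq(\zeta)$ is final in $\CoAlg(\foprellift(F))$.

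The stated consequence then follows as the dual of the coinduction consequence in Proposition~\ref{EqFunProp}. Let $R$ be an apartness relation on $c \colon X \to F(X)$ with $g \colon X \to Z$ the unique coalgebra homomorphism to the final $F$-coalgebra; by finality of $\nEq(Z)$, $g$ is the underlying function of the unique morphism $(R,c) \to (\nEq(Z),\nEq(\zeta))$ in $\CoAlg(\foprellift(F))$, and since it is a morphism in $\fop{\Rel}$ it satisfies $(g(x),g(x')) \in \nEq(Z) \Rightarrow (x,x') \in R$, i.e.\ $g(x) \neq g(x') \Rightarrow (x,x') \in R$; specialising to $R = \apt^{c}$ and using $\apt^{c} = \neg\bis^{c}$ together with Proposition~\ref{EqFunProp} pins down the correspondence between apartness on $X$ and inequality on the final coalgebra. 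I expect the only genuinely error-prone step to be bookkeeping the \emph{direction} of morphisms in $\fop{\Rel}$ --- they reflect rather than preserve the relation, which is precisely the raison d'être of the ``fibred opposite'' --- so that the finality property is read off correctly; everything else is formal recombination of $\neg\after\neg = \idmap$, $\foprellift(F) = \neg\after\rellift(F)\after\neg$, and Lemma~\ref{RelliftEqLem}.
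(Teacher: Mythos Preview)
Your argument is correct and is essentially the paper's own proof, stated at a slightly higher level of abstraction: where the paper, given an apartness relation $R$, takes the map $f\colon \neg R \to \Eq(Z)$ supplied by Proposition~\ref{EqFunProp} and applies the functor $\neg$ to obtain $f\colon R \to \nEq(Z)$ in $\fop{\Rel}$, you package the same manoeuvre as ``$\neg$ is an isomorphism $\CoAlg(\rellift(F)) \cong \CoAlg(\foprellift(F))$ and therefore carries the final object $\Eq(\zeta)$ to $\nEq(\zeta)$''. Both arguments rest on exactly the same ingredients you list: $\neg\after\neg=\idmap$, the definition $\foprellift(F)=\neg\after\rellift(F)\after\neg$, and Lemma~\ref{RelliftEqLem}.

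One small remark on your final paragraph: the direction you correctly read off from a $\fop{\Rel}$-morphism, $g(x)\neq g(x')\Rightarrow (x,x')\in R$, is the \emph{converse} of the informal consequence ``apart implies non-equal in the final coalgebra'', and the contrapositive of Proposition~\ref{EqFunProp} gives that same direction again rather than the other one. To obtain the stated direction one also needs that the kernel of the final map $g$ is a bisimulation (so that $g(x)=g(x')\Rightarrow x\bis^{c}x'$, hence $x\apt^{c}x'\Rightarrow g(x)\neq g(x')$); this is standard, and the paper's proof is equally silent on it.
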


\begin{myproof}
We assume a final $F$-coalgebra $\zeta \colon Z \rightarrow F(Z)$.
Let $c \colon R \rightarrow \foprellift(F)(R)$ describe an apartness
relation $R\subseteq X\times X$, for a coalgebra $c\colon X
\rightarrow F(X)$. The latter has a unique coalgebra homomorphism
$f\colon X \rightarrow Z$. Since $\neg R$ is a bisimulation,
Proposition~\ref{EqFunProp} says that we have a map $f\colon \neg R
\rightarrow \Eq(Z)$ in $\Rel$. By functoriality of $\neg$ we get
$f\colon R \rightarrow \neg\Eq(Z) = \nEq(Z)$ in $\fop{\Rel}$, as
required. \QED
\end{myproof}

\section{Conclusion and Further directions}
In this paper we have explored the notion of ``apartness'' from a
coalgebraic perspective, as the negation of bisimulation. We have
shown what this means concretely in the simple cases of streams and
deterministic automata and in the cases of weak and branching
bisimulation for labelled transition systems. We have also given a
general categorical treatment of apartness as the negation of
bisimulation. An important contribution of this view is that it yields
a logic for proving that two states are apart, proving that they are
not bisimilar. This applies to the general situation for coalgebras of
a polynomial functor, but also to the specific situation of weak and
branching bisimulation.

It would be interesting to see if apartness can be applied in the
analysis or description of algorithms for computing bisimulation, in
particular branching bisimulation
\cite{Jansenetal,GrooteVaandrager}. In existing algorithms, apartness
clearly plays a role, as they are described in terms of a collection
of ``blocks'' that is refined. States in different blocks are apart,
so these algorithms refine an apartness until in the limit, the finest
possible apartness is reached. Our approach is not intended to replace
bisimulation with apartness. In fact, we view a combined approach as
most promising, e.g.\ as we have in Lemma \ref{lem.soundrules-bis}. In
the period of submission and revision of the present paper, the
concept of apartness has already been picked up and applied to develop
a new approach to active automata learning \cite{Vaandrageretal2021},
which underlines the potential fruitfulness of the concept.

\section*{Acknowledgments}
  \noindent We thank David N.\ Jansen for the fruitful discussions and
  we thank the reviewers for their valuable feedback and suggestions
  for improvements.

\bibliographystyle{alpha}
\bibliography{references}

\begin{thebibliography}{VGRW21}

\bibitem[Bas96]{Basten}
Twan Basten.
\newblock Branching bisimilarity is an equivalence indeed.
\newblock {\em Inform. Process. Lett}, pages 141--147, 1996.

\bibitem[BBR10]{BaetenBastenReniers}
J.~Baeten, T.~Basten, and M.~Reniers.
\newblock {\em Process Algebra: Equational Theories of Communicating
  Processes}.
\newblock Cambridge University Press, 2010.

\bibitem[BK17]{BeoharKupper}
Harsh Beohar and Sebastian K{\"{u}}pper.
\newblock On path-based coalgebras and weak notions of bisimulation.
\newblock In Filippo Bonchi and Barbara K{\"{o}}nig, editors, {\em 7th
  Conference on Algebra and Coalgebra in Computer Science, {CALCO} 2017, June
  12-16, 2017, Ljubljana, Slovenia}, volume~72 of {\em LIPIcs}, pages
  6:1--6:17. Schloss Dagstuhl - Leibniz-Zentrum fuer Informatik, 2017.

\bibitem[BMP15]{BrengosMP15}
T.~Brengos, M.~Miculan, and M.~Peressotti.
\newblock Behavioural equivalences for coalgebras with unobservable moves.
\newblock {\em Journ. of Logical and Algebraic Methods in Programming},
  84(6):826--852, 2015.

\bibitem[Bre15]{Brengos15}
T.~Brengos.
\newblock Weak bisimulation for coalgebras over order enriched monads.
\newblock {\em Logical Methods in Comp. Sci.}, 11(2), 2015.

\bibitem[Cho78]{Chow}
T.~S. Chow.
\newblock Testing software design modeled by finite-state machines.
\newblock {\em IEEE Trans. Softw. Eng.}, 4(3):178–187, May 1978.

\bibitem[DV95]{DeNicolaVaandrager}
Rocco {De Nicola} and Frits~W. Vaandrager.
\newblock Three logics for branching bisimulation.
\newblock {\em J. {ACM}}, 42(2):458--487, 1995.

\bibitem[Fok00]{Fokkink}
Wan Fokkink.
\newblock Rooted branching bisimulation as a congruence.
\newblock {\em J. Comput. Syst. Sci.}, 60(1):13--37, 2000.

\bibitem[GN00]{GeuversNiqui}
Herman Geuvers and Milad Niqui.
\newblock Constructive reals in {C}oq: Axioms and categoricity.
\newblock In Paul Callaghan, Zhaohui Luo, James McKinna, and Robert Pollack,
  editors, {\em Types for Proofs and Programs, International Workshop, {TYPES}
  2000, Durham, UK, December 8-12, 2000, Selected Papers}, volume 2277 of {\em
  Lecture Notes in Computer Science}, pages 79--95. Springer, 2000.

\bibitem[GP14]{GoncharovP14}
S.~Goncharov and D.~Pattinson.
\newblock Categorical views on computations on trees.
\newblock In J.~Esparza, P.~Fraigniaud, T.~Husfeldt, and E.~Koutsoupias,
  editors, {\em Int. Coll. on Automata, Languages and Programming}, number 8573
  in LNCS, pages 196--207. Springer, Berlin, 2014.

\bibitem[GV90]{GrooteVaandrager}
Jan~Friso Groote and Frits~W. Vaandrager.
\newblock An efficient algorithm for branching bisimulation and stuttering
  equivalence.
\newblock In Mike Paterson, editor, {\em Automata, Languages and Programming,
  17th International Colloquium, ICALP90, Warwick University, England, UK, July
  16-20, 1990, Proceedings}, volume 443 of {\em Lecture Notes in Computer
  Science}, pages 626--638. Springer, 1990.

\bibitem[GW96]{GlabbeekWeijland}
Rob~Van Glabbeek and Peter Weijland.
\newblock Branching time and abstraction in bisimulation semantics.
\newblock {\em Jornal of the ACM}, 43:613--618, 1996.

\bibitem[Hey27]{Heyting}
A.~Heyting.
\newblock Zur intuitionistischen {A}xiomatik der projektiven {G}eometrie.
\newblock {\em Mathematische Annalen}, 98:491--538, 1927.

\bibitem[HJ98]{HermidaJ98}
C.~Hermida and B.~Jacobs.
\newblock Structural induction and coinduction in a fibrational setting.
\newblock {\em Information \& Computation}, 145:107--152, 1998.

\bibitem[HM85]{HennessyMilner}
M.~Hennessy and R.~Milner.
\newblock Algebraic laws for nondeterminism and concurrency.
\newblock {\em Journal of the ACM}, 32(1):137--161,, 1985.

\bibitem[Jac99]{Jacobs99}
B.~Jacobs.
\newblock {\em Categorical Logic and Type Theory}.
\newblock North Holland, Amsterdam, 1999.

\bibitem[Jac16]{Jacobs16}
B.~Jacobs.
\newblock {\em Introduction to Coalgebra. {Towards} Mathematics of States and
  Observations}.
\newblock Number~59 in Tracts in Theor. Comp. Sci. Cambridge Univ. Press, 2016.

\bibitem[JGKW20]{Jansenetal}
David~N. Jansen, Jan~Friso Groote, Jeroen J.~A. Keiren, and Anton Wijs.
\newblock An {O}(m log n) algorithm for branching bisimilarity on labelled
  transition systems.
\newblock In Armin Biere and David Parker, editors, {\em Tools and Algorithms
  for the Construction and Analysis of Systems - 26th International Conference,
  {TACAS} 2020, Held as Part of the European Joint Conferences on Theory and
  Practice of Software, {ETAPS} 2020, Dublin, Ireland, April 25-30, 2020,
  Proceedings, Part {II}}, volume 12079 of {\em Lecture Notes in Computer
  Science}, pages 3--20. Springer, 2020.

\bibitem[JR11]{JacobsR11}
B.~Jacobs and J.~Rutten.
\newblock A tutorial on (co)algebras and (co)induction.
\newblock In D.~Sangiorgi and J.~Rutten, editors, {\em Advanced topics in
  bisimulation and coinduction}, number~52 in Tracts in Theor. Comp. Sci.,
  pages 38--99. Cambridge Univ. Press, 2011.

\bibitem[Kor92]{Korver}
Henri Korver.
\newblock Computing distinguishing formulas for branching bisimulation.
\newblock In Kim~G. Larsen and Arne Skou, editors, {\em Computer Aided
  Verification (CAV)}, volume 575 of {\em Lecture Notes in Computer Science},
  pages 13--23. Springer, 1992.

\bibitem[Mil89]{Milner}
Robin Milner.
\newblock {\em Communication and Concurrency}.
\newblock Prentice Hall, 1989.

\bibitem[Rut00]{Rutten00}
J.~Rutten.
\newblock Universal coalgebra: a theory of systems.
\newblock {\em Theor. Comp. Sci.}, 249:3--80, 2000.

\bibitem[SdVW09]{SokolovaVinkWoracek}
Ana Sokolova, Erik~P. de~Vink, and Harald Woracek.
\newblock Coalgebraic weak bisimulation for action-type systems.
\newblock {\em Sci. Ann. Comp. Sci.}, 19:93--144, 2009.

\bibitem[SMJ16]{SmetsersMoermanJansen}
Rick Smetsers, Joshua Moerman, and David~N. Jansen.
\newblock Minimal separating sequences for all pairs of states.
\newblock In Adrian-Horia Dediu, Jan Janou{\v{s}}ek, Carlos Mart{\'i}n-Vide,
  and Bianca Truthe, editors, {\em Language and Automata Theory and
  Applications (LATA)}, volume 9618 of {\em Lecture Notes in Computer Science},
  pages 181--193. Springer, 2016.

\bibitem[TvD88]{TroelstraVanDalenII}
A.S. Troelstra and D.~van Dalen.
\newblock {\em Constructivism in Mathematics, volume II}.
\newblock Number volume 123 in Studies in Logic and the Foundations of
  Mathematics. Elsevier Science, 1988.

\bibitem[VGRW21]{Vaandrageretal2021}
Frits~W. Vaandrager, Bharat Garhewal, Jurriaan Rot, and Thorsten Wi{\ss}mann.
\newblock A new approach for active automata learning based on apartness.
\newblock {\em CoRR}, abs/2107.05419, 2021.

\end{thebibliography}
  
\newpage
\appendix
\section{Syntax and logic for bisimulation and apartness}
\label{sec:appendix}
As we have seen, the treatment of bisimulation and apartness in Section \ref{sec:bisapt} can be generalized using category theory. We could also have chosen for a purely logical-syntactic approach, which we briefly summarize here. As a matter of fact, the treatment in Sections \ref{CoalgLiftSec} and \ref{ApartnesSec} can be seen as a semantics of the syntax we introduce here.

We start with a description of the systems we want to deal with, which
are basically a slight generalization of polynomial functors. A
coalgebra has a carrier and a destructor operation. We describe the
``types'' of the possible range of a destructor.

\begin{defi} \label{D:destructors}
Let a set $\At$ of symbols fot atomic types be given and let $X$ be a
special symbol (a type variable) not in $\At$.
\[\Ty ::= \At \mid X \mid 1 \mid \Ty + \Ty \mid \Ty \times \Ty\mid
\Ty^{\At} \mid 2^{\Ty}.\]
A {\em destructor signature\/} is a pair $(d,\sigma)$ with
\[d: X \rightarrow \sigma.\]
A {\em coalgebra for the destructor signature $(d,\sigma)$\/} is a $K
\in \Set$ and $h: K \rightarrow F(K)$ (the {\em destructor}), where
$F$ is the functor on $\Set$ defined from $X\mapsto \sigma(X)$ in the
obvious way.
\end{defi}

We have seen examples for this definition in \ref{def.DFA},
\ref{def.streams}, \ref{BisimEx} and \ref{ApartEx}.

\begin{defi}\label{D:bisim}
Let $K$ be a coalgebra for the signature of Definition
\ref{D:destructors} with destructor $h: K \rightarrow F(K)$.
\begin{enumerate}
\item A relation $R \subseteq K \times K$ is called a {\em bisimulation} (or
a {\em coinduction assumption}) on $K$ if the following holds for all
$x, y \in K$.
\[ R(x,y) \Rightarrow h(x) \bisim{F(K)}{R} h(y).\]
Here, the relation $\bisim{B}{R}$ is defined by induction on the
structure of $B$ as follows.
\begin{itemize}
\item $B=D\in \At$, then $x\bisim{B}{R}y := \Eq_D(x,y)$, equality on $D$.
\item $B=X$, then $x\bisim{B}{R}y := R(x,y)$.
\item $B=1$, then $x\bisim{B}{R}y$ is true.
\item $B=B_1 + B_2$, then $x\bisim{B}{R}y :=$\\
  $\exists u, v (x = \inl(u)
  \wedge y = \inl(v) \wedge u \bisim{B_1}{R}v) \vee (x = \inr(u)
  \wedge y = \inr(v) \wedge u \bisim{B_2}{R}v)$.
\item $B=B_1 \times B_2$, then $x\bisim{B}{R}y :=$\\
  $\exists
  u_1,u_2,v_1,v_2 (x = \pair{u_1}{u_2} \wedge y = \pair{v_1}{v_2}
  \wedge u_1 \bisim{B_1}{R}v_1 \wedge u_2 \bisim{B_2}{R}v_2)$.
\item $B=B_1^D$ with $D\in \At$, then $x\bisim{B}{R}y := \forall d\in
  D (x(d) \bisim{B_1}{R} y(d))$.
\item $B=2^{B_1}$, then $x\bisim{B}{R}y := (\forall u(x(u) =1 \implies
  \exists v (y(v) =1 \wedge u \bisim{B_1}{R} v)) \wedge (\forall
  v(y(v) =1 \implies \exists u (x(u) =1 \wedge u \bisim{B_1}{R} v))$.
\end{itemize}
\item Two elements $x,y \in K$ are called {\em bisimilar}, notation $x\bis y$, if there exists a bisimulation $R$ for which $R(x,y)$ holds. Thus:
\[\bis \;=\, \bigcup\{R \subseteq K\times K \mid R \mbox{ is a bisimulation} \}.\]
\end{enumerate}
\end{defi}

The condition that $R\subseteq K\times K$ should satisfy in order to
be a bisimulation can be viewed as a {\em derivation rule}.  This
gives an alternative way to say that $R$ is a bisimulation.

\begin{rem}\label{R:bisim} 
The relation $R\subseteq K\times K$ is a bisimulation if the following
derivation rules hold for $R$.
\[\begin{prooftree}
R(x,y)
\justifies 
h(x) \bisim{F(K)}{R} h(y)
\end{prooftree}\]
\end{rem}

\begin{lem}
The relation $\bis$ is itself a bisimulation relation, and therefore
$\bis$ is the largest bisimulation relation.
\end{lem}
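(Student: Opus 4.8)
The plan is to reduce the statement to a single monotonicity fact about the relation lifting $R \mapsto \bisim{B}{R}$ of Definition~\ref{D:bisim}, and then to invoke the definition of $\bis$ as the union of all bisimulations.

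The first step, which is the only one with any real content, is to prove by induction on the structure of the type $B$ from Definition~\ref{D:destructors} that $R \mapsto \bisim{B}{R}$ is monotone: if $R \subseteq S \subseteq K\times K$, then $x \bisim{B}{R} y$ implies $x \bisim{B}{S} y$ for all $x,y$. The base cases $B = D\in\At$ and $B = 1$ are immediate because $R$ does not occur in the clause, and $B = X$ is literally the hypothesis $R\subseteq S$. In the remaining cases $B_1 + B_2$, $B_1 \times B_2$, $B_1^D$ and $2^{B_1}$, the relation $R$ enters the clause of Definition~\ref{D:bisim} only through sub-formulas $\bisim{B_i}{R}$, and every such occurrence is positive (it lies under existential and universal quantifiers and under conjunctions and disjunctions, never under a negation); hence the induction hypothesis for $B_i$ propagates through and the clause for $B$ is preserved. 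I expect this structural induction to be the main ``obstacle'', though it is in fact routine case-checking --- the point being precisely that the grammar of Definition~\ref{D:destructors} offers no contravariant position for the relation variable.

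Given monotonicity, I would show $\bis$ is a bisimulation as follows. By Remark~\ref{R:bisim} it suffices to verify that $x \bis y$ implies $h(x) \bisim{F(K)}{\bis} h(y)$. So let $x \bis y$; unfolding the definition of $\bis$ there is a bisimulation $R$ with $R(x,y)$, and hence $h(x) \bisim{F(K)}{R} h(y)$. Since $R$ is one of the relations whose union is $\bis$, we have $R \subseteq \bis$, so monotonicity upgrades $h(x) \bisim{F(K)}{R} h(y)$ to $h(x) \bisim{F(K)}{\bis} h(y)$. Thus $\bis$ satisfies the bisimulation rule and is a bisimulation relation.

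Finally, $\bis$ being the \emph{largest} bisimulation is then immediate from its very definition: any bisimulation $R$ satisfies $R \subseteq \bigcup\{S \subseteq K\times K \mid S \mbox{ is a bisimulation}\} = \bis$, and we have just seen that $\bis$ itself is a bisimulation, so it is the greatest one with respect to $\subseteq$.
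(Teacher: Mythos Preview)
Your proof is correct and follows essentially the same approach as the paper's: the paper's brief argument singles out as ``the crucial property'' exactly the implication $x \bisim{B}{R} y \Rightarrow x \bisim{B}{\bis} y$ for a bisimulation $R$, which is precisely your monotonicity step instantiated at $R \subseteq \bis$. You have simply made explicit the structural induction on $B$ that the paper leaves implicit.
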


\begin{proof}
This is a standard fact about bisimulations that is easily verified
for this more general setting. The crucial propertyis that if $x
\bisim{B}{R} y$ for some bisimulation $R$, then $x \bisim{B}{\bis} y$.
\end{proof}

Similar to the definition of bisimulation (Definition \ref{D:bisim}),
we can give a general definition of apartness for a coalgebra of a
destructor signature.

\begin{defi}\label{D:apart}
Let again $K$ be a coalgebra for the signature of Definition
\ref{D:destructors} with destructor $h: K \rightarrow F(K)$.
\begin{enumerate}
\item A relation $Q \subseteq K \times K$ is called an {\em apartness} on $K$ if the following holds for all
$x, y \in K$.
\[ h(x) \apart{F(K)}{Q} h(y)  \Rightarrow Q(x,y).\]
Here, the relation $\apart{B}{Q}$ is defined by induction on the
structure of $B$ as follows.
\begin{itemize}
\item $B=D\in \At$, then $x\apart{B}{Q}y := \neg\Eq_D(x,y)$, in-equality on $D$.
\item $B=X$, then $x\apart{B}{Q}y := Q(x,y)$.
\item $B=1$, then $x\apart{B}{Q}y$ is false.
\item $B=B_1 + B_2$, then $x\apart{B}{Q}y :=$\\
  $\forall u, v ((x = \inl(u)
  \wedge y = \inl(v)) \Rightarrow u \apart{B_1}{Q}v) \wedge ((x = \inr(u)
  \wedge y = \inr(v)) \Rightarrow u \apart{B_2}{Q}v)$.
\item $B=B_1 \times B_2$, then $x\apart{B}{R}y :=$\\
  $\forall
  u_1,u_2,v_1,v_2 (x = \pair{u_1}{u_2} \wedge y = \pair{v_1}{v_2})
  \Rightarrow u_1 \apart{B_1}{Q}v_1 \vee u_2 \apart{B_2}{Q}v_2)$.
\item $B=B_1^D$ with $D\in \At$, then $x\apart{B}{R}y := \exists d\in
  D (x(d) \apart{B_1}{R} y(d))$.
\item $B=2^{B_1}$, then $x\apart{B}{R}y := (\exists u(x(u) =1 \wedge
  \forall v (y(v) =1 \implies u \apart{B_1}{R} v)) \vee$\\
  $(\exists
  v(y(v) =1 \wedge \forall u (x(u) =1 \implies u \apart{B_1}{R} v))$.
\end{itemize}
\item Two elements $x,y \in K$ are called {\em apart}, notation $x\apt y$, if for all apartness relations $Q$ we have $Q(x,y)$. Thus:
\[\apt \;=\, \bigcap\{Q \subseteq K\times K \mid Q \mbox{ is an apartness} \}.\]
\end{enumerate}
\end{defi}

Just as for bisimulation, we can phrase the property that a relation
should satisfy in order to be an apartness in the form of aderivation
rule. 

\begin{rem}\label{R:apart}
A relation $Q\subseteq K\times K$ is an apartness if the
following derivation rule holds for $Q$.
\[\begin{prooftree}
h(x) \apart{F(K)}{Q} h(y)
\justifies 
Q(x,y)
\end{prooftree}\]
\end{rem}

\begin{lem}
The relation $\apt$ is itself an apartness relation, and therefore
$\apt$ is the smallest apartness relation.
\end{lem}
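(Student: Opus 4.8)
The plan is to dualise the proof of the analogous fact for $\bis$, whose crucial ingredient is the monotonicity of relation lifting. So the first step is to establish a \emph{monotonicity lemma}: for every type $B$ (built as in Definition~\ref{D:destructors}) and all relations $Q_{1}\subseteq Q_{2}$ on $K$, one has $x\apart{B}{Q_{1}}y \implies x\apart{B}{Q_{2}}y$. This is proved by induction on the structure of $B$, following the clauses of Definition~\ref{D:apart}. For $B=D\in\At$ and $B=1$ the relation $\apart{B}{Q}$ does not mention $Q$ at all, and for $B=X$ it equals $Q$, so these cases are trivial. For the inductive cases $B_{1}+B_{2}$, $B_{1}\times B_{2}$, $B_{1}^{D}$ and $2^{B_{1}}$ one observes that in each clause the sub-formulas $\apart{B_{i}}{Q}$ occur only in \emph{positive} position --- as the conclusion of an implication, or under $\wedge$, $\vee$, $\forall$, $\exists$ --- so monotonicity of $\apart{B_{i}}{-}$ propagates to $\apart{B}{-}$. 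The point is that negation only ever appears in the base clause $B=D\in\At$, where $Q$ is absent.

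Given the lemma, the argument is short. Note first that $\apt=\bigcap\{Q\mid Q \text{ an apartness}\}$ is well-defined, since $K\times K$ is trivially an apartness, and that $\apt\subseteq Q$ for every apartness $Q$. To see that $\apt$ itself is an apartness, I verify its defining rule: assume $h(x)\apart{F(K)}{\apt}h(y)$ and let $Q$ be an arbitrary apartness. Since $\apt\subseteq Q$, the monotonicity lemma (applied to the type $\sigma$ of the destructor, with $X$ interpreted as $K$) gives $h(x)\apart{F(K)}{Q}h(y)$, whence $Q(x,y)$ because $Q$ is an apartness. As $Q$ was arbitrary, $\apt(x,y)$. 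Hence $\apt$ is an apartness relation, and being contained in every apartness relation it is the smallest one.

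I do not anticipate a genuine obstacle; the only thing demanding attention is getting the direction of the inclusion right throughout and confirming that the defining clauses of $\apart{B}{Q}$ are monotone in $Q$ --- which, as noted, is exactly where the dual reasoning to the $\bis$-case is used, since for bisimulation one instead needs monotonicity of $\bisim{B}{-}$. Everything else is bookkeeping over the grammar of types.
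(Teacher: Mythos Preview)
Your proposal is correct and follows the same idea as the paper's proof: the key observation in both is the monotonicity of $\apart{B}{-}$ in its relational argument, which, together with $\apt\subseteq Q$ for every apartness $Q$, yields that $h(x)\apart{F(K)}{\apt}h(y)$ implies $h(x)\apart{F(K)}{Q}h(y)$ and hence $Q(x,y)$. The paper simply states this observation in one line without spelling out the structural induction, whereas you make the monotonicity lemma and its proof explicit.
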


\begin{proof}
That the rule of Remark \ref{R:apart} is sound for $\apt$ is easily
verified by observing that, if $x \apart{B}{\apt}y$, then $x
\apart{B}{Q}y$ for any apartness $Q$.
\end{proof}

As $x\apt y$ is the smallest apartness relation, the rule of Remark
\ref{R:apart} gives a complete derivation system for proving $x\apt y$
for $x,y\in K$.

We can summarize the relations between apartness and bisimulation as
follows. This is the general statement of Lemma
\ref{lem.bisimandaprt}. The proof is simply checking all the
properties.

\begin{lem}
  \begin{enumerate}
  \item $R$ is a bisimulation if and only if $\neg R$ is an apartness.
  \item The relation $\bis$ is the union of all bisimulations, $\bis
    \;=\, \bigcup\{R \mid R \mbox{ is a bisimulation}\}$, and it is
    itself a bisimulation.
  \item The relation $\apt$ is the intersection of all apartness
    relations, $\apt \;=\, \bigcap\{Q \mid Q \mbox{ is an
      apartness}\}$, and it is itself an apartness.
  \item $\bis \;= \neg \apt$.
  \end{enumerate}
\end{lem}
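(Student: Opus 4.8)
The plan is to isolate a single duality lemma relating the two type-indexed liftings $\bisim{B}{-}$ and $\apart{B}{-}$ of Definitions~\ref{D:bisim} and~\ref{D:apart}, and then to read off items (1)--(4) by pure (classical) Boolean reasoning together with monotonicity. Concretely, I would first prove by induction on the structure of the type $B$ that, for every relation $R\subseteq K\times K$ and all $x,y$ of the appropriate type,
\[
x \apart{B}{\neg R} y \quad\Longleftrightarrow\quad \neg\bigl(x \bisim{B}{R} y\bigr),
\]
where $\neg R = \{(u,v)\mid (u,v)\notin R\}$. The base cases ($B\in\At$, $B=X$, $B=1$) are immediate from the definitions. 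Each inductive case merely records that the clause defining $\apart{B}{Q}$ is the de Morgan dual of the clause defining $\bisim{B}{R}$: $\exists$ becomes $\forall$, $\wedge$ becomes $\vee$, and the tag equations ($x=\inl(u)$, $x=\pair{u_1}{u_2}$, etc.) move from conjuncts to antecedents of implications, so that pushing the negation inward and invoking the induction hypothesis on the recursive occurrences closes the case. In the same spirit I would record the two monotonicity facts $R\subseteq R'\Rightarrow\bisim{B}{R}\,\subseteq\,\bisim{B}{R'}$ and $Q\subseteq Q'\Rightarrow\apart{B}{Q}\,\subseteq\,\apart{B}{Q'}$, again by a routine induction on $B$.

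With the duality lemma in hand, item (1) is a one-line contraposition at $B=F(K)$: $R$ is a bisimulation iff $\forall x,y\,(R(x,y)\Rightarrow h(x)\bisim{F(K)}{R}h(y))$ iff $\forall x,y\,(\neg(h(x)\bisim{F(K)}{R}h(y))\Rightarrow\neg R(x,y))$ iff --- applying the duality lemma with $\neg R$ in place of $R$ and using $\neg\neg R = R$ --- $\forall x,y\,(h(x)\apart{F(K)}{\neg R}h(y)\Rightarrow\neg R(x,y))$, which is exactly the defining condition (Remark~\ref{R:apart}) for $\neg R$ to be an apartness. Items (2) and (3) are the two Lemmas already proved just above: by monotonicity of $\bisim{B}{-}$ an arbitrary union of bisimulations is again a bisimulation, so $\bis$, being defined as that union, is itself a bisimulation and hence the largest one, and dually $\apt$ is the smallest apartness. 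For item (4): $\bis$ is a bisimulation, so by (1) $\neg\bis$ is an apartness, whence $\apt\subseteq\neg\bis$, i.e.\ $\bis\subseteq\neg\apt$; conversely $\apt$ is an apartness, so by (1) $\neg\apt$ is a bisimulation, whence $\neg\apt\subseteq\bis$; together these give $\bis=\neg\apt$.

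The only step that is not pure bookkeeping is the induction establishing the duality lemma, and there the sole point needing care is being scrupulous about the classical de Morgan manipulations in the $B=B_1+B_2$ and $B=2^{B_1}$ clauses, where quantifiers and implications interleave and negations must be moved past nested $\forall/\exists$. Once that lemma is available, items (1)--(4) follow mechanically --- which is why it is fair to describe the proof as simply checking the properties.
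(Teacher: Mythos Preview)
Your proposal is correct and aligns with the paper's approach: the paper's own proof is simply the one-line remark that this is the general statement of Lemma~\ref{lem.bisimandaprt} and that ``the proof is simply checking all the properties''. Your explicit duality lemma $x\apart{B}{\neg R}y \Leftrightarrow \neg(x\bisim{B}{R}y)$, proved by induction on $B$, is exactly the systematic way to carry out that checking, and the remaining items follow just as you describe; you have essentially supplied the details the paper omits.
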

\end{document}

\section{}
  Here is a check-list to be completed before submitting the paper to
  LMCS:
\begin{itemize}[label=$\triangleright$]
\item your submission uses the latest version of lmcs.cls
\item the text of your submission is contained in a single file,
  except for macros and graphics
\item your graphics use only one format 
\item you have employed the Journal's original proclamation environments,
  or suitable extensions thereof 
\item you have loaded the hyperref package
\item you have \emph{not} loaded the times package
\item you have not routinely adjusted vertical spacing manually by issuing
  \texttt{\textbackslash vspace} or \texttt{\textbackslash vskip} commands
\item you have used the command \texttt{\textbackslash sloppy} only
  locally and in emergency cases
\item your displayed equations use the
  \texttt{\textbackslash[\dots\textbackslash]} construct
\item your abstract only contains as few math-expressions as possible and no
  references 
\end{itemize}

  This listing also shows how to override the default bullet $\bullet$
  of the \texttt{itemize}-envronment by a different symbol, in this
  case \texttt{\textbackslash triangleright}.
\end{document}